\titleformat*{\section}{\large\bfseries}
\titleformat*{\subsection}{\it}
\newtheorem{definition}{Definition}
\newtheorem{theorem}{Theorem}
\newtheorem{proposition}{Proposition}
\title{{\bf Robust Bayesian graphical modeling using $\gamma$-divergence}}
\author[1]{Takahiro Onizuka}
\author[2]{Shintaro Hashimoto} 
\affil[1]{Graduate School of Social Sciences, Chiba University, Japan}
\affil[2]{Department of Mathematics, Hiroshima University, Japan}
\date{}
\begin{document}

\maketitle
\doublespacing

\begin{abstract}
    Gaussian graphical model is one of the powerful tools to analyze conditional independence between two variables for multivariate Gaussian-distributed observations. When the dimension of data is moderate or high, penalized likelihood methods such as the graphical lasso are useful to detect significant conditional independence structures. However, the estimates are affected by outliers due to the Gaussian assumption. This paper proposes a novel robust posterior distribution for inference of Gaussian graphical models using the $\gamma$-divergence which is one of the robust divergences. In particular, we focus on the Bayesian graphical lasso by assuming the Laplace-type prior for elements of the inverse covariance matrix. The proposed posterior distribution matches its maximum a posteriori estimate with the minimum $\gamma$-divergence estimate provided by the frequentist penalized method. We show that the proposed method satisfies the posterior robustness which is a kind of measure of robustness in Bayesian analysis. The property means that the information of outliers is automatically ignored in the posterior distribution as long as the outliers are extremely large. A sufficient condition for the posterior propriety of the proposed posterior distribution is also derived. Furthermore, an efficient posterior computation algorithm via the weighted Bayesian bootstrap method is proposed. The performance of the proposed method is illustrated through simulation studies and real data analysis.
\end{abstract}

\noindent
{\bf Keywords}: Bayesian lasso; Gaussian graphical model; $\gamma$-divergence; Posterior robustness; Weighted Bayesian bootstrap

\section{Introduction\label{sec:1}}

Estimating the dependence structure between variables is an important issue in multivariate analysis. Let $\bm{y}_1,\dots, \bm{y}_n$ be a sequence of independent, identically distributed random vectors according to the $p$-dimensional multivariate Gaussian distribution: 
\begin{align}\label{Gaussian-model}
\bm{y}_i\sim \mathcal{N}_p(\bm{0},\mathbf{\Omega}^{-1}), \quad i=1,\dots,n,
\end{align}
where $\bm{y}_i=(y_{i1},\dots,y_{ip})^{\top}\in\mathbb{R}^p$ for $i=1,\dots,n$, and 
$\mathbf{\Omega}=(\omega_{ij})\in\mathbb{R}^{p\times p}$ is a precision matrix defined by the inverse covariance matrix. The estimation of the precision matrix $\mathbf{\Omega}$ under the Gaussian assumption is called the Gaussian graphical model \citep{lauritzen1996graphical, whittaker2009graphical}. We also put $\mathbf{Y}=(\bm{y}_1,\dots,\bm{y}_n)^{\top} \in \mathbb{R}^{n\times p}$. Estimating the conditional dependence structures between variables corresponds to estimating whether  an off-diagonal element of the precision matrix is zero or not. 
To deal with the sparsity of the precision matrix, the penalized approach such as the graphical lasso has been considered (e.g. \cite{yuan2007model,friedman2008sparse}). The graphical lasso estimate is defined by minimizing the penalized log-likelihood
\begin{equation*}
-\log|\mathbf{\Omega}|+\mathrm{tr}(\mathbf{S}\mathbf{\Omega})+\rho\|\mathbf{\Omega}\|_1 
\end{equation*}
over the space of positive definite matrices $M^+$, where $\mathbf{S}=\mathbf{Y}^{\top}\mathbf{Y}/n$ is the sample covariance matrix, $\rho\ge 0$ is a tuning parameter, and $\|\mathbf{\Omega}\|_1=\sum_{1\le i, j\le p} |\omega_{ij}|$. \cite{friedman2008sparse} proposed an efficient optimization algorithm, which guarantees symmetry and positive definiteness of $\mathbf{\Omega}$. Although the method provides point estimates quickly, we cannot conduct a full probabilistic inference for the parameter of interest $\mathbf{\Omega}$ in graphical models.   

However, the Bayesian approach is useful for quantifying the uncertainty of the parameter of interest. 
As a Bayesian alternative to the graphical lasso, \cite{wang2012bayesian} proposed a Bayesian graphical lasso model defined by
\begin{align*}
\bm{y}_i\mid \mathbf{\Omega} & \sim \mathcal{N}_p(\bm{0}, \mathbf{\Omega}^{-1}), \quad i=1,\dots,n\\
\mathbf{\Omega}\mid \lambda & \sim C_{\lambda}^{-1}\prod_{i<j} \mathrm{Lap}(w_{ij}\mid \lambda)\prod_{i=1}^p \mathrm{Exp}\left(w_{ii}\mid \frac{\lambda}{2}\right)1_{\mathbf{\Omega}\in M^+},
\end{align*}
where $C_{\lambda}=\int \prod_{i<j} \mathrm{Lap}(w_{ij}\mid \lambda)\prod_{i=1}^p \mathrm{Exp}\left(w_{ii}\mid \frac{\lambda}{2}\right)1_{\mathbf{\Omega}\in M^+}d\mathbf{\Omega}$ is the normalizing constant of the prior density, and $\lambda>0$ is a prior hyper-parameter that plays the same role as the penalty parameter in the original graphical lasso. $\mathrm{Lap}(\cdot\mid \lambda)$ represents the probability density function of the Laplace (or double-exponential) distribution defined by $\mathrm{Lap}(x\mid \lambda)=(\lambda/2)\exp(-\lambda |x|)$, and $\mathrm{Exp}(\cdot\mid \lambda)$ represents the probability density function of the exponential distribution defined by $\mathrm{Exp}(x\mid \lambda)=\lambda \exp(-\lambda x) 1_{x>0}$. The prior for $\mathbf{\Omega} \mid \lambda$ is called the graphical lasso prior in \cite{wang2012bayesian}. The Bayesian approach enables us to quantify the uncertainty through the posterior distribution. \cite{wang2012bayesian} also proposed an efficient block Gibbs sampler to obtain the posterior sample, while there is a drawback that the estimate of $\omega_{ij}$ for $i\ne j$ cannot be exactly zero. Hence, we may need to consider a criterion to specify dependent structures. 

However, these Gaussian graphical models cannot lead to suitable estimates if the data involve outliers or data generating distribution is heavily-tailed. Although one of the remedies is to use a heavy-tailed multivariate distribution such as the multivariate $t$-distribution (see e.g., \cite{finegold2011robust}), as pointed out by \cite{hirose2017robust}, the heavy-tailed distribution generates both large and moderate outliers. Hence, the variance of the estimator tends to be large because a heavy-tail distribution often produces a small Fisher information. In fact, we show in Section \ref{sec:4} that assuming the multivariate $t$-distribution as the likelihood function leads to undesirable results for the estimation of the inverse covariance matrix. To overcome these issues, we consider the robust divergence to estimate Gaussian graphical models in the presence of outliers. In a frequentist perspective, \cite{hirose2017robust} proposed the $\gamma$-lasso, which is a robust estimation method of the inverse covariance matrix based on the $\gamma$-divergence \citep{fujisawa2008robust}. Although the density-power divergence \citep{basu1998robust} is often used, it is known that the method does not work very well for the estimation of the variance or scale parameter. Robust Bayesian modeling based on the $\gamma$-divergence has also been developed in recent years. For example, \cite{hashimoto2020robust} proposed a robust and sparse Bayesian linear regression model, and \cite{momozaki2023robustness} considered a robust ordinal response model through the $\gamma$-divergence.  

In this paper, we propose a robust Bayesian graphical lasso based on the $\gamma$-divergence. Although the proposed method combines the $\gamma$-lasso by \cite{hirose2017robust} with the Bayesian graphical lasso by \cite{wang2012bayesian}, we prove the robustness property called the posterior robustness for the proposed method. 
The posterior robustness is one of the robustness properties of posterior distributions. For the posterior distribution equipped with the property, the information of outliers is automatically rejected in the posterior distribution as long as the outliers are extremely large. Recently, some researchers studied sufficient conditions for the posterior robustness using log-regularly varying tailed probability distributions for data \citep{desgagne2015robustness, desgagne2019bayesian, hamura2022log}. However, posterior robustness for divergence-based robust methods has not been much developed. To show posterior robustness under the $\gamma$-divergence, we introduce a new robust posterior distribution whose maximum a posteriori estimate matches the estimate by $\gamma$-lasso \citep{hirose2017robust}. Since the proposed posterior distribution is a synthetic one, we provide a sufficient condition for the posterior propriety in the use of the Bayesian lasso-type priors. We also show that the other popular approaches do not satisfy the posterior robustness for the estimation of the inverse covariance matrix. We illustrate the performance of the proposed method through some numerical experiments and apply the proposed method to the analysis of gene expression data. 

The remainder of the paper is structured as follows. In Section \ref{sec:2}, a new robust posterior distribution based on the $\gamma$-divergence is proposed, and some theoretical properties and an efficient posterior computation algorithm are also presented. In Section \ref{sec:3}, we discuss robustness properties for other posterior distributions. Numerical experiments and the real data example are shown in Sections \ref{sec:4} and \ref{sec:5}, respectively. R code implementing the proposed methods is available in the GitHub repository (URL: \url{https://github.com/Takahiro-Onizuka/RBGGM-gamma}). Additional information on the proposed algorithms and numerical experiments is provided in the Supplementary Material.

\section{Robust Bayesian graphical models}\label{sec:2}

In this section, we present our main proposal and show some theoretical properties of the proposed model. Furthermore, we provide an efficient and scalable posterior computation algorithm using the weighted Bayesian bootstrap.

\subsection{Robust graphical lasso via $\gamma$-divergence}\label{sec:2.1}

It is well-known that parameter estimation under the Gaussian likelihood is affected by outliers. Studies on robust parameter estimation have a long history and many useful methods have been proposed in the literature. One of the methods is to use the heavy-tailed probability distribution instead of the Gaussian distribution. For example, \cite{finegold2011robust} proposed a robust graphical lasso based on the multivariate $t$-distribution. However, constructing such a distribution equipped with desirable robustness properties is not straightforward, especially in multivariate cases. As a more versatile approach, divergence-based or weighted likelihood methods have been developed in the last two decades (see, e.g. \cite{basu1998robust}). In this paper, we focus on a robust divergence called $\gamma$-divergence \citep{fujisawa2008robust}. \cite{hirose2017robust} considered a robust Gaussian graphical modeling based on the $\gamma$-divergence. 
We provide a brief introduction of the $\gamma$-divergence.
The $\gamma$-divergence between a data generating process $g(\bm{y})$ and probability density function $f_{\bm{\theta}}=f(\bm{y}\mid \bm{\theta})$ is defined by
\begin{align*}
d_{\gamma}(g,f_{\bm{\theta}}) = \frac{1}{\gamma(1+\gamma)} \log \int g(\bm{y})^\gamma d\bm{y} -\frac{1}{\gamma} \log \int g(\bm{y})f(\bm{y}\mid \bm{\theta})^\gamma d\bm{y} + \frac{1}{1+\gamma} \log \int f(\bm{y}\mid \bm{\theta})^{1+\gamma} d\bm{y},
\end{align*}
where $\gamma>0$ is a tuning parameter to control the balance between efficiency and robustness. Following \cite{fujisawa2008robust}, we now explain that the minimum $\gamma$-divergence estimate has a strong robustness property.
It is often assumed that the data-generating distribution is contaminated as $g(\bm{y})=(1-\varepsilon) f(\bm{y}) + \varepsilon \delta(\bm{y})$, where $f(\bm{y})$ is the underlying target distribution, $\delta(\bm{y})$ is a contamination distribution and $\varepsilon>0$ is the contamination ratio. Furthermore, we assume that $\nu(\bm{\theta},\gamma)=\int \delta(\bm{y}) f_{\bm{\theta}}^{\gamma}(\bm{y}) d\bm{y} \approx 0$ for $\gamma>0$. Note that we do not assume that the contamination ratio is small. Then the corresponding $\gamma$-divergence is expressed by
\begin{align*}
d_{\gamma} (g,f_{\bm{\theta}}) &= -\frac{1}{\gamma} \log \int g(\bm{y}) f_{\bm{\theta}}^{\gamma}(\bm{y}) d\bm{y} +\frac{1}{1+\gamma} \log\int f_{\bm{\theta}}^{1+\gamma}(\bm{y}) d\bm{y} \\
& \approx -\frac{1}{\gamma} \log \left\{ (1-\varepsilon) \int f(\bm{y})\cdot f_{\bm{\theta}}^{\gamma}(\bm{y}) d\bm{y} +0 \right\}  +\frac{1}{1+\gamma} \log \int f_{\bm{\theta}}^{1+\gamma}(\bm{y}) d\bm{y}\\
&= d_{\gamma}(f,f_{\bm{\theta}}) -\frac{1}{\gamma} \log (1-\varepsilon).
\end{align*}
Since the term $\frac{1}{\gamma} \log (1-\varepsilon)$ does not depend on $\bm{\theta}$, we have $\arg\min_{\bm{\theta}} d_{\gamma} (g,f_{\bm{\theta}}) \approx \arg \min_{\bm{\theta}} d_{\gamma}(f,f_{\bm{\theta}})$ under the assumption $\nu(\bm{\theta},\gamma) \approx 0$. 
For example, let $\delta (y) = \mathcal{N}(y\mid \alpha,1)$ and $f_{\theta}(y)=\mathcal{N}(y \mid \theta,1)$. Then $\nu(\theta,\gamma)=c_{1,\gamma} \exp\{- c_{2,\gamma} (\alpha -\theta)^2\}\approx 0$ for large $\alpha$, where $c_{1,\gamma}$ and $c_{2,\gamma}$ are constant numbers. In this paper, we show that a similar property also holds in the posterior distribution (not only point estimate) if the posterior distribution is properly defined. 

If we use a large $\gamma$, it is known that the corresponding estimate becomes more robust \cite{fujisawa2008robust}. However, the efficiency of the estimator decreases. In Figure 3 and Table 5 of the article \cite{nakagawa2020default}, they discussed the trade-off in terms of the asymptotic relative efficiency, and it was observed that larger $\gamma$ is not always better. Since $\gamma$ determines the shape of the discrepancy/loss function $d_{\gamma}(g,f_{\theta})$ and the loss function is usually selected depending on the purpose of analysis, we often select $\gamma$ as a fixed small positive value (see, e.g., \cite{hirose2017robust, hashimoto2020robust}). 
The selection method of $\gamma>0$ has not been clear in general, but some strategies have been developed in recent years (see, e.g., \cite{yonekura2023adaptation}). Although \cite{yonekura2023adaptation} proposed a data-dependent selection of $\gamma$ in the Bayesian framework using the sequential Monte Carlo method, their simulation mainly dealt with a univariate probability distribution, and the computation cost is not low even in this case. 
For this reason, we suggest a fixed $\gamma$ as a small positive value (e.g. $\gamma=0.05$ or $0.1$) in the proposed method. 

Since $g(\bm{y})$ is unknown in practice, the minimum $\gamma$-divergence estimate is obtain by solving the following optimization problem: 
\begin{align}\label{gamma-opt}
\min_{\bm{\theta}} \left\{ -\frac{1}{\gamma}\log\left\{\frac{1}{n}\sum_{i=1}^n f(\bm{y}_i\mid \bm{\theta})^{\gamma}\right\}+\frac{1}{1+\gamma}\log\int f(\bm{y}\mid\bm{\theta})^{1+\gamma}d\bm{y}\right\}.
\end{align}
The objective function in \eqref{gamma-opt} is also called the negative $\gamma$-likelihood function. Combining the $\gamma$-likelihood with $L_1$-penalty, \cite{hirose2017robust} proposed a robust and sparse graphical lasso model. In the following sections, we consider robust graphical lasso models from a Bayesian perspective.

\subsection{MAP $\gamma$-posterior distribution} \label{sec:2.2}

Robust Bayesian modeling via the $\gamma$-divergence has been developed in recent years (see \cite{nakagawa2020robust, hashimoto2020robust}). Although existing studies deal with robust Bayesian inference for univariate observations and linear regression models, we here focus on robust Bayesian inference for a precision matrix in multivariate Gaussian-distributed observations. To this end, we introduce a robust posterior distribution based on the $\gamma$-divergence. 
\cite{hashimoto2020robust} proposed a synthetic posterior distribution based on the $\gamma$-divergence that converges to the standard posterior as $\gamma\to 0$, and mainly considered a robust estimation of sparse linear regression models. \cite{nakagawa2020robust} also proposed a posterior distribution based on the monotone transformed $\gamma$-divergence. However, we introduce another type of posterior distribution that focuses on matching the maximum a posteriori (MAP) estimates with the corresponding frequentist optimal solution (e.g. \cite{park2008bayesian,wang2012bayesian}). 
In general, a objective function based on the $\gamma$-likelihood with a penalty term $\lambda\phi(\bm{\theta})$ is defined by
\begin{align}
L_{\gamma}(\bm{\theta})=-\frac{1}{\gamma}\log\left\{\frac{1}{n}\sum_{i=1}^n f(\bm{y}_i\mid \bm{\theta})^{\gamma}\right\}+\frac{1}{1+\gamma}\log\int f(\bm{y}\mid \bm{\theta})^{1+\gamma}d\bm{y}+\lambda\phi(\bm{\theta}). \label{objective-func}
\end{align}
Note that the first two terms in \eqref{objective-func} are the same as \cite{hirose2017robust}. The penalized objective function has a natural counterpart as a Bayesian posterior distribution as follows: 
\begin{align*}
\pi(\bm{\theta}\mid \mathbf{Y}) \propto \exp\left(\frac{1}{\gamma}\log\left\{\frac{1}{n}\sum_{i=1}^n f(\bm{y}_i\mid \bm{\theta})^{\gamma}\right\}-\frac{1}{1+\gamma}\log\int f(\bm{y}\mid \bm{\theta})^{1+\gamma}d \bm{y}\right)\exp\left(-\lambda\phi(\bm{\theta})\right),
\end{align*}
where the first and second terms are interpreted as likelihood and prior density functions, respectively. The MAP estimate based on the posterior distribution is equal to the minimizer of the penalized objective function \eqref{objective-func}. On the other hand, the MAP estimates based on the posteriors of \cite{hashimoto2020robust} and \cite{nakagawa2020robust} do not match the minimizer of \eqref{objective-func}. 
To avoid this problem, we define a new robust posterior distribution as
\begin{align}\label{MAP-posterior-general}
\pi_{\gamma}(\bm{\theta}\mid \mathbf{Y})&\propto \exp\left[\frac{1}{\gamma}\log\left(\frac{1}{n}\sum_{i=1}^n f(\bm{y}_i\mid \bm{\theta})^{\gamma}\right)-\frac{1}{1+\gamma}\log\left(\int f(\bm{y}\mid \bm{\theta})^{1+\gamma}d\bm{y}\right)\right]\pi(\bm{\theta}).
\end{align}
In this paper, we call the posterior {\it MAP $\gamma$-posterior}. The posterior is different from those of \cite{hashimoto2020robust} and \cite{nakagawa2020robust}, but there are some advantages when we consider the estimation of Gaussian graphical models; 1) the MAP estimate coincides with the frequentist solution; 2) the corresponding posterior density is easy to handle for proving theoretical properties; 3) existing frequentist optimization methods can be directly used to sample from the posterior distribution. 

Hereafter, we focus on the estimation of Gaussian graphical models defined by \eqref{Gaussian-model}, that is, the parameter is a positive definite precision matrix $\mathbf{\Omega}\in \mathbb{R}^{p\times p}$ and the density function for $\bm{y}_i$ is given by
\begin{align*}
f(\bm{y}_i\mid \mathbf{\Omega})=(2\pi)^{-p/2} |\mathbf{\Omega}|^{1/2}\exp(-\bm{y}_i^{\top}\mathbf{\Omega} \bm{y}_i/2).
\end{align*}
Then the corresponding MAP $\gamma$-posterior \eqref{MAP-posterior-general} is given by
\begin{align}\label{MAP-posterior}
\pi_{\gamma}(\bm{\Omega}\mid \mathbf{Y})= \frac{|\mathbf{\Omega}|^{1/2(1+\gamma)}\left\{\sum_{i=1}^n\exp\left(-\frac{\gamma}{2}\bm{y}_i^{\top}\mathbf{\Omega} \bm{y}_i\right)\right\}^{1/\gamma}\pi(\mathbf{\Omega})}{\int |\mathbf{\Omega}|^{1/2(1+\gamma)}\left(\sum_{i=1}^n\exp\left(-\frac{\gamma}{2}\bm{y}_i^{\top}\mathbf{\Omega} \bm{y}_i\right)\right\}^{1/\gamma}\pi(\mathbf{\Omega})d\mathbf{\Omega}}.
\end{align}

The validity of the (discrepancy-based) posterior distributions \eqref{MAP-posterior} can be explained using a concept called ``general posterior distributions" \cite{bissiri2016general}. A different point from standard Bayesian posterior distributions is whether the target parameter of inference is the parameter in the model (or likelihood) or whether the parameter minimized the discrepancy $d_{\gamma}(g,f_{\theta})$. The general Bayesian framework aims to infer that the parameter minimized the discrepancy, and \cite{bissiri2016general} showed that such types of posterior distributions have decision-theoretic validity and coherence properties under some conditions. Hence, for example, the number 95\% in the corresponding credible interval has probabilistic interpretation for the parameter that minimized the discrepancy $d_{\gamma}(g,f_{\theta})$ as long as the posterior distribution is proper. Since the coherence property does not hold for the proposed posterior distribution, sequential updating for the proposed posterior distribution is not valid. If we adopt the monotonically transformed $\gamma$-divergence as in \cite{nakagawa2020robust}, the coherence property for sequential updating of the posterior holds. However, since such a transformation makes it difficult to derive the theoretical properties of the posterior distribution, we do not consider it here. 

\subsection{Theoretical properties}\label{subsec:2.3}

We show two important theoretical results on the proposed posterior distribution \eqref{MAP-posterior}. The first is the posterior propriety, defined as follows. Let $\bm{y}_1,\dots,\bm{y}_n$ be a sequence of independent, identically distributed random variables according to the density function $f(\bm{y} \mid\bm{\theta})$, and let $\pi(\bm{\theta})$ be a prior density for $\bm{\theta}$. The posterior distribution is called proper if the normalizing constant satisfies $\int f(\bm{y}_1,\dots,\bm{y}_n\mid \bm{\theta}) \pi(\bm{\theta}) d\bm{\theta} < \infty$ (see, e.g., \cite{berger2009formal}), where $f(\bm{y}_1,\dots,\bm{y}_n\mid \bm{\theta})=\prod_{i=1}^n f(\bm{y}_i\mid \bm{\theta})$. When we assume a proper probabilistic model as a likelihood and a proper prior distribution, the posterior is proper. However, an improper probabilistic model with respect to $\bm{y}$ such as \eqref{MAP-posterior} does not always lead to a proper posterior distribution even if we assume a proper prior for $\bm{\theta}$. Hence, discussing the posterior propriety of the proposed model is an important issue when employing the proposed posterior distribution. 

The following theorem provides a sufficient condition for the posterior propriety under the proposed model \eqref{MAP-posterior}.

\begin{theorem}\label{posterior-proper}
Assume that the prior for $\mathbf{\Omega}$ is written by $\pi(\mathbf{\Omega})=\prod_{i\le j}\pi(\omega_{ij})1_{\mathbf{\Omega}\in M^+}$, and $\prod_{i< j}\pi(\omega_{ij})$ is proper. If there exists an integrable function $g(w_{ii})$ such that 
\begin{align*}
\omega_{ii}^{1/(2+2\gamma)}\pi(\omega_{ii})\le g(\omega_{ii}),\quad i=1,\dots,p,
\end{align*}
then the posterior distribution $\pi_{\gamma}(\mathbf{\Omega}\mid \mathbf{Y})$ is proper for all $\mathbf{Y}=(\bm{y}_1,\dots, \bm{y}_n)^{\top}\in \mathbb{R}^{n\times p}$. 
\end{theorem}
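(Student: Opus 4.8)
The plan is to show directly that the normalizing constant appearing in the denominator of \eqref{MAP-posterior},
\[
Z:=\int_{M^+} |\mathbf{\Omega}|^{1/(2(1+\gamma))}\Bigl\{\textstyle\sum_{i=1}^n\exp\bigl(-\tfrac{\gamma}{2}\bm{y}_i^{\top}\mathbf{\Omega}\bm{y}_i\bigr)\Bigr\}^{1/\gamma}\pi(\mathbf{\Omega})\,d\mathbf{\Omega},
\]
is finite for every $\mathbf{Y}\in\mathbb{R}^{n\times p}$. The first step is to dispose of the data-dependent factor: on $M^+$ every quadratic form satisfies $\bm{y}_i^{\top}\mathbf{\Omega}\bm{y}_i\ge 0$, so $\exp(-\tfrac{\gamma}{2}\bm{y}_i^{\top}\mathbf{\Omega}\bm{y}_i)\le 1$, the bracketed sum is at most $n$, and since $1/\gamma>0$ the whole factor is bounded above by the constant $n^{1/\gamma}$, uniformly in $\mathbf{\Omega}$ and $\mathbf{Y}$. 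This crude bound is exactly the reason the propriety can hold for all data without any condition on $\mathbf{Y}$. It therefore suffices to prove $\int_{M^+}|\mathbf{\Omega}|^{1/(2(1+\gamma))}\pi(\mathbf{\Omega})\,d\mathbf{\Omega}<\infty$.

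The second step bounds the determinant by its diagonal. By Hadamard's inequality for positive definite matrices, $|\mathbf{\Omega}|\le\prod_{i=1}^p\omega_{ii}$ on $M^+$, and since the exponent $1/(2(1+\gamma))$ is positive this yields $|\mathbf{\Omega}|^{1/(2(1+\gamma))}\le\prod_{i=1}^p\omega_{ii}^{1/(2(1+\gamma))}$. Inserting $\pi(\mathbf{\Omega})=\prod_{i\le j}\pi(\omega_{ij})1_{\mathbf{\Omega}\in M^+}$ and noting that the resulting integrand is nonnegative, I would enlarge the region of integration, in the upper-triangular coordinates $(\omega_{ij})_{i\le j}$, from $M^+$ to the larger product set $\{\omega_{11}>0,\dots,\omega_{pp}>0\}\times\mathbb{R}^{p(p-1)/2}$ that contains it. Fubini's theorem then factorizes the bound as
\[
\Bigl(\prod_{i=1}^p\int_0^{\infty}\omega_{ii}^{1/(2(1+\gamma))}\pi(\omega_{ii})\,d\omega_{ii}\Bigr)\Bigl(\prod_{i<j}\int_{\mathbb{R}}\pi(\omega_{ij})\,d\omega_{ij}\Bigr).
\]

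The third step invokes the hypotheses. Since $2+2\gamma=2(1+\gamma)$, the assumed domination $\omega_{ii}^{1/(2+2\gamma)}\pi(\omega_{ii})\le g(\omega_{ii})$ with $g$ integrable makes each diagonal integral at most $\int g(\omega_{ii})\,d\omega_{ii}<\infty$, while the product of the off-diagonal integrals equals $\int\prod_{i<j}\pi(\omega_{ij})\prod_{i<j}d\omega_{ij}<\infty$ by the assumed propriety of $\prod_{i<j}\pi(\omega_{ij})$. Combining the three steps gives $Z\le n^{1/\gamma}\cdot\prod_{i=1}^p\bigl(\int g(\omega_{ii})\,d\omega_{ii}\bigr)\cdot\int\prod_{i<j}\pi(\omega_{ij})\prod_{i<j}d\omega_{ij}<\infty$, which is the claim. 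I do not anticipate a genuine obstacle: the argument rests on two deliberately crude but perfectly matched upper bounds — $n^{1/\gamma}$ for the $\gamma$-likelihood term and Hadamard's inequality for the determinant — after which the integral separates coordinatewise. The only point deserving care is the enlargement of the domain from $M^+$ to the full product set, which is legitimate precisely because, after these bounds, the integrand has been replaced by a nonnegative product that no longer depends on the positive-definiteness constraint linking the off-diagonal entries to the diagonal ones.
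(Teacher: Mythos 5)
Your proposal is correct and follows essentially the same route as the paper's proof: bound the data-dependent sum by the constant $n^{1/\gamma}$, apply Hadamard's inequality $|\mathbf{\Omega}|\le\prod_{i=1}^p\omega_{ii}$ on $M^+$, and then factorize the resulting integral so the hypotheses on the diagonal and off-diagonal priors apply. Your explicit justification of the domain enlargement from $M^+$ to the product set is a point the paper glosses over, but it is the same argument.
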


\begin{proof}[\textbf{\upshape Proof:}]
Since $\bm{y}\mid\mathbf{\Omega} \sim \mathcal{N}_p(\bm{0},\mathbf{\Omega}^{-1})$, we have
\begin{align*}
\int f(\bm{y}\mid\mathbf{\Omega})^{1+\gamma}d\bm{y}=(2\pi)^{-p\gamma/2}|\mathbf{\Omega}|^{\gamma/2} \left(1+\gamma\right)^{-p/2}. 
\end{align*}
The posterior density under the prior $\pi(\mathbf{\Omega})=\prod_{i<j}\pi(\omega_{ij})\prod_{i=1}^p\pi(\omega_{ii})1_{\mathbf{\Omega}\in M^+}$ is bounded by
\begin{align}
\pi_{\gamma}(\mathbf{\Omega}\mid \mathbf{Y})&\propto  |\mathbf{\Omega}|^{1/(2(1+\gamma))}\left(\sum_{i=1}^n\exp\left(-\frac{\gamma}{2}\bm{y}_i^{\top}\mathbf{\Omega} \bm{y}_i\right)\right)^{1/\gamma} \pi(\mathbf{\Omega})\notag\\
&\le C\prod_{i=1}^p \omega_{ii}^{1/(2(1+\gamma))}\pi(\mathbf{\Omega})\notag\\
&=C\prod_{i<j}\pi(\omega_{ij})\prod_{i=1}^p\pi(\omega_{ii})\omega_{ii}^{n/(2(1+\gamma))}1_{\mathbf{\Omega}\in M^+}, \label{pos-upper}
\end{align}
where $C$ is a constant and the last inequality follows from Hadamard's inequality under a positive definite matrix $\mathbf{\Omega}$. From the assumptions, there exists an integrable function $g(\omega_{ii})$ $(i=1,\dots,p)$ such that $\omega_{ii}^{1/(2+2\gamma)}\pi(\omega_{ii})\le g(\omega_{ii})$. Then the integration of the right-hand side of \eqref{pos-upper} is bounded by
\begin{align*}
\int \prod_{i<j}\pi(\omega_{ij})\prod_{i=1}^p\pi(\omega_{ii})\omega_{ii}^{n/(2(1+\gamma))}1_{\mathbf{\Omega}\in M^+}d\mathbf{\Omega}
&=\prod_{i=1}^p\int \pi(\omega_{ii})\omega_{ii}^{n/(2(1+\gamma))}1_{\mathbf{\Omega}\in M^+}d\omega_{ii}\\
&\le \prod_{i=1}^p\int g(\omega_{ii})d\omega_{ii}<\infty
\end{align*}
Therefore, the posterior distribution $\pi_{\gamma}(\mathbf{\Omega}\mid \mathbf{Y})$ is proper. 

\end{proof}

From Theorem \ref{posterior-proper}, the tail behavior of the prior for diagonal element $\omega_{ii}$ ($i=1,\dots,p$) is important for the posterior propriety, while we can use any proper prior for off-diagonal elements of $\mathbf{\Omega}$. Note that if we assume $\omega_{ii}\sim\mathrm{Exp}(\lambda/2)$ as \cite{wang2012bayesian}, then the posterior is proper, but we cannot apply improper priors such as improper uniform prior to the proposed model. Even if the prior distribution is proper, we cannot employ the Cauchy prior distribution because the distribution does not satisfy the assumption of Theorem \ref{posterior-proper}. This is a different point from \cite{li2019graphical} where they employ an improper uniform prior for the diagonal element $\omega_{ii}$ ($i=1\dots,p$). 

Next, we show that the proposed model has a desirable robustness property in the presence of outliers. Before we state the result, we introduce the definition of {\it posterior robustness} (see, e.g., \cite{desgagne2015robustness, desgagne2019bayesian, Gagnon2020, hamura2022log}), which is known as a Bayesian measure of robustness. Following \cite{desgagne2019bayesian}, we define an outlier for multivariate observations. We consider observations $\mathbf{Y}=(\bm{y}_1,\dots,\bm{y}_n)^{\top}\in \mathbb{R}^{n\times p}$, and assume that each $y_{ij}$ is expressed by
\begin{align*}
y_{ij}=
\begin{cases}
a_{ij} & (i\in\mathcal{K}),\\
a_{ij} + b_{ij}z& (i\in\mathcal{L})
\end{cases}
\end{align*}
for $a_{ij}\in\mathbb{R}$, $b_{ij}\in\mathbb{R}$, and $z>0$ for $i=1,\dots,n$ and $j=1,\dots,p$, where $\mathcal{K}$ and $\mathcal{L}$ denote the sets of indices that are non-outliers and outliers, respectively. We note that $\mathcal{K}$ and $\mathcal{L}$ are satisfied with $\mathcal{K}\cup\mathcal{L}=\{1,\dots,n\}$ and $\mathcal{K}\cap\mathcal{L}=\emptyset$. Therefore, some elements $y_{ij}$ in $\bm{y}_i$ for $i\in\mathcal{L}$ are represented by $a_{ij} + b_{ij}z$ when $b_{ij}\neq0$. If $z$ is large, then $y_{ij}$ takes a large value and the resulting vector $\bm{y}_i$ is considered an outlier. For $i\in\mathcal{K}$, $\bm{y}_i=(y_{i1},\dots,y_{ip})^\top=(a_{i1},\dots,a_{ip})^\top$. Additionally, let $\mathcal{D}=\{\bm{y}_i \mid i \in \mathcal{K}\cup \mathcal{L}\}$ be a set of all observations and let $\mathcal{D}^*=\{\bm{y}_i\mid i\in \mathcal{K}\}$ be a set of non-outlying observations. In general, the posterior robustness is defined as follows.

\begin{definition}
[Posterior robustness]\label{def1}
A proper posterior distribution $\pi(\bm{\theta}\mid \mathcal{D})$ satisfies the posterior robustness if it holds that
\begin{align*}
\quad\lim_{z\to\infty} \pi(\bm{\theta}\mid \mathcal{D}) = \pi(\bm{\theta}\mid \mathcal{D}^*).
\end{align*}
\end{definition}
We note that the convergence in Definition \ref{def1} is $L_1$ sense, that is, $\int |\pi(\bm{\theta}\mid \mathcal{D}) - \pi(\bm{\theta} \mid \mathcal{D}^*)| d\bm{\theta} \to 0$ as $z\to \infty$. Intuitively, the definition means that the information of the outliers is automatically ignored in the posterior distribution $\pi(\bm{\theta}\mid \mathcal{D})$ as long as the outliers are extremely large. In other words, the outliers and non-outliers are well-separated. The property is an analog of a redescending property in frequentist robust statistics \citep{maronna2019robust}. \cite{fujisawa2008robust} and \cite{hirose2017robust} also discussed a redescending property of the $\gamma$-divergence, while they did not give an explicit definition of outliers. We have the following result on the posterior robustness of the proposed model given by \eqref{MAP-posterior}.

\begin{theorem}\label{gamma-robust}
Assume that the posterior $\pi_{\gamma}(\mathbf{\Omega}\mid \mathbf{Y})$ is proper for all observations. Then the proposed posterior distribution \eqref{MAP-posterior} satisfies the posterior robustness.
\end{theorem}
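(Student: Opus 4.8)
The plan is to show that the unnormalized MAP $\gamma$-posterior density, viewed as a function of $\mathbf{\Omega}$, converges pointwise (and in a dominated way) to the unnormalized density built from the non-outlying observations $\mathcal{D}^*$ only, and then to upgrade this to $L_1$ convergence of the normalized posteriors via Scheff\'e's lemma. The key algebraic observation is that the ``likelihood part'' of \eqref{MAP-posterior} is
\[
\Bigl\{\textstyle\sum_{i=1}^n\exp\bigl(-\tfrac{\gamma}{2}\bm{y}_i^{\top}\mathbf{\Omega}\bm{y}_i\bigr)\Bigr\}^{1/\gamma},
\]
a \emph{sum} over observations rather than a product. For a fixed positive definite $\mathbf{\Omega}$, each outlying term with $b_{ij}\neq 0$ has $\bm{y}_i^\top\mathbf{\Omega}\bm{y}_i\to\infty$ as $z\to\infty$ (since $\mathbf{\Omega}\succ 0$ and the quadratic form grows like $z^2$), so $\exp(-\tfrac{\gamma}{2}\bm{y}_i^\top\mathbf{\Omega}\bm{y}_i)\to 0$; hence the sum converges to $\sum_{i\in\mathcal{K}}\exp(-\tfrac{\gamma}{2}\bm{y}_i^\top\mathbf{\Omega}\bm{y}_i)$, which depends only on $\mathcal{D}^*$. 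This is exactly where the $\gamma$-divergence construction beats the ordinary likelihood: a product would be driven to zero over the whole parameter space, whereas the sum retains the non-outlying contribution.

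First I would fix notation: write $q_z(\mathbf{\Omega})$ for the unnormalized density in \eqref{MAP-posterior} (depending on $z$ through $\mathbf{Y}$) and $q_\infty(\mathbf{\Omega})=|\mathbf{\Omega}|^{1/(2(1+\gamma))}\{\sum_{i\in\mathcal{K}}\exp(-\tfrac{\gamma}{2}\bm{y}_i^\top\mathbf{\Omega}\bm{y}_i)\}^{1/\gamma}\pi(\mathbf{\Omega})$, with normalizing constants $Z_z=\int q_z$ and $Z_\infty=\int q_\infty$. Step one: pointwise convergence $q_z(\mathbf{\Omega})\to q_\infty(\mathbf{\Omega})$ for every $\mathbf{\Omega}\in M^+$, which follows from the quadratic-form argument above and continuity of $t\mapsto t^{1/\gamma}$. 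Step two: a dominating bound. Since $\exp(-\tfrac{\gamma}{2}\bm{y}_i^\top\mathbf{\Omega}\bm{y}_i)\le 1$ always, and for the outlying indices it is bounded by $1$ uniformly in $z$, we get $q_z(\mathbf{\Omega})\le |\mathbf{\Omega}|^{1/(2(1+\gamma))}\{\,|\mathcal{K}| + |\mathcal{L}|\,\}^{1/\gamma}\pi(\mathbf{\Omega}) = n^{1/\gamma}|\mathbf{\Omega}|^{1/(2(1+\gamma))}\pi(\mathbf{\Omega})$, and this dominating function is integrable precisely by the hypothesis that the posterior is proper for all observations (applied, say, with $z$ arbitrary, or directly from the propriety assumption together with Hadamard's inequality as in the proof of Theorem \ref{posterior-proper}). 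Step three: dominated convergence gives $Z_z\to Z_\infty$, and in particular $Z_\infty>0$ is finite and positive; then Scheff\'e's lemma (pointwise convergence of densities plus convergence of the total masses) yields $\int|q_z/Z_z - q_\infty/Z_\infty|\,d\mathbf{\Omega}\to 0$, i.e.\ $\pi_\gamma(\mathbf{\Omega}\mid\mathcal{D})\to\pi_\gamma(\mathbf{\Omega}\mid\mathcal{D}^*)$ in $L_1$, which is the posterior robustness in the sense of Definition \ref{def1}.

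The main obstacle, and the only genuinely delicate point, is making sure the quadratic form $\bm{y}_i^\top\mathbf{\Omega}\bm{y}_i$ really does tend to infinity for outlying $i$ \emph{for every} $\mathbf{\Omega}\in M^+$. Writing $\bm{y}_i = \bm{a}_i + z\bm{b}_i$ with $\bm{b}_i\neq\bm{0}$ (for $i\in\mathcal{L}$ with some $b_{ij}\neq0$), we have $\bm{y}_i^\top\mathbf{\Omega}\bm{y}_i = z^2\bm{b}_i^\top\mathbf{\Omega}\bm{b}_i + 2z\,\bm{a}_i^\top\mathbf{\Omega}\bm{b}_i + \bm{a}_i^\top\mathbf{\Omega}\bm{a}_i$, and since $\mathbf{\Omega}\succ 0$ and $\bm{b}_i\neq\bm{0}$ the leading coefficient $\bm{b}_i^\top\mathbf{\Omega}\bm{b}_i>0$, so the expression diverges to $+\infty$; this is uniform on compact subsets of $M^+$ but pointwise is all we need for Scheff\'e. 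One should also note that the definition of outlier requires at least one $b_{ij}\neq 0$ for each $i\in\mathcal{L}$ (otherwise $\bm{y}_i$ would not be an outlier), so $\bm{b}_i\neq\bm{0}$ is genuinely guaranteed. A minor bookkeeping point is that the set $\mathcal{D}^*$ could in principle be empty (all observations outlying); then $q_\infty\equiv\pi(\mathbf{\Omega})\cdot 0 = 0$ and the statement is vacuous or must be read as excluding that degenerate case — I would simply assume $\mathcal{K}\neq\emptyset$, which is implicit in the setup. Everything else is routine measure theory.
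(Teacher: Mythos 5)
Your proposal is correct and follows essentially the same route as the paper: the same key observation that the $\gamma$-likelihood is a sum over observations whose outlying terms vanish because $\bm{y}_i^{\top}\mathbf{\Omega}\bm{y}_i\to\infty$ for every $\mathbf{\Omega}\in M^+$, and the same use of dominated convergence (with the bound $\exp(-\tfrac{\gamma}{2}\bm{y}_i^{\top}\mathbf{\Omega}\bm{y}_i)\le 1$ and posterior propriety) to pass the limit through the normalizing constant. The only difference is cosmetic: the paper finishes by showing the ratio $\pi_{\gamma}(\mathbf{\Omega}\mid\mathcal{D})/\pi_{\gamma}(\mathbf{\Omega}\mid\mathcal{D}^*)\to 1$ and then interchanging limit and integral, whereas you invoke Scheff\'e's lemma, which if anything makes the final $L_1$ step slightly more explicit.
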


\begin{proof}[\textbf{\upshape Proof:}]
For $\mathcal{D}$ and $\mathcal{D}^*$, the ratio of the posterior densities \eqref{MAP-posterior} is expressed by
\begin{align*}
\frac{\pi_{\gamma}(\mathbf{\Omega}\mid \mathcal{D})}{\pi_{\gamma}(\mathbf{\Omega}\mid \mathcal{D}^*)}
&=\frac{p(\mathcal{D}^*)}{p(\mathcal{D})}
\frac{\left(\sum_{i=1}^n \exp\left(-\frac{\gamma}{2}\bm{y}_i^{\top}\mathbf{\Omega} \bm{y}_i\right)\right)^{1/\gamma}}{\left(\sum_{i\in \mathcal{K}} \exp\left(-\frac{\gamma}{2}\bm{y}_i^{\top}\mathbf{\Omega} \bm{y}_i\right)\right)^{1/\gamma}},
\end{align*}
where 
\begin{align*}
p(\mathcal{D})&=\int  |\mathbf{\Omega}|^{1/2(1+\gamma)}\left[\sum_{i=1}^n \exp\left(-\frac{\gamma}{2}\bm{y}_i^{\top}\mathbf{\Omega} \bm{y}_i\right)\right]^{1/\gamma}\pi(\mathbf{\Omega}) d\mathbf{\Omega}, \\
p(\mathcal{D}^*)&=\int  |\mathbf{\Omega}|^{1/2(1+\gamma)}\left[\sum_{i\in \mathcal{K}} \exp\left(-\frac{\gamma}{2}\bm{y}_i^{\top}\mathbf{\Omega} \bm{y}_i\right)\right]^{1/\gamma}\pi(\mathbf{\Omega}) d\mathbf{\Omega}.
\end{align*}
We note that it holds that
\begin{align*}
\lim_{z\to\infty} \sum_{i=1}^n \exp\left(-\frac{\gamma}{2}\bm{y}_i^{\top}\mathbf{\Omega} \bm{y}_i\right)
&=\lim_{z\to\infty} \left[ \sum_{i\in\mathcal{K}} \exp\left(-\frac{\gamma}{2}\bm{y}_i^{\top}\mathbf{\Omega} \bm{y}_i\right)+\sum_{i\in\mathcal{L}} \exp\left(-\frac{\gamma}{2}\bm{y}_i^{\top}\mathbf{\Omega} \bm{y}_i\right)\right]\\
&=\sum_{i\in\mathcal{K}} \exp\left(-\frac{\gamma}{2}\bm{y}_i^{\top}\mathbf{\Omega} \bm{y}_i\right).
\end{align*}
Since the posterior distribution $\pi_{\gamma}(\mathbf{\Omega}\mid \mathcal{D})$ is proper, 
Lebesgue's dominated convergence theorem leads to the following convergence:
\begin{align*}
\lim_{z\to\infty} p(\mathcal{D})&=\lim_{z\to\infty}\int  |\mathbf{\Omega}|^{1/2(1+\gamma)}\left[\sum_{i=1}^n \exp\left(-\frac{\gamma}{2}\bm{y}_i^{\top}\mathbf{\Omega} \bm{y}_i\right)\right]^{1/\gamma}\pi(\mathbf{\Omega})d\mathbf{\Omega}\\
&=\int |\mathbf{\Omega}|^{1/(2(1+\gamma))}\left\{\lim_{z\to\infty}\left[\sum_{i=1}^n \exp\left(-\frac{\gamma}{2}\bm{y}_i^{\top}\mathbf{\Omega} \bm{y}_i\right)\right]^{1/\gamma}\right\}\pi(\mathbf{\Omega})d\mathbf{\Omega}\\
&=\int  |\mathbf{\Omega}|^{n/(2(1+\gamma))}\left[\sum_{i\in \mathcal{K}} \exp\left(-\frac{\gamma}{2}\bm{y}_i^{\top}\mathbf{\Omega} \bm{y}_i\right)\right]^{1/\gamma}\pi(\mathbf{\Omega})d\mathbf{\Omega}\\
&=p(\mathcal{D}^*).
\end{align*}
Then, we have 
\begin{align*}
\quad\lim_{z\to\infty} \frac{\pi_{\gamma}(\mathbf{\Omega}\mid \mathcal{D})}{\pi_{\gamma}(\mathbf{\Omega}\mid \mathcal{D}^*)}=1.
\end{align*}
Hence, it holds that
\begin{align*}
\lim_{z\to\infty}\int|\pi_{\gamma}(\mathbf{\Omega}\mid \mathcal{D})-\pi_{\gamma}(\mathbf{\Omega}\mid \mathcal{D}^*)|d\mathbf{\Omega}&=\int \lim_{z\to\infty}|\pi_{\gamma}(\mathbf{\Omega}\mid \mathcal{D})-\pi_{\gamma}(\mathbf{\Omega}\mid \mathcal{D}^*)|d\mathbf{\Omega}\\
&=\int \lim_{z\to\infty}\pi_{\gamma}(\mathbf{\Omega}\mid \mathcal{D}^*)\left|\frac{\pi_{\gamma}(\mathbf{\Omega}\mid \mathcal{D})}{\pi_{\gamma}(\mathbf{\Omega}\mid \mathcal{D}^*)}-1\right|d\mathbf{\Omega}\\
&=0.
\end{align*}
This completes the proof. 
\end{proof}
The result is interesting because the sufficient condition for posterior robustness is only the posterior propriety. In existing studies based on the (super) heavy-tailed probability distribution (e.g. \cite{desgagne2015robustness, Gagnon2020, hamura2022log}), the condition on the proportion of outliers is included in the sufficient conditions for the posterior robustness. The posterior robustness for other posterior distributions is discussed in Section \ref{sec:3}. 

We note that the posterior robustness under the proposed model leads to the robustness of point estimate in the frequentist $\gamma$-lasso method by \cite{hirose2017robust}, because the MAP estimate under the proposed method is theoretically equal to the $\gamma$-lasso estimate.

\subsection{Posterior computation}\label{subsec:2.4}

We provide an efficient posterior computation algorithm for the proposed posterior distribution \eqref{MAP-posterior}. We recall that the graphical lasso type prior is given by
\begin{align}
\pi(\mathbf{\Omega})&\propto\exp(-\lambda\|\mathbf{\Omega}\|_1)1_{\{\mathbf{\Omega}\in M^+\}}=\prod_{i=1}^p\mathrm{Exp}(w_{ii}\mid \lambda)\prod_{i<j}^p\mathrm{Lap}(w_{ij}\mid \lambda)1_{\{\mathbf{\Omega}\in M^+\}}.\label{Lasso-prior}
\end{align}
The MAP estimate of the proposed posterior distribution \eqref{MAP-posterior} under the prior \eqref{Lasso-prior} is equal to the estimate by \cite{hirose2017robust}. We note that the proposed model under the prior \eqref{Lasso-prior} satisfies the posterior propriety (Theorem \ref{posterior-proper}) and the posterior robustness (Theorem \ref{gamma-robust}). However, the posterior distribution is intractable because it involves the term $\left\{\sum_{i=1}^n\exp\left(-\frac{\gamma}{2}\bm{y}_i^{\top}\mathbf{\Omega} \bm{y}_i\right)\right\}^{1/\gamma}$ in the likelihood. Hence, we cannot construct an efficient Gibbs sampler in the proposed model. 

We employ an optimization-based sampling method called weighted Bayesian bootstrap (WBB) proposed by \cite{newton2021weighted}. The method gives an approximate posterior sample by adding a random perturbation to the MAP estimate. 
Following \cite{newton2021weighted}, we introduce a brief review of the WBB method. Let $\pi(\bm{\theta}\mid \bm{y})\propto \exp\{-(l(\bm{y}\mid \bm{\theta}) +\lambda \phi(\bm{\theta}))\}$ be a posterior distribution we want to compute, where $l(\bm{y}\mid \bm{\theta})=-\log f(\bm{y}\mid \bm{\theta})$ is a negative log-likelihood and $-\lambda \phi(\bm{\theta})=\log \pi(\bm{\theta})$ is a log-prior density with a hyperparameter $\lambda>0$. The WBB algorithm for sampling from the posterior distribution is conducted by iteratively optimizing the following randomized objective function. 
\begin{align*}
L_{\bm{w}}(\bm{\theta})=\left\{\sum_{i=1}^n w_i l(y_i\mid \bm{\theta})\right\} + w_0 \lambda \phi(\bm{\theta}),
\end{align*}
where $\bm{w}=(w_0,w_1,\dots,w_n)^\top$ is a random weight vector from the Dirichlet distribution $(n+1)\mathrm{Dirichlet}(1,\dots,1)$. Explicit pseudocode for the WBB algorithm is provided in Algorithm 1 of \cite{newton2021weighted}, and some implementation examples including the Bayesian lasso and Bayesian trend filtering are also given in \cite{newton2021weighted}.
After $M$ iterations, we can obtain the approximate posterior sample with sample size $M$ as $\{\bm{\theta}^{(1)},\dots,\bm{\theta}^{(M)}\}$. Since the sampling algorithm is not based on the Markov chain, it is approximation-based sampling method. The randomized objective function for the proposed $\gamma$-posterior under the prior \eqref{Lasso-prior} is defined by 
\begin{align}\label{BB-objective}
L_{\bm{w}}(\mathbf{\Omega})&=-\frac{1}{\gamma}\log\left\{\frac{1}{n}\sum_{i=1}^nw_if(\bm{y}_i\mid \mathbf{\Omega})^{\gamma}\right\}+\frac{\gamma}{2(1+\gamma)}\log|\mathbf{\Omega}|+w_0\lambda\|\mathbf{\Omega}\|_1,
\end{align}
where $\bm{w}=(w_0,w_1,\dots,w_n)^\top \sim (n+1)\mathrm{Dirichlet}(1,\dots,1)$, and $f(\bm{y}_i\mid \mathbf{\Omega})$ is the density function of the multivariate Gaussian distribution. Large sample asymptotic property of the WBB posterior is discussed by \cite{newton2021weighted}, and they show that the posterior distribution generated from the WBB algorithm has asymptotic normality.
The algorithm was also studied in \cite{nie2022bayesian}, and they showed a posterior concentration result for the WBB posterior under the normal linear regression models. Although a similar theoretical result could be expected to hold for each component of the inverse of the covariance matrix of the multivariate Gaussian distribution, this is beyond the scope of this paper and is a subject for future work. In particular, the asymptotic property of the WBB posterior distribution for large graphs with $p$ growing together with $n$ is an interesting future problem.
To solve the minimization problem of \eqref{BB-objective}, we employed the Majorize-Minimization (MM) algorithm (see also \cite{hirose2017robust}). By using Jensen's inequality, we can show that the weighted objective function \eqref{BB-objective} is evaluated by 
\begin{align}\label{bound}
L_{\bm{w}}(\mathbf{\Omega})\le L_{\bm{w}}^*(\mathbf{\Omega})\propto \mathrm{tr}\left\{\mathbf{S}^*\mathbf{\Omega}\right\}-\log|\mathbf{\Omega}|+\rho\|\mathbf{\Omega}\|_1
\end{align}
where
\begin{align*}
\mathbf{S}^*&=(1+\gamma)\sum_{i=1}^ns_i^*\bm{y}_i\bm{y}_i^{\top},\quad s_i^*=\frac{w_if(\bm{y}_i\mid\mathbf{\Omega})^{\gamma}}{\sum_{j=1}^nw_jf(\bm{y}_j\mid\mathbf{\Omega})^{\gamma}},\quad \rho=2(1+\gamma)\lambda w_0.
\end{align*}
The derivation of \eqref{bound} is given in \ref{subsec:B1}. 
Under some regularity conditions, asymptotic properties of the approximate posterior distribution via the WBB were shown for a sufficiently large $n$ (see e.g., \cite{lyddon2019general, newton2021weighted}). 
To optimize the right-hand side of \eqref{bound}, we used an excellent algorithm proposed by \cite{friedman2008sparse}. The proposed weighted Bayesian bootstrap algorithm is summarized in Algorithm \ref{algo:MMWBB}. An important point of the algorithm is capable of parallel computation which is different from Markov chain Monte Carlo (MCMC) methods. The penalty parameter $\lambda$ is fixed as \cite{hirose2017robust} because the selection of $\lambda$ in the presence of outliers is not straightforward. As a remedy, we may be able to sample $\lambda$ from the posterior distribution in the similar way to \cite{hashimoto2020robust}. We attempt to estimate $\lambda$ from the data through weighted Bayesian bootstrap within Gibbs sampler in the Supplementary Materials.

\begin{algorithm*}[thbp] 
\caption{\bf--- Weighted Bayesian bootstrap via MM algorithm.}
\label{algo:MMWBB}
Set a tuning parameter $\lambda$, a threshold $\varepsilon'$ for convergence,  and an initial value $\mathbf{\Omega}^{(0)}$ of the MM algorithm (e.g. \cite{hirose2017robust}).
\begin{itemize}
\item[1] Generate a random vector $\bm{w}=(w_0,w_1,\dots,w_n)^\top\sim (n+1)\mathrm{Dirichlet}(1,\dots,1)$.
\item[2] Optimize the weighted objective function \eqref{BB-objective} using the following MM algorithm.
\begin{itemize}
\item[(i)] Compute $s_i^{*(t)}$ and $\mathbf{S}^{*(t)}$ based on $\mathbf{\Omega}^{(t)}$, and the calculate $\mathbf{\Omega}^{(t+1)}$ by optimizing the function \eqref{bound} (e.g. \cite{friedman2008sparse}).
\item[(ii)] If $\max_{i,j} |\omega_{ij}^{(t+1)}-\omega_{ij}^{(t)}|<\varepsilon'$, then stop the algorithm and set $\mathbf{\Omega}^{(t)}$ as the optimal value $\hat{\mathbf{\Omega}}$. If $\max_{i,j} |\omega_{ij}^{(t+1)}-\omega_{ij}^{(t)}|>\varepsilon'$, go to the next steps.
\item[(iii)] Set $t$ with $t+1$, and go back to step (i).
\end{itemize}
\item[3] Cycle steps 1 and 2, and get the $m$th posterior sample.
\end{itemize}
\end{algorithm*}

\subsection{Bayesian inference on graphical structures}
\label{subsec:2.5}

Since the original graphical lasso provides a sparse solution $\hat{\omega}_{ij}=0$ for $i\ne j$ by solving the optimization problem, we can directly estimate the dependence structure of the Gaussian graphical models. The Bayesian graphical lasso \citep{wang2012bayesian} cannot produce such a sparse solution because the posterior probability of the event $\{\omega_{ij}=0\}$ is zero due to the use of the continuous shrinkage priors. Spike-and-slab type priors might be useful to obtain an exact zero solution, but it is known that there are some computational issues. As a remedy, \cite{wang2012bayesian} proposed a variable selection method based on the posterior mean estimator of the partial correlation via the amount of shrinkage. Although the method seems to work reasonably well, the method needs to assume the prior distribution for non-zero $\omega_{ij}$, and \cite{wang2012bayesian} assumed the standard conjugate Wishart prior $W(3, \mathbf{I}_p)$.

In our algorithm (Algorithm \ref{algo:MMWBB}), we can obtain an element-wise sparse solution for $\mathbf{\Omega}=(\omega_{ij})$ in each iteration of the WBB algorithm, which is a mimic of the spike-and-slab strategy. Therefore, we consider a variable selection method via the proportion of non-zero $\omega_{ij}$ in the posterior sample (i.e. posterior probability of $\{\omega_{ij}\ne 0\}$). In our numerical experiments, however, we search a significant dependence via the criterion so-called median probability criterion (see e.g., \cite{berger2004optimal}) defined by
\begin{align*}
\mathrm{P}(|\omega_{ij}| < \varepsilon \mid \mathbf{Y}) \ge 0.5,
\end{align*}
where $\varepsilon>0$ is a threshold. In numerical experiments, the threshold $\varepsilon$ is set as $10^{-2}$ as a default choice, which is also adopted in the frequentist method (e.g. \cite{fan2009network}). 
In practice, the posterior probability is approximated using the posterior sample. 
Uncertainty quantification based on the method is illustrated in Section \ref{sec:5}.

\section{Comparison with existing methods}\label{sec:3}

We discuss three Bayesian Gaussian graphical models in terms of posterior robustness. The proofs of the propositions are given in \ref{sec:A}.

\subsection{Bayesian graphical lasso}\label{subsec:3.1}

The standard posterior distribution of $\mathbf{\Omega}$ under the Gaussian likelihood is regarded as one based on the Kullback--Leibler (KL) divergence. The posterior is defined by
\begin{align}
\pi_{\mathrm{KL}}(\mathbf{\Omega}\mid \mathbf{Y})=\frac{|\mathbf{\Omega}|^{n/2}\exp(-\sum_{i=1}^n \bm{y}_i^{\top}\mathbf{\Omega} \bm{y}_i/2)\pi(\mathbf{\Omega})}{\int |\mathbf{\Omega}|^{n/2}\exp(-\sum_{i=1}^n \bm{y}_i^{\top}\mathbf{\Omega} \bm{y}_i/2)\pi(\mathbf{\Omega})d\mathbf{\Omega}}.\label{KL-pos}
\end{align}
\cite{wang2012bayesian} proposed a Bayesian graphical lasso whose posterior distribution is defined by \eqref{KL-pos} with the Laplace prior \eqref{Lasso-prior}, and also provided an efficient block Gibbs sampler to calculate the corresponding posterior distribution. However, the posterior distribution does not satisfy the posterior robustness due to the Gaussian assumption. 

\begin{proposition}\label{KL-robust}
Assume that the standard posterior $\pi_{\mathrm{KL}}(\mathbf{\Omega}\mid \mathbf{Y})$ in \eqref{KL-pos} is proper for all observations. Then, the posterior robustness does not hold.
\end{proposition}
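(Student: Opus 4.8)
The plan is to prove the negation of the posterior robustness by exhibiting an explicit failure of the $L_1$ convergence in Definition~\ref{def1}. Concretely, I would assume $\mathcal{L}\neq\emptyset$ (otherwise there is nothing to prove, since $\mathcal{D}=\mathcal{D}^*$), pick one outlying index $i_0\in\mathcal{L}$ with some $b_{i_0 j}\neq 0$, and examine the ratio of the two posterior densities
\[
\frac{\pi_{\mathrm{KL}}(\mathbf{\Omega}\mid \mathcal{D})}{\pi_{\mathrm{KL}}(\mathbf{\Omega}\mid \mathcal{D}^*)}
=\frac{p_{\mathrm{KL}}(\mathcal{D}^*)}{p_{\mathrm{KL}}(\mathcal{D})}\,
|\mathbf{\Omega}|^{|\mathcal{L}|/2}\exp\!\left(-\frac{1}{2}\sum_{i\in\mathcal{L}}\bm{y}_i^{\top}\mathbf{\Omega}\,\bm{y}_i\right),
\]
where $p_{\mathrm{KL}}(\cdot)$ denotes the corresponding normalizing constants. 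The key structural difference from the $\gamma$-posterior is that the contribution of the outliers enters multiplicatively as $\exp(-\tfrac12\bm{y}_i^\top\mathbf{\Omega}\bm{y}_i)$ rather than being buried inside a sum $\sum_i \exp(\cdot)$ that is dominated by the non-outlying terms. As $z\to\infty$, for any fixed positive definite $\mathbf{\Omega}$ we have $\bm{y}_{i_0}^{\top}\mathbf{\Omega}\,\bm{y}_{i_0}\sim z^2 (\bm{b}_{i_0}^\top\mathbf{\Omega}\,\bm{b}_{i_0})\to\infty$, so this factor tends to $0$ pointwise on $M^+$.

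Next I would control the normalizing constants. The idea is to show that $p_{\mathrm{KL}}(\mathcal{D})\to 0$ as $z\to\infty$ (by dominated convergence, using that the integrand is bounded above by the $z=\infty$... more precisely by a $z$-independent integrable envelope obtained from $\mathcal{D}^*$ plus the already-assumed propriety), while $p_{\mathrm{KL}}(\mathcal{D}^*)$ is a fixed positive constant not depending on $z$. Hence the prefactor $p_{\mathrm{KL}}(\mathcal{D}^*)/p_{\mathrm{KL}}(\mathcal{D})\to\infty$. The competition is then between a prefactor blowing up and a pointwise-vanishing density factor, and one has to argue they do not reconcile to give convergence to $\pi_{\mathrm{KL}}(\mathbf{\Omega}\mid\mathcal{D}^*)$. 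The cleanest way is to show $\pi_{\mathrm{KL}}(\mathbf{\Omega}\mid\mathcal{D})$ concentrates near $\mathbf{\Omega}=\bm{0}$ (the boundary of $M^+$): since making $\mathbf{\Omega}$ small kills the exponential penalty from the outliers while $|\mathbf{\Omega}|^{n/2}$ and the prior remain integrable, the posterior mass escapes toward the degenerate boundary and therefore cannot converge in $L_1$ to the fixed proper limit $\pi_{\mathrm{KL}}(\mathbf{\Omega}\mid\mathcal{D}^*)$. Equivalently, for any compact $K\subset M^+$ bounded away from $\bm{0}$, $\int_K \pi_{\mathrm{KL}}(\mathbf{\Omega}\mid\mathcal{D})\,d\mathbf{\Omega}\to 0$, whereas $\int_K\pi_{\mathrm{KL}}(\mathbf{\Omega}\mid\mathcal{D}^*)\,d\mathbf{\Omega}>0$ for $K$ large enough, contradicting $L_1$ convergence.

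To make the ``escape to the boundary'' argument rigorous, I would fix a small $\eta>0$ and a large $R$, let $K_{\eta,R}=\{\mathbf{\Omega}\in M^+: \eta I \preceq \mathbf{\Omega},\ \|\mathbf{\Omega}\|\le R\}$, and estimate $p_{\mathrm{KL}}(\mathcal{D})$ by splitting the integral over $K_{\eta,R}$ and its complement. On $K_{\eta,R}$ the outlier factor is at most $\exp(-\tfrac12 z^2\eta\sum_{i\in\mathcal{L}}\|\bm{b}_i\|^2 + O(z))$, which is $o(1)$ uniformly, so that piece contributes negligibly to $p_{\mathrm{KL}}(\mathcal{D})$ relative to the piece near the boundary; meanwhile the near-boundary piece is bounded below by a positive quantity independent of $z$ (e.g. restrict to $\mathbf{\Omega}=\delta I$ with $\delta$ tiny enough that $\exp(-\tfrac12\delta\sum_i\|\bm{y}_i\|^2)\ge 1/2$, which is possible for each fixed $z$ but — and this is the subtle point — one must choose $\delta=\delta(z)\to 0$). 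Carrying this through shows that the fraction of the mass of $\pi_{\mathrm{KL}}(\mathbf{\Omega}\mid\mathcal{D})$ lying in any fixed $K_{\eta,R}$ tends to $0$. The main obstacle is precisely this bookkeeping with the $z$-dependent boundary layer: one must quantify how fast $\delta(z)\to 0$ so that the outlier exponential stays bounded while $|\delta(z) I|^{n/2}=\delta(z)^{np/2}$ does not decay faster than $p_{\mathrm{KL}}(\mathcal D)$ itself — i.e., getting matching upper and lower rates in $z$ for $p_{\mathrm{KL}}(\mathcal D)$. Once those rates are pinned down, the conclusion that $\int|\pi_{\mathrm{KL}}(\mathbf{\Omega}\mid\mathcal D)-\pi_{\mathrm{KL}}(\mathbf{\Omega}\mid\mathcal D^*)|\,d\mathbf{\Omega}\not\to 0$ follows immediately, since the two measures live on essentially disjoint regions of $M^+$ in the limit.
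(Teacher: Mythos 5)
Your route is genuinely different from the paper's, and it is sound in outline. The paper's proof works pointwise with the density ratio: it bounds $p(\mathcal{D})\le C\,p(\mathcal{D}^*)$, concludes that $\pi_{\mathrm{KL}}(\mathbf{\Omega}\mid\mathcal{D})/\pi_{\mathrm{KL}}(\mathbf{\Omega}\mid\mathcal{D}^*)\to 0$ for each fixed $\mathbf{\Omega}$, and then runs the same limit-under-the-integral argument as in Theorem \ref{gamma-robust} to get $\lim_z\int|\pi_{\mathrm{KL}}(\mathbf{\Omega}\mid\mathcal{D})-\pi_{\mathrm{KL}}(\mathbf{\Omega}\mid\mathcal{D}^*)|\,d\mathbf{\Omega}=1\neq 0$. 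You instead prove that the normalizing constant $p_{\mathrm{KL}}(\mathcal{D})$ itself vanishes (your dominated-convergence step is easily justified under the ``proper for all observations'' hypothesis: bound the outlier exponentials by $1$, so the integrand is dominated by the normalizer of the dataset in which the outlying rows are replaced by $\bm{0}$), observe that the ratio is then an indeterminate $\infty\cdot 0$ form, and resolve non-convergence by an escape-of-mass argument: the posterior with outliers puts vanishing mass on any fixed $K_{\eta,R}$, while $\pi_{\mathrm{KL}}(\cdot\mid\mathcal{D}^*)$ is a fixed proper density that charges some such set, so the $L_1$ distance stays bounded away from $0$. What your approach buys is that it never needs a uniform-in-$\mathbf{\Omega}$ bound on the outlier likelihood factors and never interchanges limit and integral for the ratio; what it costs is a quantitative comparison of rates in $z$ that the paper's argument does not attempt.

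That rate comparison is exactly the step you leave unfinished, and it does close, so you should carry it out explicitly. On $K_{\eta,R}$ the numerator integral is at most a constant times $\exp\bigl(-\tfrac12\eta z^2\sum_{i\in\mathcal{L}}\|\bm{b}_i\|^2+O(z)\bigr)$. For the lower bound on $p_{\mathrm{KL}}(\mathcal{D})$, take the layer $\{\mathbf{\Omega}\in M^+:\ \tfrac12\delta(z)\le\lambda_j(\mathbf{\Omega})\le\delta(z)\ \forall j\}$ with $\delta(z)=z^{-2}$: there the exponent $\tfrac12\sum_{i=1}^n\bm{y}_i^\top\mathbf{\Omega}\bm{y}_i$ is bounded by a constant, $|\mathbf{\Omega}|^{n/2}\ge (\delta(z)/2)^{np/2}$, the Lebesgue volume of the layer is of order $\delta(z)^{p(p+1)/2}$, and the graphical lasso prior density \eqref{Lasso-prior} is bounded below near the zero matrix; hence $p_{\mathrm{KL}}(\mathcal{D})$ decays at most polynomially in $z$, so the posterior mass of $K_{\eta,R}$ is $O\bigl(z^{N}e^{-c\eta z^2}\bigr)\to 0$ and your conclusion follows. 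Two small caveats: (i) this lower bound uses that the prior charges a neighborhood of the (nearly singular) small-$\mathbf{\Omega}$ region, which holds for the Laplace-type prior the proposition is really about but is not automatic for the completely general $\pi(\mathbf{\Omega})$ appearing in \eqref{KL-pos}; (ii) your preliminary claim that the prefactor $p_{\mathrm{KL}}(\mathcal{D}^*)/p_{\mathrm{KL}}(\mathcal{D})\to\infty$ is not actually needed for the final argument, which only requires the exponential-versus-polynomial rate gap above, so you may as well drop it and keep the proof to the compact-set estimate.
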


\subsection{Bayesian $t$-graphical lasso}\label{subsec:3.2}

As we mentioned in Section \ref{sec:2}, heavy-tailed probability distributions are often used in classical robust statistics. 
We employ the multivariate $t$-distribution instead of the multivariate Gaussian distribution. \cite{finegold2011robust} proposed the graphical model based on the multivariate $t$-distribution. The corresponding posterior distribution is defined by
\begin{align}
\pi_{t}(\mathbf{\Omega}\mid \mathbf{Y})=\frac{ \prod_{i=1}^n|\mathbf{\Omega}|^{1/2}\left(1+\frac{1}{\nu-2}\bm{y}_i^{\top}\mathbf{\Omega} \bm{y}_i\right)^{-(\nu+p)}\pi(\mathbf{\Omega})}{\int \prod_{i=1}^n|\mathbf{\Omega}|^{1/2}\left(1+\frac{1}{\nu-2}\bm{y}_i^{\top}\mathbf{\Omega} \bm{y}_i\right)^{-(\nu+p)} \pi(\mathbf{\Omega})d\mathbf{\Omega}},\label{t-pos}
\end{align}
where $\nu>2$ is a degree of freedom and $\mathbf{\Omega}$ is a precision matrix. Assuming the graphical lasso prior in \eqref{t-pos}, we obtain the Bayesian $t$-graphical lasso model. Since the multivariate $t$-distribution can be represented as scale mixtures of normal distribution, we can construct an block Gibbs sampler in a similar way to \cite{wang2012bayesian}. Unfortunately, we have the following result on the posterior distribution \eqref{t-pos}. 

\begin{proposition}\label{t-robust}
Assume that the posterior $\pi_{t}(\mathbf{\Omega}\mid \mathbf{Y})$ in \eqref{t-pos} is proper for all observations. Then, the posterior robustness does not hold.
\end{proposition}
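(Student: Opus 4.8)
The plan is to exploit the fact that the multivariate $t$-density has only polynomially decaying tails: unlike the $\gamma$-posterior of Theorem \ref{gamma-robust}, where the sum-of-exponentials structure lets the outlying terms be swamped without leaving a trace, here an extremely large observation leaves behind an $\mathbf{\Omega}$-dependent factor in the likelihood that cannot be absorbed into the normalizing constant. I assume $\mathcal{L}\neq\emptyset$, since otherwise $\mathcal{D}=\mathcal{D}^*$ and robustness is vacuous. First I would write $R_z(\mathbf{\Omega}):=\pi_t(\mathbf{\Omega}\mid\mathcal{D})/\pi_t(\mathbf{\Omega}\mid\mathcal{D}^*)$, noting that from \eqref{t-pos},
\begin{align*}
R_z(\mathbf{\Omega})=\frac{m(\mathcal{D}^*)}{m(\mathcal{D})}\,|\mathbf{\Omega}|^{|\mathcal{L}|/2}\prod_{i\in\mathcal{L}}\left(1+\frac{1}{\nu-2}\bm{y}_i^{\top}\mathbf{\Omega}\bm{y}_i\right)^{-(\nu+p)},
\end{align*}
where $m(\mathcal{D})$ and $m(\mathcal{D}^*)$ are the normalizing constants of \eqref{t-pos} at $\mathcal{D}$ and at the reduced data set $\mathcal{D}^*$; both are finite and positive by the propriety assumption (applied in particular to $\mathcal{D}^*$), and the factors indexed by $\mathcal{K}$ cancel since those $\bm{y}_i$ do not depend on $z$.

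Next I would analyse the large-$z$ behaviour of the remaining product for a fixed $\mathbf{\Omega}\in M^+$. Writing $\bm{y}_i=\bm{a}_i+z\bm{b}_i$ with $\bm{b}_i\neq\bm{0}$ for $i\in\mathcal{L}$, we have $\bm{y}_i^{\top}\mathbf{\Omega}\bm{y}_i=z^2\bm{b}_i^{\top}\mathbf{\Omega}\bm{b}_i+O(z)$ with $\bm{b}_i^{\top}\mathbf{\Omega}\bm{b}_i>0$ by positive definiteness, whence $z^{2(\nu+p)}(1+\frac{1}{\nu-2}\bm{y}_i^{\top}\mathbf{\Omega}\bm{y}_i)^{-(\nu+p)}\to(\bm{b}_i^{\top}\mathbf{\Omega}\bm{b}_i/(\nu-2))^{-(\nu+p)}$. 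Consequently, after multiplying numerator and denominator of $R_z(\mathbf{\Omega})$ by $z^{2(\nu+p)|\mathcal{L}|}$, its $\mathbf{\Omega}$-dependent part converges pointwise to
\begin{align*}
\psi(\mathbf{\Omega}):=|\mathbf{\Omega}|^{|\mathcal{L}|/2}\prod_{i\in\mathcal{L}}\left(\frac{\bm{b}_i^{\top}\mathbf{\Omega}\bm{b}_i}{\nu-2}\right)^{-(\nu+p)},
\end{align*}
which is continuous and strictly positive on $M^+$ but \emph{not} constant, as the scaling identity $\psi(s\mathbf{\Omega})=s^{-|\mathcal{L}|(\nu+p/2)}\psi(\mathbf{\Omega})$ (for $s>0$) has nonzero exponent.

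Then I would derive a contradiction. If posterior robustness held, $\pi_t(\cdot\mid\mathcal{D})\to\pi_t(\cdot\mid\mathcal{D}^*)$ in $L_1$, so along a subsequence $z_k\uparrow\infty$ we have $\pi_t(\mathbf{\Omega}\mid\mathcal{D})\to\pi_t(\mathbf{\Omega}\mid\mathcal{D}^*)$ for every $\mathbf{\Omega}$ in a conull set $A\subset M^+$; since $\pi_t(\cdot\mid\mathcal{D}^*)$ is strictly positive on $M^+$, it follows that $R_{z_k}(\mathbf{\Omega})\to1$ for every $\mathbf{\Omega}\in A$. By Fubini I may pick $\mathbf{\Omega}_0\in A$ with $s\mathbf{\Omega}_0\in A$ for almost every $s>0$, and then a scalar $s\neq1$ with $s\mathbf{\Omega}_0\in A$. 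In the quotient $R_{z_k}(s\mathbf{\Omega}_0)/R_{z_k}(\mathbf{\Omega}_0)$ the constant $m(\mathcal{D}^*)/m(\mathcal{D})$ cancels, and multiplying top and bottom by $z_k^{2(\nu+p)|\mathcal{L}|}$ and passing to the limit gives $R_{z_k}(s\mathbf{\Omega}_0)/R_{z_k}(\mathbf{\Omega}_0)\to\psi(s\mathbf{\Omega}_0)/\psi(\mathbf{\Omega}_0)=s^{-|\mathcal{L}|(\nu+p/2)}\neq1$, contradicting $R_{z_k}(s\mathbf{\Omega}_0)\to1$ and $R_{z_k}(\mathbf{\Omega}_0)\to1$. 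Hence the posterior robustness does not hold.

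The main obstacle is not the asymptotic analysis but the measure-theoretic bookkeeping: one must pass from $L_1$ convergence to a pointwise statement at two deliberately chosen precision matrices while keeping the unknown ratio $m(\mathcal{D}^*)/m(\mathcal{D})$ from interfering, which is precisely what the scaling trick $\mathbf{\Omega}_0\mapsto s\mathbf{\Omega}_0$ accomplishes. A more direct route would be to prove that $z^{2(\nu+p)|\mathcal{L}|}m(\mathcal{D})$ converges (to a finite positive limit or to $+\infty$) and read off the pointwise limit of $\pi_t(\cdot\mid\mathcal{D})$ outright, which in either case differs from $\pi_t(\cdot\mid\mathcal{D}^*)$; but justifying dominated convergence near the boundary of $M^+$, where the residual quadratic forms $\bm{b}_i^{\top}\mathbf{\Omega}\bm{b}_i$ can degenerate, is delicate, so the density-ratio argument is cleaner. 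The same mechanism underlies Proposition \ref{KL-robust}: for the Gaussian likelihood the outlying factors $\exp(-z^2\bm{b}_i^{\top}\mathbf{\Omega}\bm{b}_i/2+\cdots)$ likewise leave an $\mathbf{\Omega}$-dependent imprint that prevents convergence to $\pi_{\mathrm{KL}}(\cdot\mid\mathcal{D}^*)$.
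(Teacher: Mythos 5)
Your proof is correct, but it takes a genuinely different route from the paper's. The paper works with the same density ratio $\pi_t(\mathbf{\Omega}\mid\mathcal{D})/\pi_t(\mathbf{\Omega}\mid\mathcal{D}^*)$, but controls the normalizing constants explicitly: it bounds the outlier factors $\prod_{i\in\mathcal{L}}|\mathbf{\Omega}|^{1/2}\bigl[1+\bm{y}_i^{\top}\mathbf{\Omega}\bm{y}_i/(\nu-2)\bigr]^{-(\nu+p)}$ by a constant to get $p(\mathcal{D})\le C\,p(\mathcal{D}^*)$, concludes that the ratio tends to $0$ pointwise as $z\to\infty$, and then argues as in Theorem \ref{gamma-robust} that the $L_1$ distance does not vanish. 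You instead never touch the marginal likelihoods: you extract the exact polynomial decay rate $z^{-2(\nu+p)|\mathcal{L}|}$, identify the residual $\mathbf{\Omega}$-dependent profile $\psi(\mathbf{\Omega})$, and derive a contradiction with $L_1$ convergence by comparing the ratio at two points $\mathbf{\Omega}_0$ and $s\mathbf{\Omega}_0$ on the same ray, where the unknown constant $m(\mathcal{D}^*)/m(\mathcal{D})$ cancels and the scaling exponent $-|\mathcal{L}|(\nu+p/2)\neq 0$ forces $\psi(s\mathbf{\Omega}_0)/\psi(\mathbf{\Omega}_0)\neq 1$. What your approach buys is that it avoids the paper's uniform bound on the outlier factors, which is actually delicate because $|\mathbf{\Omega}|^{1/2}$ can blow up in directions of $M^+$ along which the quadratic forms $\bm{y}_i^{\top}\mathbf{\Omega}\bm{y}_i$ stay bounded; the cost is the extra measure-theoretic bookkeeping (subsequence extraction from $L_1$ convergence plus the Fubini selection of $\mathbf{\Omega}_0$ and $s$), and, like the paper, you implicitly need $\pi_t(\cdot\mid\mathcal{D}^*)$, hence the prior, to be positive almost everywhere on $M^+$ (true for the Laplace-type priors considered here) so that the ratio is well defined on a conull set closed enough under scaling.
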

The result is consistent with \cite{desgagne2015robustness}. They showed that the $t$-distribution does not lead to posterior robustness for joint estimation of location and scale parameters. Furthermore, the variance of the estimator based on $t$-graphical lasso tends to be large because of the heaviness of the tail (see also \cite{hirose2017robust}).

\subsection{Bayesian graphical lasso via density-power divergence}\label{subsec:3.3}

Although our main proposal is based on the $\gamma$-divergence, the density-power divergence is also known as the robust divergence \citep{basu1998robust}. \cite{sun2012robust} proposed a robust graphical lasso based on the density-power divergence, while \cite{ghosh2016robust} proposed a robustified posterior distribution based on the density-power divergence. Since \cite{ghosh2016robust} dealt with only univariate observations, the Bayesian formulation of graphical lasso via the density-power divergence has not been considered. In Gaussian graphical model, the density-power posterior $\pi_{\mathrm{DP}}(\mathbf{\Omega}\mid \mathbf{Y})$ under a prior $\pi(\mathbf{\Omega})$ is defined by
\begin{align}
\pi_{\mathrm{DP}}(\mathbf{\Omega}\mid \mathbf{Y})=\frac{\exp\left(Q_n^{(\alpha)}(\mathbf{\Omega})\right)\pi(\mathbf{\Omega})}{\int \exp\left(Q_n^{(\alpha)}(\mathbf{\Omega})\right)\pi(\mathbf{\Omega})d\mathbf{\Omega}},\label{DP-pos}
\end{align}
where 
\begin{align*}
Q_n^{(\alpha)}(\mathbf{\Omega})&=\frac{1}{\alpha}\sum_{i=1}^nf(\bm{y}_i\mid \mathbf{\Omega})^{\alpha}-\frac{n}{1+\alpha}\int f(\bm{y}\mid\mathbf{\Omega})^{1+\alpha}d\bm{y}\notag\\
&=(2\pi)^{-\alpha p/2}|\mathbf{\Omega}|^{\alpha/2}\left[\frac{1}{\alpha}\sum_{i=1}^n\exp\left(-\frac{\alpha}{2}\bm{y}_i^{\top}\mathbf{\Omega} \bm{y}_i\right)-n(1+\alpha)^{1-\alpha/2}\right],
\end{align*}
and $\alpha>0$ is a tuning parameter which plays the same role as $\gamma>0$ in the $\gamma$-divergence. Note that the density-power posterior also converges to the standard posterior \eqref{KL-pos} as $\alpha\to0$.

\begin{proposition}\label{DP-robust}
Assume that the posterior $\pi_{\mathrm{DP}}(\mathbf{\Omega}\mid \mathbf{Y})$ in \eqref{DP-pos} is proper for all observations. Then, the posterior robustness does not hold.
\end{proposition}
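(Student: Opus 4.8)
The plan is to show that, unlike the MAP $\gamma$-posterior, the density-power posterior fails Definition \ref{def1} because the quantity $Q_n^{(\alpha)}(\mathbf{\Omega})$ does not stabilize as $z\to\infty$: the offending term is the subtracted constant $-n(1+\alpha)^{1-\alpha/2}(2\pi)^{-\alpha p/2}|\mathbf{\Omega}|^{\alpha/2}$, which depends on the \emph{total} sample size $n$ rather than on the non-outlying count $|\mathcal{K}|$. Concretely, as $z\to\infty$ each outlier term $\exp(-\tfrac{\alpha}{2}\bm{y}_i^{\top}\mathbf{\Omega}\bm{y}_i)\to 0$ for every fixed $\mathbf{\Omega}\in M^+$ (whenever some $b_{ij}\ne 0$), so $\tfrac{1}{\alpha}\sum_{i=1}^n\exp(-\tfrac{\alpha}{2}\bm{y}_i^{\top}\mathbf{\Omega}\bm{y}_i)\to\tfrac{1}{\alpha}\sum_{i\in\mathcal{K}}\exp(-\tfrac{\alpha}{2}\bm{y}_i^{\top}\mathbf{\Omega}\bm{y}_i)$, and hence $Q_n^{(\alpha)}(\mathbf{\Omega})\to Q_{|\mathcal{K}|}^{(\alpha)}(\mathbf{\Omega}) - \big(n-|\mathcal{K}|\big)(1+\alpha)^{1-\alpha/2}(2\pi)^{-\alpha p/2}|\mathbf{\Omega}|^{\alpha/2}$, which is \emph{not} $Q_{|\mathcal{K}|}^{(\alpha)}(\mathbf{\Omega})$ as long as $\mathcal{L}\ne\emptyset$.

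The key steps, in order, are as follows. First, mirror the proof of Theorem \ref{gamma-robust}: write the posterior density ratio $\pi_{\mathrm{DP}}(\mathbf{\Omega}\mid\mathcal{D})/\pi_{\mathrm{DP}}(\mathbf{\Omega}\mid\mathcal{D}^*)$ in terms of $\exp(Q_n^{(\alpha)}(\mathbf{\Omega}))$, $\exp(Q_{|\mathcal{K}|}^{(\alpha)}(\mathbf{\Omega}))$ and their normalizing constants. Second, take the pointwise limit $z\to\infty$ of $\exp(Q_n^{(\alpha)}(\mathbf{\Omega}))$ using the observation above; this limit equals $\exp(Q_{|\mathcal{K}|}^{(\alpha)}(\mathbf{\Omega}))\cdot\exp\!\big(-(n-|\mathcal{K}|)(1+\alpha)^{1-\alpha/2}(2\pi)^{-\alpha p/2}|\mathbf{\Omega}|^{\alpha/2}\big)$. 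Third, invoke dominated convergence (justified by the assumed propriety, since $\exp(Q_n^{(\alpha)}(\mathbf{\Omega}))\le \exp(Q_{\mathcal D^\ast}^{(\alpha)}(\mathbf{\Omega}))$ up to the bounded extra factor, noting the subtracted term only makes the integrand smaller) to pass the limit inside the normalizing integral. Fourth, conclude that $\lim_{z\to\infty}\pi_{\mathrm{DP}}(\mathbf{\Omega}\mid\mathcal{D})$ is proportional to $\exp(Q_{|\mathcal{K}|}^{(\alpha)}(\mathbf{\Omega}))\exp(-c|\mathbf{\Omega}|^{\alpha/2})\pi(\mathbf{\Omega})$ with $c=(n-|\mathcal{K}|)(1+\alpha)^{1-\alpha/2}(2\pi)^{-\alpha p/2}>0$, and that this is a different probability density from $\pi_{\mathrm{DP}}(\mathbf{\Omega}\mid\mathcal{D}^*)\propto\exp(Q_{|\mathcal{K}|}^{(\alpha)}(\mathbf{\Omega}))\pi(\mathbf{\Omega})$ because the factor $\exp(-c|\mathbf{\Omega}|^{\alpha/2})$ is non-constant in $\mathbf{\Omega}$ on $M^+$. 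Therefore the $L_1$ distance does not vanish and posterior robustness fails.

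The main obstacle is not the limit computation but making the final non-equality airtight: I must argue that $\exp(-c|\mathbf{\Omega}|^{\alpha/2})$ genuinely tilts the density, i.e. that the two limiting densities cannot coincide. This is clear since $|\mathbf{\Omega}|$ is non-constant on $M^+$ and both candidate densities are continuous and strictly positive there, so equality of the normalized densities would force $\exp(-c|\mathbf{\Omega}|^{\alpha/2})$ to be constant, a contradiction once $c>0$ — which holds precisely because $\mathcal{L}\ne\emptyset$. A minor technical point to handle carefully is the dominating function for the dominated-convergence step: along $z\to\infty$ the outlier exponential terms are bounded by $1$, so $\exp(Q_n^{(\alpha)}(\mathbf{\Omega}))\le \exp\!\big((2\pi)^{-\alpha p/2}|\mathbf{\Omega}|^{\alpha/2}[\tfrac{n}{\alpha}]\big)$ uniformly in $z$, but this crude bound may not be $\pi(\mathbf{\Omega})$-integrable; instead one should dominate by $\exp(Q_n^{(\alpha)}(\mathbf{\Omega}))$ evaluated at the non-outlier part plus control of the outlier part — the cleanest route is to note $Q_n^{(\alpha)}(\mathbf{\Omega})\le Q_{|\mathcal K|}^{(\alpha)}(\mathbf{\Omega}) - c|\mathbf{\Omega}|^{\alpha/2}+\tfrac{(n-|\mathcal K|)}{\alpha}(2\pi)^{-\alpha p/2}|\mathbf{\Omega}|^{\alpha/2}$ and absorb the last term, then appeal to the assumed propriety of $\pi_{\mathrm{DP}}(\cdot\mid\mathcal D)$ itself (which already bounds such integrals) for the dominating function. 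Since $\pi_{\mathrm{DP}}(\mathbf{\Omega}\mid\mathcal{D})$ is assumed proper for all observations, including the limiting configuration, this suffices.
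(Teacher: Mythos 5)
Your proposal is correct and follows essentially the same route as the paper's proof: write the ratio $\pi_{\mathrm{DP}}(\mathbf{\Omega}\mid\mathcal{D})/\pi_{\mathrm{DP}}(\mathbf{\Omega}\mid\mathcal{D}^*)$, let the outlier exponentials vanish as $z\to\infty$ so that the surviving discrepancy is the factor $\exp\!\left[-(2\pi)^{-\alpha p/2}(n-|\mathcal{K}|)(1+\alpha)^{1-\alpha/2}|\mathbf{\Omega}|^{\alpha/2}\right]$, pass the limit through the normalizing constant by dominated convergence, and conclude the ratio does not tend to $1$. Your added remarks — making explicit why a non-constant tilt rules out $L_1$ convergence of the densities, and flagging the need for a genuine dominating function in the DCT step — are refinements of points the paper leaves implicit, not a different argument.
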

In general, it is known that the minimum density-power divergence estimate does not work well for estimating the scale (or variance) parameter in the univariate case (see, e.g., \cite{fujisawa2008robust, nakagawa2020robust}). Hence, the result in Proposition \ref{DP-robust} is consistent with such previous observations.

\section{Simulation studies}\label{sec:4}

In this section, we illustrate the performance of the proposed model through some numerical experiments. 

\subsection{Simulation setting}\label{subsec:4.1}
The following three data-generating processes were considered: 
\begin{itemize}
\item[(a)] $\mathcal{N}_p(\bm{0},\mathbf{\Omega}^{-1})$,
\item[(b)] $(1-\varepsilon)\mathcal{N}_p(\bm{0},\mathbf{\Omega}^{-1})+\varepsilon \mathcal{N}_p(\bm{0}, 30\mathbf{I}_p)$,
\item[(c)] $(1-\varepsilon)\mathcal{N}_p(\bm{0},\mathbf{\Omega}^{-1})+\varepsilon \mathcal{N}_p(\eta \bm{1}^{(3)}, \mathbf{I}_p)$,
\end{itemize}
where $\varepsilon \in [0,1)$ is the contamination ratio, $\mathbf{I}_p$ is the identity matrix, and $\bm{1}^{(3)}=(1,1,1,0,\dots,0)^{\top}$ is a $p$-dimensional vector whose first 3 elements are 0 and the other $p-3$ elements are 0. Note that the model (a) has no outliers in the dataset. Models (b) and (c) generate outliers from distributions with symmetric large variance and partially large mean. Such scenarios were also considered in \cite{hirose2017robust}. For model (c), the larger the $\eta$, the larger the outliers. We set $\eta=5,10,20$, $\varepsilon=0.1$, and $p=12$. 
We considered the following true sparse precision matrices:
\begin{itemize}
    \item[(A)] The precision matrix dealt with in Subsection 5.2 in \cite{wang2015scaling} (for details, see \ref{subsec:C1}).
    \item[(B)] The AR(2) structure,  where $\omega_{ii}=1$, $\omega_{i,i-1}=\omega_{i-1,1}=0.5$, and $\omega_{i,i-2}=\omega_{i-2,i}=0.25$ for $i=1,\dots, p$ (see also \cite{wang2012bayesian}).
\end{itemize}
For example, if the data-generating process is (a) and the true precision matrix is (A), we use the notation like (a-A) for simplicity. 
For each scenario, the sample size was set as $n=200$. We compared the following methods:
\begin{itemize}
\item BR: The proposed robust Bayesian graphical lasso via weighted Bayesian bootstrap. We generated 6000 posterior samples via the WBB algorithm.
\item BT: Robust Bayesian graphical lasso under multivariate $t$-distribution in which the degree of freedom is 3. The Gibbs sampler can be directly derived by using \cite{wang2012bayesian}'s algorithm and it is given in \ref{subsec:B2}. We generated 6000 posterior samples after discarding the first 4000 samples as burn-in.
\item BG: (Non-robust) Bayesian graphical lasso proposed by \citep{wang2012bayesian}, which is based on Gaussian likelihood. The Gibbs sampler is implemented using {\tt R} package {\tt BayesianGLasso}. We generated 6000 posterior samples after discarding the first 4000 samples as burn-in.
\item FR: Frequentist robust graphical lasso based on the $\gamma$-divergence by \cite{hirose2017robust}. The MM algorithm is applied to this method as the BR method and the method can also be implemented using their {\tt R} package {\tt rsggm}.
\item FG: Frequentist (non-robust) graphical lasso proposed by \cite{friedman2008sparse}, which is based on Gaussian likelihood. The method can be implemented using the {\tt R} package {\tt glasso} or {\tt glassoFast}.
\end{itemize}
We consider five values on tuning parameter as $\lambda \in \{\lambda_{\mathrm{min}}\times i\mid i=1,\dots,5 \}$ in the BR, FR, and FG methods, where $\lambda_{\mathrm{min}}=0.02$ for the BR and FR methods, and $\lambda_{\mathrm{min}}=0.04$ for the FG method. The BR, FR, and FG methods for five tuning parameters are denoted by BR$i$, FR$i$, FG$i$ ($i=1,\dots,5$) for short. Note that we can estimate $\lambda$ through the Gibbs sampler in the BT and BG methods. 
For BR and FR, we set $\gamma=0.05$ and $0.1$. To evaluate the performance, we calculate the square root of the mean squared error (RMSE), the average length of 95\% credible interval (AL), and the coverage probability of 95\% credible interval (CP) under 100 repetitions:
\begin{align*}
\mathrm{RMSE}&=\left\{\frac{1}{(p-1)p/2}\sum_{j=1}^p\sum_{i>j} (\omega_{ij}-\hat{\omega}_{ij})^2 \right\}^{1/2},\\ 
\mathrm{AL}&=\frac{1}{(p-1)p/2}\sum_{j=1}^p\sum_{i>j} \hat{\omega}_{ij}^{(97.5)}-\hat{\omega}_{ij}^{(2.5)},\\
\mathrm{CP}&=\frac{1}{(p-1)p/2}\sum_{j=1}^p\sum_{i>j} I(\hat{\omega}_{ij}^{(2.5)}\leq \omega_{ij} \leq \hat{\omega}_{ij}^{(97.5)}),
\end{align*}
where $\hat{\omega}_{ij}$ is the point estimate of $\omega_{ij}$, and $\omega_{ij}^{(\alpha)}$ is the $100\alpha$\% posterior quantiles of $\omega_{ij}$. Note that the AL and CP are only reported for Bayesian methods, and the point estimates for Bayesian methods are the mean of posterior samples. For the BR, FR and FG methods, we calculate true and false positive rates (TPR/FPR) and false discovery rate (FDR) for each tuning parameter $\lambda$. The criterion for determining whether an element is 0 or not for the BR method is given in Section \ref{subsec:2.5}.

\subsection{Simulation results}\label{subsec:4.2}
We show examples of simulated posterior distributions of $\omega_{2,5}$ in Figure \ref{oneshot-posterior-simulation}. These posterior distributions are based on the data generated from the scenario (c-B). It is observed that the proposed $\gamma$-posterior distributions for $\gamma=0.05$ and $0.1$ hold the posterior robustness under large outliers such as $\eta=20$, whlie the BG posterior distribution in the right panel is affected by outliers. 
The results of Monte Carlo simulations for all scenarios are summarized in Figure~\ref{sim-RMSE}, Tables~\ref{AL-CP} and \ref{sim-TPR-FPR}. Note that we summarize only the case of $\gamma=0.1$ and the rest of the results are summarized in \ref{subsec:C2}. From the RMSE for scenario (a), almost all methods except for the BT method work well, although the non-robust BG and FG methods have the smallest values. The robust methods performed the best in terms of point estimates in most cases. In particular, scenario (b) is the scenario in which there is a marked difference between robust and non-robust methods. For all scenarios, the BT method is the worst because the estimated matrix does not correspond to the Gaussian precision matrix. For scenario (c), the boxplots and the median lines for the BR and FR methods move down as $\eta$ increases due to robustness under extremely large outliers. In terms of uncertainty quantification, the proposed BR methods have stable CP values for most cases. In particular, as seen in (b-B), the BG method gives the worst values less than 50\% while the BR method gives the same values as the other scenario. For the BR method, the AL and CP are smaller as the tuning parameter is large. As the magnitude of outliers is higher, the CP values get smaller because of the posterior robustness. The TPR, FPR, and FDR are reported in Table~\ref{sim-TPR-FPR}. The TPR and FPR for all methods decreased as $\lambda$ increased due to stronger shrinkage. It seems that the proposed criterion for shrinkage works well for all scenarios. In particular, the matrix (B) gives clear differences between the robust and non-robust methods as the results of RMSE. Compared to the BR and FG methods, the FG method performed much worse, especially in scenario (b). We provide additional simulations including different dimensions and scenarios, such as small-world graphs and scale-free networks, in the Supplementary Materials.

\begin{figure}[t]
\begin{center}
\includegraphics[width=\linewidth]{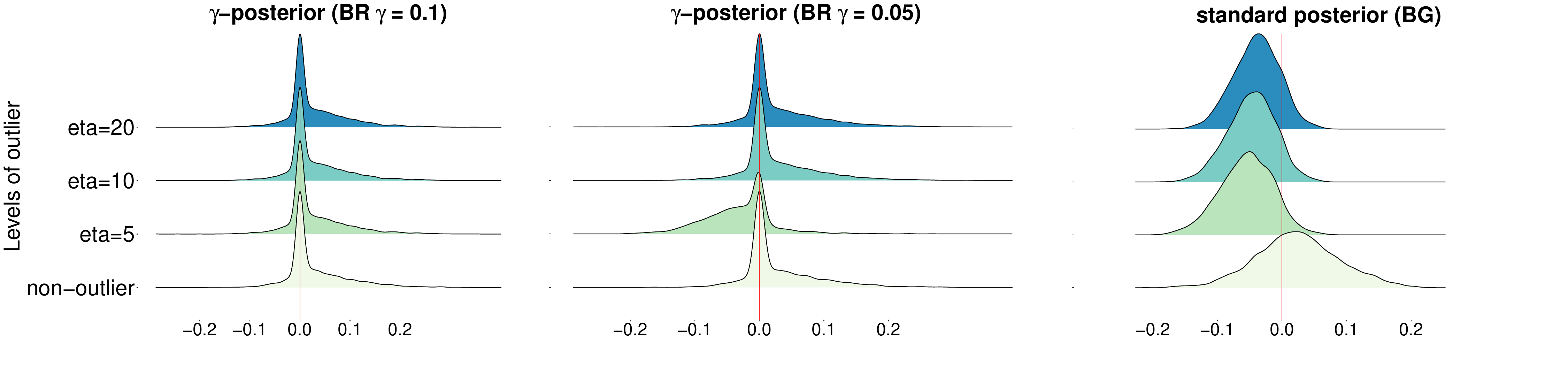}
\caption{One-shot simulation result for the three posterior distributions of $\omega_{2,5}$ in the scenario (c-B).}
\label{oneshot-posterior-simulation}
\end{center}
\end{figure}

\begin{figure}[htbp]
\begin{center}
\includegraphics[width=\linewidth]{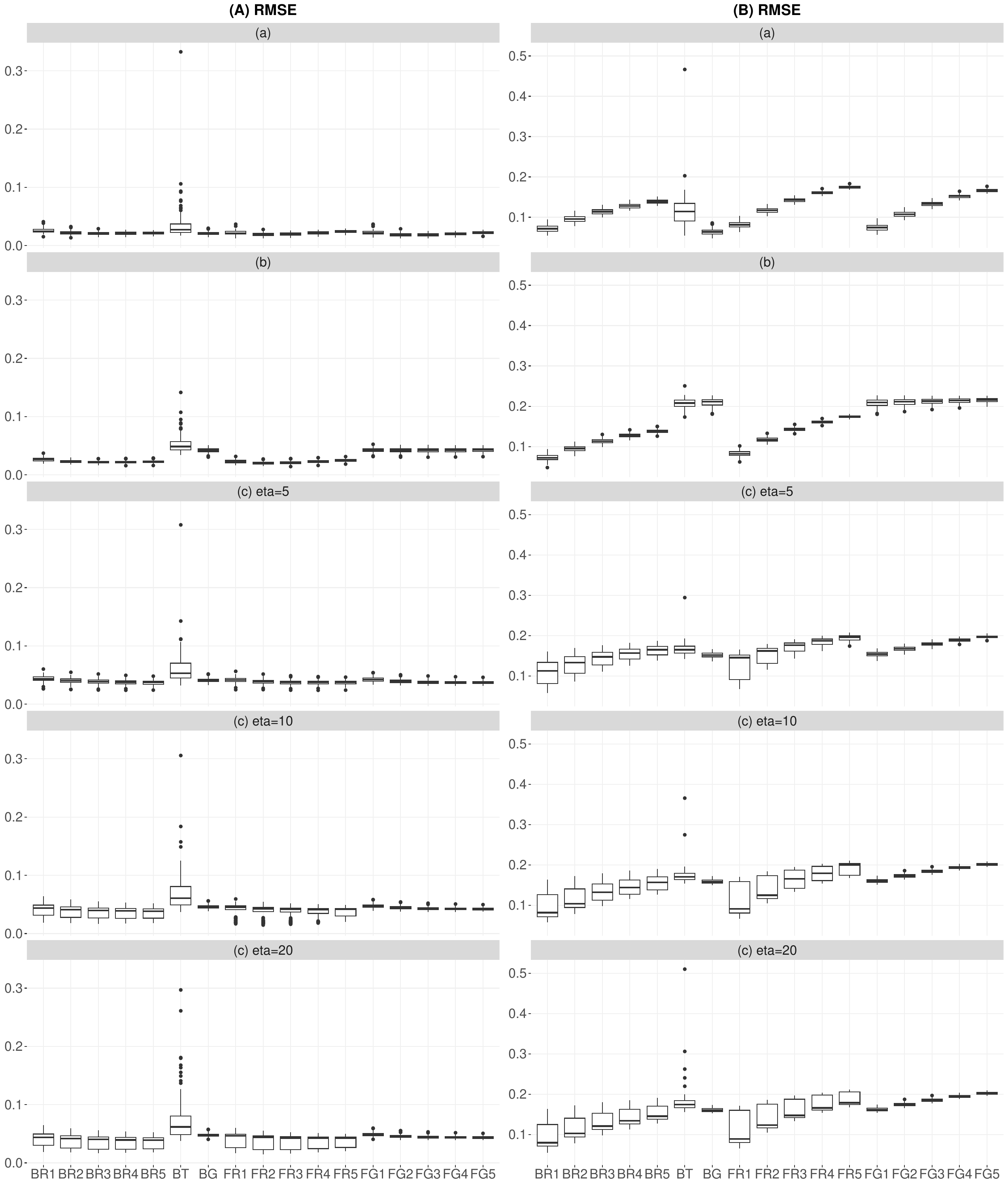}
\caption{The boxplots of RMSE based on 100 repetitions for all scenarios are summarized. The left and right panels correspond to the results for (A) and (B), respectively, and the results for the five data-generating processes are summarized from top to bottom.}
\label{sim-RMSE}
\end{center}
\end{figure}

\begin{table}[htbp]
\caption{Coverage probabilities and average lengths of 95\% credible intervals averaged over 100 Monte Carlo replications.}
\begin{center}
\begin{tabular}{cc|ccccccc}
  \toprule
  \multicolumn{9}{c}{Data-generating process (a)} \\
  \hline
  &&BR1 & BR2 & BR3 & BR4 & BR5 &BT&BG\\
  \hline
\multirow{2}{*}{(A)}& AL & 0.106 & 0.097 & 0.091 & 0.086 & 0.082 & 0.209 & 0.087 \\ 
   & CP & 0.929 & 0.911 & 0.888 & 0.865 & 0.842 & 0.980 & 0.968 \\ 
\multirow{2}{*}{(B)} & AL & 0.281 & 0.265 & 0.251 & 0.238 & 0.226 & 0.358 & 0.241 \\ 
 & CP & 0.906 & 0.853 & 0.803 & 0.758 & 0.712 & 0.910 & 0.942 \\  
\midrule
\multicolumn{9}{c}{Data-generating process (b)} \\
   \hline
  &&BR1 & BR2 & BR3 & BR4 & BR5 &BT&BG\\
  \hline
\multirow{2}{*}{(A)}& AL & 0.112 & 0.102 & 0.095 & 0.090 & 0.086 & 0.192 & 0.046 \\ 
   & CP & 0.938 & 0.916 & 0.897 & 0.878 & 0.862 & 0.830 & 0.734 \\ 
  \multirow{2}{*}{(B)}& AL & 0.297 & 0.277 & 0.262 & 0.248 & 0.236 & 0.212 & 0.078 \\ 
 & CP & 0.916 & 0.867 & 0.821 & 0.775 & 0.734 & 0.517 & 0.473 \\ 
  \midrule
  \multicolumn{9}{c}{Data-generating process (c) $\eta=5$} \\
   \hline
  &&BR1 & BR2 & BR3 & BR4 & BR5 &BT&BG\\
  \hline
\multirow{2}{*}{(A)}& AL & 0.099 & 0.091 & 0.086 & 0.082 & 0.079 & 0.210 & 0.079 \\ 
   & CP & 0.901 & 0.884 & 0.864 & 0.844 & 0.828 & 0.943 & 0.928 \\ 
  \multirow{2}{*}{(B)} & AL & 0.286 & 0.264 & 0.247 & 0.232 & 0.219 & 0.307 & 0.193 \\ 
   & CP & 0.892 & 0.833 & 0.779 & 0.729 & 0.678 & 0.813 & 0.772 \\  
  \midrule
  \multicolumn{9}{c}{Data-generating process (c) $\eta=10$} \\
    \hline
  &&BR1 & BR2 & BR3 & BR4 & BR5 &BT&BG\\
  \hline
\multirow{2}{*}{(A)}& AL & 0.107 & 0.098 & 0.092 & 0.087 & 0.084 & 0.207 & 0.076 \\ 
   & CP & 0.928 & 0.911 & 0.888 & 0.873 & 0.854 & 0.940 & 0.926 \\ 
  \multirow{2}{*}{(B)} & AL & 0.289 & 0.268 & 0.252 & 0.238 & 0.226 & 0.309 & 0.189 \\ 
   & CP & 0.905 & 0.848 & 0.800 & 0.749 & 0.701 & 0.809 & 0.772 \\ 
\midrule
  \multicolumn{9}{c}{Data-generating process (c) $\eta=20$} \\
    \hline
  &&BR1 & BR2 & BR3 & BR4 & BR5 &BT&BG\\
  \hline
\multirow{2}{*}{(A)}& AL & 0.107 & 0.098 & 0.092 & 0.087 & 0.084 & 0.199 & 0.075 \\ 
   & CP & 0.926 & 0.910 & 0.889 & 0.873 & 0.855 & 0.935 & 0.925 \\ 
  \multirow{2}{*}{(B)} & AL & 0.290 & 0.270 & 0.253 & 0.239 & 0.226 & 0.305 & 0.188 \\ 
   & CP & 0.906 & 0.853 & 0.804 & 0.756 & 0.705 & 0.788 & 0.771 \\  
\bottomrule
\end{tabular}
\end{center}

\label{AL-CP}
\end{table}

\begin{table}[htbp]

\caption{True positive rates, false positive rates, and false discovery rates averaged over 100 Monte Carlo replications.}
\begin{center}
\resizebox{1.0\textwidth}{!}{ 
\begin{tabular}{cc|ccccc|ccccc|ccccc|c}
  \toprule
  \multicolumn{17}{c}{Data-generating process (a)} \\
  \hline
  && BR1 & BR2 & BR3 & BR4 & BR5 & FR1 & FR2 & FR3 & FR4 & FR5 & FG1 & FG2 & FG3 & FG4 & FG5 & BG \\
  \hline
\multirow{3}{*}{(A)}& TPR & 0.91 & 0.82 & 0.77 & 0.73 & 0.70 & 0.96 & 0.93 & 0.90 & 0.86 & 0.83 & 0.97 & 0.94 & 0.91 & 0.89 & 0.86 & 0.43 \\ 
   & FPR & 0.89 & 0.46 & 0.29 & 0.20 & 0.14 & 0.80 & 0.64 & 0.51 & 0.41 & 0.33 & 0.83 & 0.68 & 0.55 & 0.46 & 0.39 & 0.28 \\ 
   & FDR & 0.02 & 0.04 & 0.06 & 0.07 & 0.07 & 0.01 & 0.02 & 0.03 & 0.03 & 0.04 & 0.01 & 0.02 & 0.02 & 0.03 & 0.03 & 0.14 \\ 
  \multirow{3}{*}{(B)} & TPR & 1.00 & 1.00 & 0.98 & 0.95 & 0.91 & 1.00 & 0.99 & 0.97 & 0.92 & 0.86 & 1.00 & 1.00 & 0.98 & 0.95 & 0.91 & 1.00 \\ 
   & FPR & 0.97 & 0.40 & 0.26 & 0.19 & 0.15 & 0.61 & 0.43 & 0.32 & 0.25 & 0.20 & 0.65 & 0.47 & 0.36 & 0.28 & 0.24 & 0.36 \\ 
   & FDR & 0.00 & 0.00 & 0.01 & 0.02 & 0.04 & 0.00 & 0.00 & 0.01 & 0.04 & 0.07 & 0.00 & 0.00 & 0.01 & 0.02 & 0.04 & 0.00 \\ 
\midrule
  \multicolumn{17}{c}{Data-generating process (b)} \\
  \hline
  && BR1 & BR2 & BR3 & BR4 & BR5 & FR1 & FR2 & FR3 & FR4 & FR5 & FG1 & FG2 & FG3 & FG4 & FG5 & BG\\
  \hline
\multirow{3}{*}{(A)}& TPR & 0.92 & 0.83 & 0.78 & 0.73 & 0.70 & 0.96 & 0.93 & 0.90 & 0.87 & 0.82 & 0.97 & 0.95 & 0.92 & 0.90 & 0.87 & 0.38 \\ 
   & FPR & 0.92 & 0.52 & 0.32 & 0.21 & 0.16 & 0.80 & 0.65 & 0.53 & 0.42 & 0.34 & 0.95 & 0.90 & 0.85 & 0.81 & 0.76 & 0.24 \\ 
   & FDR & 0.02 & 0.04 & 0.05 & 0.07 & 0.07 & 0.01 & 0.02 & 0.03 & 0.03 & 0.04 & 0.01 & 0.01 & 0.02 & 0.02 & 0.03 & 0.15 \\ 
  \multirow{3}{*}{(B)} & TPR & 1.00 & 0.99 & 0.98 & 0.94 & 0.90 & 1.00 & 0.99 & 0.97 & 0.92 & 0.85 & 0.96 & 0.91 & 0.87 & 0.83 & 0.79 & 0.66 \\ 
   & FPR & 0.99 & 0.44 & 0.28 & 0.2 & 0.15 & 0.63 & 0.44 & 0.33 & 0.25 & 0.20 & 0.94 & 0.87 & 0.82 & 0.77 & 0.72 & 0.35 \\ 
   & FDR & 0.00 & 0.00 & 0.01 & 0.03 & 0.05 & 0.00 & 0.00 & 0.02 & 0.04 & 0.07 & 0.02 & 0.04 & 0.06 & 0.08 & 0.10 & 0.16 \\ 
  \midrule
    \multicolumn{17}{c}{Data-generating process (c) $\eta=5$} \\
  \hline
  && BR1 & BR2 & BR3 & BR4 & BR5 & FR1 & FR2 & FR3 & FR4 & FR5 & FG1 & FG2 & FG3 & FG4 & FG5 & BG\\
  \hline
\multirow{3}{*}{(A)}& TPR & 0.92 & 0.83 & 0.78 & 0.73 & 0.70 & 0.96 & 0.93 & 0.90 & 0.87 & 0.82 & 0.97 & 0.95 & 0.92 & 0.90 & 0.87 & 0.38 \\ 
   & FPR & 0.92 & 0.52 & 0.32 & 0.21 & 0.16 & 0.80 & 0.65 & 0.53 & 0.42 & 0.34 & 0.95 & 0.90 & 0.85 & 0.81 & 0.76 & 0.24 \\ 
   & FDR & 0.02 & 0.04 & 0.05 & 0.07 & 0.07 & 0.01 & 0.02 & 0.03 & 0.03 & 0.04 & 0.01 & 0.01 & 0.02 & 0.02 & 0.03 & 0.15 \\ 
  \multirow{3}{*}{(B)} & TPR & 1.00 & 0.99 & 0.98 & 0.94 & 0.90 & 0.01 & 0.99 & 0.97 & 0.92 & 0.85 & 0.96 & 0.91 & 0.87 & 0.83 & 0.79 & 0.66 \\ 
   & FPR & 0.99 & 0.44 & 0.28 & 0.20 & 0.15 & 0.63 & 0.44 & 0.33 & 0.25 & 0.20 & 0.94 & 0.87 & 0.82 & 0.77 & 0.72 & 0.35 \\ 
   & FDR & 0.00 & 0.00 & 0.01 & 0.03 & 0.05 & 0.00 & 0.00 & 0.02 & 0.04 & 0.07 & 0.02 & 0.04 & 0.06 & 0.08 & 0.10 & 0.16 \\ 
  \midrule
    \multicolumn{17}{c}{Data-generating process (c) $\eta=10$} \\
  \hline
  && BR1 & BR2 & BR3 & BR4 & BR5 & FR1 & FR2 & FR3 & FR4 & FR5 & FG1 & FG2 & FG3 & FG4 & FG5 & BG\\
  \hline
\multirow{3}{*}{(A)} & TPR & 0.92 & 0.83 & 0.78 & 0.73 & 0.70 & 0.96 & 0.93 & 0.90 & 0.87 & 0.82 & 0.97 & 0.95 & 0.92 & 0.90 & 0.87 & 0.38 \\ 
   & FPR & 0.92 & 0.52 & 0.32 & 0.21 & 0.16 & 0.8 & 0.65 & 0.53 & 0.42 & 0.34 & 0.95 & 0.90 & 0.85 & 0.81 & 0.76 & 0.24 \\ 
   & FDR & 0.02 & 0.04 & 0.05 & 0.07 & 0.07 & 0.01 & 0.02 & 0.03 & 0.03 & 0.04 & 0.01 & 0.01 & 0.02 & 0.02 & 0.03 & 0.15 \\ 
  \multirow{3}{*}{(B)} & TPR & 1.00 & 0.99 & 0.98 & 0.94 & 0.90 & 1.00 & 0.99 & 0.97 & 0.92 & 0.85 & 0.96 & 0.91 & 0.87 & 0.83 & 0.79 & 0.66 \\ 
   & FPR & 0.99 & 0.44 & 0.28 & 0.20 & 0.15 & 0.63 & 0.44 & 0.33 & 0.25 & 0.20 & 0.94 & 0.87 & 0.82 & 0.77 & 0.72 & 0.35 \\ 
   & FDR & 0.00 & 0.00 & 0.01 & 0.03 & 0.05 & 0.00 & 0.00 & 0.02 & 0.04 & 0.07 & 0.02 & 0.04 & 0.06 & 0.08 & 0.10 & 0.16 \\ 
\midrule
\multicolumn{17}{c}{Data-generating process (c) $\eta=20$} \\
\hline
  && BR1 & BR2 & BR3 & BR4 & BR5 & FR1 & FR2 & FR3 & FR4 & FR5 & FG1 & FG2 & FG3 & FG4 & FG5 & BG\\
  \hline
\multirow{3}{*}{(A)}& TPR & 0.92 & 0.83 & 0.78 & 0.73 & 0.70 & 0.96 & 0.93 & 0.90 & 0.87 & 0.82 & 0.97 & 0.95 & 0.92 & 0.90 & 0.87 & 0.38 \\ 
   & FPR & 0.92 & 0.52 & 0.32 & 0.21 & 0.16 & 0.80 & 0.65 & 0.53 & 0.42 & 0.34 & 0.95 & 0.90 & 0.85 & 0.81 & 0.76 & 0.24 \\ 
   & FDR & 0.02 & 0.04 & 0.05 & 0.07 & 0.07 & 0.01 & 0.02 & 0.03 & 0.03 & 0.04 & 0.01 & 0.01 & 0.02 & 0.02 & 0.03 & 0.15 \\ 
  \multirow{3}{*}{(B)} & TPR & 1.00 & 0.99 & 0.98 & 0.94 & 0.90 & 1.00 & 0.99 & 0.97 & 0.92 & 0.85 & 0.96 & 0.91 & 0.87 & 0.83 & 0.79 & 0.66 \\ 
   & FPR & 0.99 & 0.44 & 0.28 & 0.20 & 0.15 & 0.63 & 0.44 & 0.33 & 0.25 & 0.20 & 0.94 & 0.87 & 0.82 & 0.77 & 0.72 & 0.35 \\ 
   & FDR & 0.00 & 0.00 & 0.01 & 0.03 & 0.05 & 0.00 & 0.00 & 0.02 & 0.04 & 0.07 & 0.02 & 0.04 & 0.06 & 0.08 & 0.10 & 0.16 \\
\bottomrule
\end{tabular}
}
\end{center}

\label{sim-TPR-FPR}
\end{table}

\section{Real data example}\label{sec:5}

We apply the proposed method to the yeast gene expression data \citep{gasch2000genomic}. Following \cite{hirose2017robust}, we focus on analyzing $p=8$ genes involved in galactose utilization \citep{ideker2001integrated}. The sample size is $n=136$. Conducting principal component analysis, we can see 11 outliers in this dataset (see also \cite{finegold2011robust, hirose2017robust}). \cite{hirose2017robust} mentioned that there are two additional outliers, and they consider 13 outliers. Following their study, we compared the two datasets with / without outliers and normalized the data with the median and median absolute deviation (MAD). 

We deal with the data removed for the 13 outliers as a clean dataset (the sample size is $123$), and regard the corresponding estimates as the true estimates for convenience. We compared the proposed method with two frequentist methods presented in Section \ref{sec:4}. The tuning parameter $\lambda$ was selected so that the number of edges is 9 as in \cite{hirose2017robust}. Note that the BR method determines whether the value is zero or not using posterior probability in Section \ref{subsec:2.5} and 6000 posterior samples were drawn by the WBB algorithm.
The results are reported in Figures~\ref{example1-solutionpath} and \ref{example1-posterior}. 
In Figure \ref{example1-solutionpath}, there are no differences between the left panels (without outliers) and right (with outliers) panels for the estimated graphs based on the BR and FR methods, while the FR method is affected by outliers due to non-robustness. When comparing the BR method and the FR method, only one edge is estimated differently. However, the result is not strange because the variable selection method for the two methods is different, as the left panel of Figure \ref{example1-posterior} also clearly illustrates this. It shows a scatter plot of posterior probabilities $\mathrm{P}(|\omega_{ij}| < \varepsilon \mid \mathbf{Y})$ under the BR method and the relationships between the BR and FR methods with colors indicating whether the estimates based on the FR method are zero or not. The colors cross near the posterior probability 50\%. Therefore, it shows that the differing estimate between the two methods arises naturally from the ambiguity of the posterior probability.
The posterior probabilities for the included outliers in the data are similar to those of the data without outliers because the scatter plot is close to the straight line ($y=x$). The 95\% credible intervals (CI) for the BR methods are also shown in the right panel of Figure \ref{example1-posterior}. The CIs for nonzero estimates based on posterior probability correspond to dark-colored bars. The posterior means and CIs for these bars are relatively far from zero. From the posterior probabilities and the credible intervals in Figure \ref{example1-posterior}, it seems that the posterior robustness holds, and the BR method gives the uncertainty for the shrinkage through the posterior probability and credible intervals. Additional information on the real data example including the sensitivity analysis of the tuning parameter $\lambda$ and the result using the BG method is reported in the Supplementary Materials.

\begin{figure}[htbp]
\begin{center}
\includegraphics[width=\linewidth]{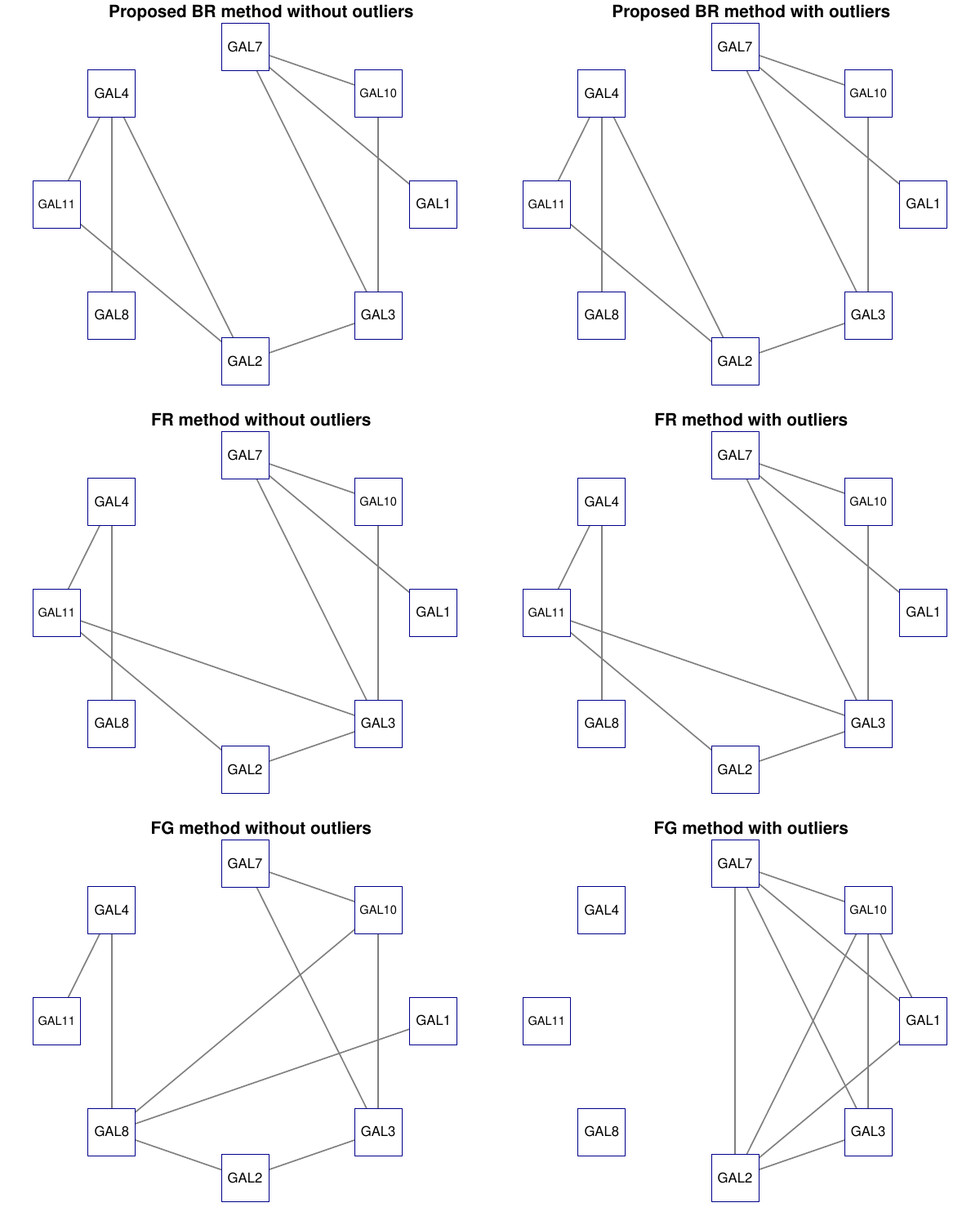}
\caption{Estimated graphs via the BR, FR, and FG methods from top to bottom. The right panels are based on the data without 13 outliers, and the left panels are based on all data with outliers.}
\label{example1-solutionpath}
\end{center}
\end{figure}

\begin{figure}[htbp]
\begin{center}
\includegraphics[width=\linewidth]{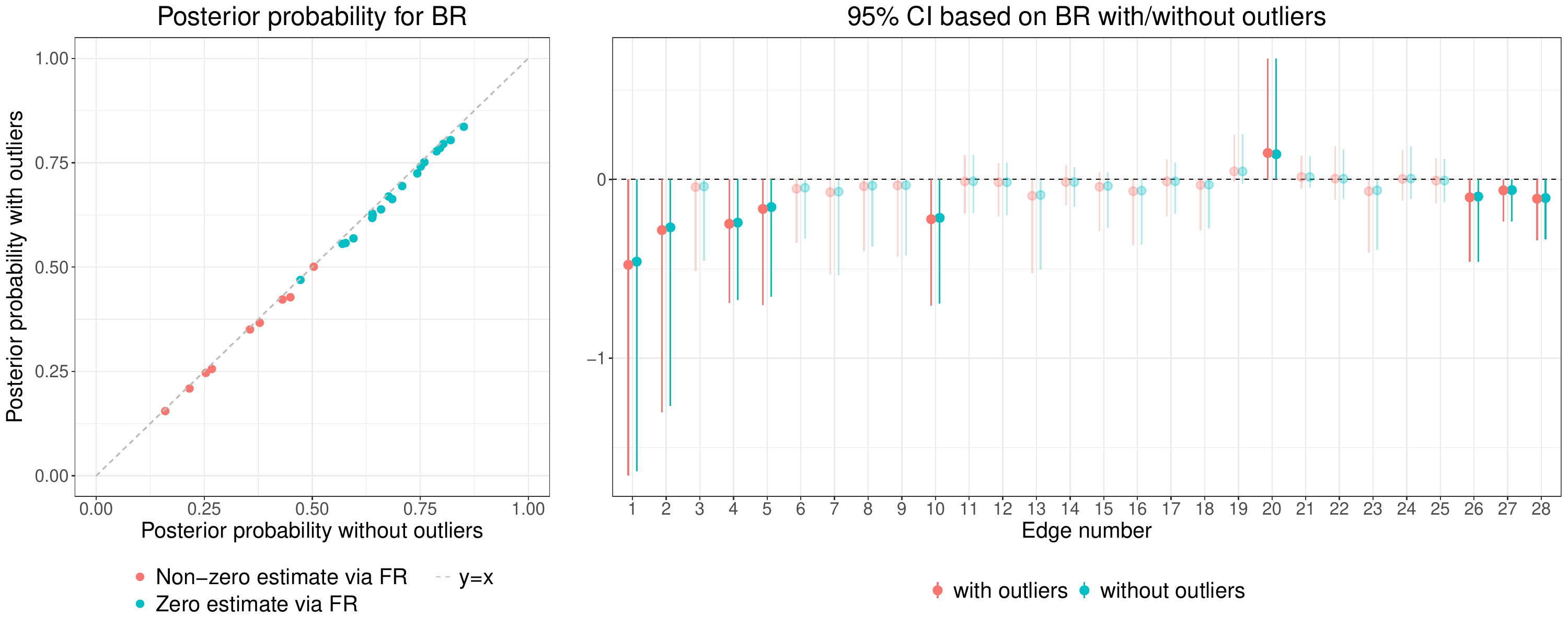}
\caption{The comparison of the posterior probabilities and 95\% credible intervals based on the data with/without outliers. In the right panel, the dark-colored bars are non-zero elements based on posterior probability, and the corresponding edges are drawn in Figure \ref{example1-solutionpath}.}
\label{example1-posterior}
\end{center}
\end{figure}

\section{Concluding remarks}\label{sec:6}

A robust Bayesian graphical lasso based on the $\gamma$-divergence was proposed. 
The proposed posterior distribution was constructed by matching the MAP estimate with the frequentist estimate by \cite{hirose2017robust}. 
The proposed model was shown to have a desirable robustness property, called posterior robustness. 
There are some future works as follows. First, although we focused on using the Laplace prior in this paper, the proposed method can be extended to other types of shrinkage priors. Since the proposed MAP $\gamma$-posterior under popular spike-and-slab (e.g., \cite{wang2015scaling}) and horseshoe (e.g., \cite{li2019graphical}) priors is proper by carefully selecting priors for the diagonal elements (Theorem \ref{posterior-proper}), the posterior robustness still holds. However, the approximation of the posterior distribution is not straightforward. 
Second, in the proposed optimization-based sampling algorithm based on the WBB, the random weight vector distributed as the Dirichlet distribution (see Section \ref{subsec:2.4}) is interpretable as the parameter that controls the spread of the posterior distribution. Hence, by calibrating the hyper-parameters in the Dirichlet distribution, we may derive valid credible intervals in the presence of outliers. 

\section*{Acknowledgments}

The authors would like to thank an Associate Editor and two anonymous referees for their valuable comments and suggestions to improve the quality of this article.
This work was supported by JST, the establishment of university fellowships toward the creation of science technology innovation, Grant Number JPMJFS2129. This work is partially supported by the Japan Society for the Promotion of Science (grant number: 21K13835). The authors also thank Professor Kei Hirose of Kyushu University for providing real data in Section \ref{sec:5}.

\appendix

\section{Proofs of propositions}\label{sec:A}
In this section, we give the proofs for three propositions. 

\begin{proof}[\textbf{\upshape Proof of Proposition \ref{KL-robust}:}]
The ratio of the posterior densities is expressed by
\begin{align*}
\frac{\pi_{\mathrm{KL}}(\mathbf{\Omega}\mid \mathcal{D})}{\pi_{\mathrm{KL}}(\mathbf{\Omega}\mid \mathcal{D}^*)}&=\frac{p(\mathcal{D}^*)}{p(\mathcal{D})}
\frac{\prod_{i=1}^n|\mathbf{\Omega}|^{1/2}\exp\left(-\bm{y}_i^{\top}\mathbf{\Omega} \bm{y}_i/2\right)}{\prod_{i\in\mathcal{K}}|\mathbf{\Omega}|^{1/2}\exp\left(-\bm{y}_i^{\top}\mathbf{\Omega} \bm{y}_i/2\right)}\\
&=\frac{p(\mathcal{D}^*)}{p(\mathcal{D})}
\prod_{i\in\mathcal{L}}|\mathbf{\Omega}|^{1/2}\exp\left(-\bm{y}_i^{\top}\mathbf{\Omega} \bm{y}_i/2\right),
\end{align*}
where 
\begin{align*}
p(\mathcal{D})&=\int \prod_{i=1}^n|\mathbf{\Omega}|^{1/2}\exp\left(-\bm{y}_i^{\top}\mathbf{\Omega} \bm{y}_i/2\right)\pi(\mathbf{\Omega}) d\mathbf{\Omega},\\
p(\mathcal{D}^*)&=\int  \prod_{i\in\mathcal{K}}|\mathbf{\Omega}|^{1/2}\exp\left(-\bm{y}_i^{\top}\mathbf{\Omega} \bm{y}_i/2\right)\pi(\mathbf{\Omega}) d\mathbf{\Omega}.
\end{align*}
Since $\bm{y}_i\neq \bm{0}$ for $i\in\mathcal{L}$, we obtain
\begin{align*}
p(\mathcal{D})&=\int \prod_{i=1}^n|\mathbf{\Omega}|^{1/2}\exp\left(-\bm{y}_i^{\top}\mathbf{\Omega} \bm{y}_i/2\right)\pi(\mathbf{\Omega}) d\mathbf{\Omega}\\
&=\int \prod_{i\in\mathcal{L}}|\mathbf{\Omega}|^{1/2}\exp\left(-\bm{y}_i^{\top}\mathbf{\Omega} \bm{y}_i/2\right) \prod_{i\in\mathcal{K}}|\mathbf{\Omega}|^{1/2}\exp\left(-\bm{y}_i^{\top}\mathbf{\Omega} \bm{y}_i/2\right)\pi(\mathbf{\Omega}) d\mathbf{\Omega}\\
&\le C\int \prod_{i\in\mathcal{K}}|\mathbf{\Omega}|^{1/2}\exp\left(-\bm{y}_i^{\top}\mathbf{\Omega} \bm{y}_i/2\right)\pi(\mathbf{\Omega}) d\mathbf{\Omega}=Cp(D^*),
\end{align*}
where $C$ is a constant. Hence, we have 
\begin{align*}
\frac{\pi_{\mathrm{KL}}(\mathbf{\Omega}\mid \mathcal{D})}{\pi_{\mathrm{KL}}(\mathbf{\Omega}\mid \mathcal{D}^*)}&=\frac{p(\mathcal{D}^*)}{p(\mathcal{D})}
\prod_{i\in\mathcal{L}}|\mathbf{\Omega}|^{1/2}\exp\left(-\bm{y}_i^{\top}\mathbf{\Omega} \bm{y}_i/2\right)\\
&\ge \frac{p(\mathcal{D}^*)}{Cp(\mathcal{D}^*)}
\prod_{i\in\mathcal{L}}|\mathbf{\Omega}|^{1/2}\exp\left(-\bm{y}_i^{\top}\mathbf{\Omega} \bm{y}_i/2\right)\\
&=\frac{1}{C}\prod_{i\in\mathcal{L}}|\mathbf{\Omega}|^{1/2}\exp\left(-\bm{y}_i^{\top}\mathbf{\Omega} \bm{y}_i/2\right)
\end{align*}
and 
\begin{align*}
\lim_{z\to\infty}\frac{\pi_{\mathrm{KL}}(\mathbf{\Omega}\mid \mathcal{D})}{\pi_{\mathrm{KL}}(\mathbf{\Omega}\mid \mathcal{D}^*)}&=\lim_{z\to\infty}\frac{1}{C}\prod_{i\in\mathcal{L}}|\mathbf{\Omega}|^{1/2}\exp\left(-\bm{y}_i^{\top}\mathbf{\Omega} \bm{y}_i/2\right)=0.
\end{align*}
Using the same argument as Theorem \ref{gamma-robust}, we have
\begin{align*}
\lim_{z\to\infty}\int|\pi_{\mathrm{KL}}(\mathbf{\Omega}\mid \mathcal{D})-\pi_{\mathrm{KL}}(\mathbf{\Omega}\mid \mathcal{D}^*)|d\mathbf{\Omega}=1 \ne 0.
\end{align*}
This completes the proof. 
\end{proof}

\begin{proof}[\textbf{\upshape Proof of Proposition \ref{t-robust}:}]
The ratio of the posterior densities is expressed by
\begin{align*}
\frac{\pi_t(\mathbf{\Omega}\mid \mathcal{D})}{\pi_t(\mathbf{\Omega}\mid \mathcal{D}^*)}&=\frac{p(\mathcal{D}^*)}{p(\mathcal{D})}
\frac{\prod_{i=1}^n|\mathbf{\Omega}|^{1/2}\left[1+\bm{y}_i^{\top}\mathbf{\Omega} \bm{y}_i/(\nu-2)\right]^{-(\nu+p)}}{\prod_{i\in\mathcal{K}}|\mathbf{\Omega}|^{1/2}\left[1+\bm{y}_i^{\top}\mathbf{\Omega} \bm{y}_i/(\nu-2)\right]^{-(\nu+p)}}\\
&=\frac{p(\mathcal{D}^*)}{p(\mathcal{D})}
\prod_{i\in\mathcal{L}}|\mathbf{\Omega}|^{1/2}\left[1+\bm{y}_i^{\top}\mathbf{\Omega} \bm{y}_i/(\nu-2)\right]^{-(\nu+p)},
\end{align*}
where 
\begin{align*}
p(\mathcal{D})&=\int \prod_{i=1}^n|\mathbf{\Omega}|^{1/2}\left[1+\bm{y}_i^{\top}\mathbf{\Omega} \bm{y}_i/(\nu-2)\right]^{-(\nu+p)}\pi(\mathbf{\Omega}) d\mathbf{\Omega},\\
p(\mathcal{D}^*)&=\int  \prod_{i\in\mathcal{K}}|\mathbf{\Omega}|^{1/2}\left[1+\bm{y}_i^{\top}\mathbf{\Omega} \bm{y}_i/(\nu-2)\right]^{-(\nu+p)}\pi(\mathbf{\Omega}) d\mathbf{\Omega}.
\end{align*}
Since $\bm{y}_i\neq \bm{0}$ for $i\in\mathcal{L}$, we obtain
\begin{align*}
p(\mathcal{D})&=\int \prod_{i=1}^n|\mathbf{\Omega}|^{1/2}\left[1+\bm{y}_i^{\top}\mathbf{\Omega} \bm{y}_i/(\nu-2)\right]^{-(\nu+p)}\pi(\mathbf{\Omega}) d\mathbf{\Omega}\\
&=\int \prod_{i\in\mathcal{L}}|\mathbf{\Omega}|^{1/2}\left[1+\bm{y}_i^{\top}\mathbf{\Omega} \bm{y}_i/(\nu-2)\right]^{-(\nu+p)} \prod_{i\in\mathcal{K}}|\mathbf{\Omega}|^{1/2}\left[1+\bm{y}_i^{\top}\mathbf{\Omega} \bm{y}_i/(\nu-2)\right]^{-(\nu+p)}\pi(\mathbf{\Omega}) d\mathbf{\Omega}\\
&\le C\int \prod_{i\in\mathcal{K}}|\mathbf{\Omega}|^{1/2}\left[1+\bm{y}_i^{\top}\mathbf{\Omega} \bm{y}_i/(\nu-2)\right]^{-(\nu+p)}\pi(\mathbf{\Omega}) d\mathbf{\Omega}=Cp(\mathcal{D}^*),
\end{align*}
where $C$ is a constant. Then we have
\begin{align*}
\frac{\pi_t(\mathbf{\Omega}\mid \mathcal{D})}{\pi_t(\mathbf{\Omega}\mid \mathcal{D}^*)}&=\frac{p(\mathcal{D}^*)}{p(\mathcal{D})}
\prod_{i\in\mathcal{L}}|\mathbf{\Omega}|^{1/2}\left[1+\bm{y}_i^{\top}\mathbf{\Omega} \bm{y}_i/(\nu-2)\right]^{-(\nu+p)}\\
&\ge \frac{p(\mathcal{D}^*)}{Cp(\mathcal{D}^*)}
\prod_{i\in\mathcal{L}}|\mathbf{\Omega}|^{1/2}\left[1+\bm{y}_i^{\top}\mathbf{\Omega} \bm{y}_i/(\nu-2)\right]^{-(\nu+p)}\\
&=\frac{1}{C}\prod_{i\in\mathcal{L}}|\mathbf{\Omega}|^{1/2}\left[1+\bm{y}_i^{\top}\mathbf{\Omega} \bm{y}_i/(\nu-2)\right]^{-(\nu+p)}
\end{align*}
and 
\begin{align*}
\lim_{z\to\infty}\frac{\pi_t(\mathbf{\Omega}\mid \mathcal{D})}{\pi_t(\mathbf{\Omega}\mid \mathcal{D}^*)}&=\lim_{z\to\infty}\frac{1}{C}\prod_{i\in\mathcal{L}}|\mathbf{\Omega}|^{1/2}\left[1+\bm{y}_i^{\top}\mathbf{\Omega} \bm{y}_i/(\nu-2)\right]^{-(\nu+p)}=0.
\end{align*}
Using the same argument as Theorem \ref{gamma-robust}, we have
\begin{align*}
\lim_{z\to\infty}\int|\pi_t(\mathbf{\Omega}\mid \mathcal{D})-\pi_t(\mathbf{\Omega}\mid \mathcal{D}^*)|d\mathbf{\Omega}=1 \ne 0.
\end{align*}
This completes the proof. 
\end{proof}

\begin{proof}[\textbf{\upshape Proof of Proposition \ref{DP-robust}:}]
The ratio of the posterior densities is expressed by
\begin{align*}
\frac{\pi_{\mathrm{DP}}(\mathbf{\Omega}\mid \mathcal{D})}{\pi_{\mathrm{DP}}(\mathbf{\Omega}\mid \mathcal{D}^*)}&=\frac{p(\mathcal{D}^*)}{p(\mathcal{D})}
\frac{\exp\left(Q_n^{(\alpha)}(\mathcal{D}\mid \mathbf{\Omega})\right)}{\exp\left(Q_n^{(\alpha)}(\mathcal{D}^*\mid \mathbf{\Omega})\right)},
\end{align*}
where
\begin{align*}
Q_n^{(\alpha)}(\mathcal{D}\mid \mathbf{\Omega})&=(2\pi)^{-\alpha p/2}|\mathbf{\Omega}|^{\alpha/2}\left[\frac{1}{\alpha}\sum_{i=1}^n\exp\left(-\frac{\alpha}{2}\bm{y}_i^{\top}\mathbf{\Omega} \bm{y}_i\right)-n(1+\alpha)^{1-\alpha/2}\right],\\
Q_n^{(\alpha)}(\mathcal{D}^*\mid \mathbf{\Omega})&=(2\pi)^{-\alpha p/2}|\Omega|^{\alpha/2}\left[\frac{1}{\alpha}\sum_{i\in\mathcal{K}}\exp\left(-\frac{\alpha}{2}\bm{y}_i^{\top}\mathbf{\Omega} \bm{y}_i\right)-|\mathcal{K}|(1+\alpha)^{1-\alpha/2}\right],\\
p(\mathcal{D})&=\int \exp\left(Q_n^{(\alpha)}(\mathcal{D}\mid \mathbf{\Omega})\right)\pi(\mathbf{\Omega}) d\mathbf{\Omega},\\
p(\mathcal{D}^*)&=\int  \exp\left(Q_n^{(\alpha)}(\mathcal{D}^*\mid \mathbf{\Omega})\right)\pi(\mathbf{\Omega}) d\mathbf{\Omega}.
\end{align*}
Then we obtain
\begin{align*}
\lim_{z\to\infty}\exp\left(Q_n^{(\alpha)}(\mathcal{D}\mid \mathbf{\Omega})\right)
&=\lim_{z\to\infty}\exp\left[(2\pi)^{-\alpha p/2}|\mathbf{\Omega}|^{\alpha/2}\left\{\frac{1}{\alpha}\sum_{i=1}^n\exp\left(-\frac{\alpha}{2}\bm{y}_i^{\top}\mathbf{\Omega} \bm{y}_i\right)-n(1+\alpha)^{1-\alpha/2}\right\}\right]\\
&=\exp\left[(2\pi)^{-\alpha p/2}|\mathbf{\Omega}|^{\alpha/2}\left\{\frac{1}{\alpha}\sum_{i\in\mathcal{K}}\exp\left(-\frac{\alpha}{2}\bm{y}_i^{\top}\mathbf{\Omega} \bm{y}_i\right)-n(1+\alpha)^{1-\alpha/2}\right\}\right]\\
&=\exp\left[(2\pi)^{-\alpha p/2}|\mathbf{\Omega}|^{\alpha/2}\frac{1}{\alpha}\sum_{i\in\mathcal{K}}\exp\left(-\frac{\alpha}{2}\bm{y}_i^{\top}\mathbf{\Omega} \bm{y}_i\right)\right]\\
&\quad \times \exp\left[-(2\pi)^{-\alpha p/2}n(1+\alpha)^{1-\alpha/2}|\mathbf{\Omega}|^{\alpha/2}\right]\\
&=\exp\left(Q_n^{(\alpha)}(\mathcal{D}^*\mid \mathbf{\Omega})\right)\exp\left[-(2\pi)^{-\alpha p/2}(n-|\mathcal{K}|)(1+\alpha)^{1-\alpha/2}|\mathbf{\Omega}|^{\alpha/2}\right].
\end{align*}
From Lebesgue's dominated convergence theorem, it holds that 
\begin{align*}
\lim_{z\to\infty} p(\mathcal{D})&=\lim_{z\to\infty} \int  \exp\left(Q_n^{(\alpha)}(\mathcal{D}\mid \mathbf{\Omega})\right)\pi(\mathbf{\Omega})d\mathbf{\Omega}\\
&=\int \lim_{z\to\infty} \exp\left(Q_n^{(\alpha)}(\mathcal{D}\mid \mathbf{\Omega})\right)\pi(\mathbf{\Omega})d\mathbf{\Omega}\\
&=\int \exp\left(Q_n^{(\alpha)}(\mathcal{D}^*\mid \mathbf{\Omega})\right)\exp\left[-(2\pi)^{-\alpha p/2}(n-|\mathcal{K}|)(1+\alpha)^{1-\alpha/2}|\mathbf{\Omega}|^{\alpha/2}\right]\pi(\mathbf{\Omega})d\mathbf{\Omega}\\
&=C_{\mathcal{D}^*}>0,
\end{align*}
where $C_{\mathcal{D}^*}$ is a constant that does not depend on $\mathbf{\Omega}$, and we have $\lim_{z\to\infty} p(\mathcal{D}^*)/p(\mathcal{D})=p(\mathcal{D}^*)/C_{\mathcal{D}^*}$. Therefore, the limit of the ratio is calculated by
\begin{align*}
\lim_{z\to\infty}\frac{\pi_{\mathrm{DP}}(\mathbf{\Omega}\mid \mathcal{D})}{\pi_{\mathrm{DP}}(\mathbf{\Omega}\mid \mathcal{D}^*)}&=\frac{p(\mathcal{D}^*)}{C_{\mathcal{D}^*}}\exp\left[-(2\pi)^{-\alpha p/2}(n-|\mathcal{K}|)(1+\alpha)^{1-\alpha/2}|\mathbf{\Omega}|^{\alpha/2}\right]\neq 1.
\end{align*}
This completes the proof.
\end{proof}

\section{The details of algorithms}\label{sec:B}

We summarized the details of two algorithms: weighted Bayesian bootstrap via MM algorithm, and Gibbs sampler for $t$-likelihood.

\subsection{Derivation of \eqref{bound}}\label{subsec:B1}

Considering the lasso type prior
\begin{align*}
\pi(\mathbf{\Omega})&\propto\exp(-\lambda\|\mathbf{\Omega}\|_1)1_{\{\mathbf{\Omega}\in M^+\}}=\prod_{i=1}^p\mathrm{Exp}(w_{ii}\mid \lambda)\prod_{i<j}^p\mathrm{Lap}(w_{ij}\mid \lambda)1_{\{\mathbf{\Omega}\in M^+\}},
\end{align*}
the weighted objective function is given by
\begin{align*}
L_{\bm{w}}(\mathbf{\Omega})&=-\frac{1}{\gamma}\log\left[\frac{1}{n}\sum_{i=1}^nw_i\frac{f(\bm{y}_i\mid \mathbf{\Omega})^{\gamma}}{\int f(\bm{y}\mid \mathbf{\Omega})^{1+\gamma}dy}\right]+\lambda w_0\|\mathbf{\Omega}\|_1\\
&=-\frac{1}{\gamma}\log\left\{\frac{1}{n}\sum_{i=1}^nw_if(\bm{y}_i\mid \mathbf{\Omega})^{\gamma}\right\}+\frac{\gamma}{2(1+\gamma)}\log|\mathbf{\Omega}|+\lambda w_0\|\mathbf{\Omega}\|_1,
\end{align*}
where $\bm{w}=(w_0,w_1,\dots,w_n)^\top$ is a weight defined in Subsection \ref{subsec:2.4}.
By using Jensen's inequality, the first term is bounded by
\begin{align*}
-\frac{1}{\gamma}\log\left\{\frac{1}{n}\sum_{i=1}^nw_if(\bm{y}_i\mid \mathbf{\Omega})^{\gamma}\right\}
\le C_1-\frac{1}{2}\log |\mathbf{\Omega}|+\frac{1}{2}\sum_{i=1}^ns_i^* \bm{y}_i^{\top}\mathbf{\Omega} \bm{y}_i,
\end{align*}
where $s_i^*=w_if(\bm{y}_i\mid\mathbf{\Omega})^{\gamma}/\sum_{j=1}^nw_jf(\bm{y}_j\mid\mathbf{\Omega})^{\gamma}$ and $C_1$ is a constant. 
Therefore, the objective function $L_{\bm{w}}(\mathbf{\Omega})$ is bounded by
\begin{align*}
L_{\bm{w}}(\mathbf{\Omega})&=-\frac{1}{\gamma}\log\left\{\sum_{i=1}^nw_if(\bm{y}_i\mid \mathbf{\Omega})^{\gamma}\right\}+\frac{\gamma}{2(1+\gamma)}\log|\mathbf{\Omega}|+\lambda w_0\|\mathbf{\Omega}\|_1\\
&\le \frac{1}{2(1+\gamma)}\left[ \mathrm{tr}\left\{\left((1+\gamma)\sum_{i=1}^ns_i^*\bm{y}_i\bm{y}_i^{\top}\right)\mathbf{\Omega}\right\}-\log|\mathbf{\Omega}|+2(1+\gamma)\lambda w_0\|\mathbf{\Omega}\|_1\right]\\
&\propto \mathrm{tr}\left\{\mathbf{S}^*\mathbf{\Omega}\right\}-\log|\mathbf{\Omega}|+\rho\|\mathbf{\Omega}\|_1,
\end{align*}
where
\begin{align*}
\mathbf{S}^*&=(1+\gamma)\sum_{i=1}^ns_i^*\bm{y}_i\bm{y}_i^{\top},\quad \rho=2(1+\gamma)\lambda w_0.
\end{align*}

\subsection{MCMC algorithm for Bayesian $t$-graphical lasso}\label{subsec:B2}

The joint probability density function of the multivariate $t$-distribution is given by
\begin{equation*}
p(\mathbf{Y}\mid \mathbf{\Sigma})\propto \prod_{i=1}^n p(\bm{y}_i\mid\mathbf{\Sigma})\propto \prod_{i=1}^n |\mathbf{\Sigma}|^{-1/2}\left(1+\frac{1}{\nu}\bm{y}_i^{\top}\mathbf{\Sigma}^{-1} \bm{y}_i\right),
\end{equation*}
where $\nu$ is the degrees of freedom and the covariance matrix is $\{\nu/(\nu-2)\} \mathbf{\Sigma}$ for $\nu>2$. By using the Gaussian scale mixture representation of the $t$-distribution, the conditional density is given by
\begin{align*}
\bm{y}_i\mid \mathbf{\Sigma}, u\sim \mathcal{N}_p(\bm{0},\mathbf{\Sigma}/u),\quad
u\sim \mathrm{Ga}(\nu/2,\nu/2)
\end{align*}
We define the precision matrix $\mathbf{\Omega}=( \{\nu/(\nu-2)\}\mathbf{\Sigma} )^{-1}$, and then the density functions are rewritten as
\begin{align*}
p(\mathbf{Y}\mid \mathbf{\Omega}, u)&\propto u^{np/2}|\mathbf{\Omega}|^{n/2}\exp\left(-\frac{\nu u}{2(\nu-2)}\sum_{i=1}^n \bm{y}_i^{\top}\mathbf{\Omega} \bm{y}_i\right),\\
p(u)&\propto u^{-1+\nu/2}\exp\left(-\frac{\nu}{2}u\right).
\end{align*}
For the prior of $\mathbf{\Omega}$, we assume the Laplace and exponential priors:
\begin{align*}
p(\mathbf{\Omega}\mid \tau, \lambda)&\propto \prod_{i<j}\left[\tau_{ij}^{-1/2}\exp\left(-\frac{\omega_{ij}^2}{2\tau_{ij}}\right)\right]
\prod_{i=1}^p\left[\exp\left(-\frac{\lambda}{2}\omega_{ij}\right)\right]1_{\mathbf{\Omega}\in M^+},\\
p(\tau\mid \lambda)&\propto \prod_{i<j}\left[\exp\left(-\frac{\lambda^2}{2}\tau_{ij}\right)\right].
\end{align*}
From this formulation, we can construct a Gibbs sampler based on block Gibbs sampler \citep{wang2012bayesian}. The full conditional distribution of $u$ is given by
\begin{align*}
p(u\mid y, \mathbf{\Omega})\sim \mathrm{Ga}\left(\frac{np+\nu}{2}, \frac{\nu}{2(\nu-2)}\sum_{i=1}^n\bm{y}_i^{\top}\mathbf{\Omega} \bm{y}_i+\frac{\nu}{2}\right).
\end{align*}

\section{Additional simulation information}\label{sec:C}

\subsection{True precision matrix}\label{subsec:C1}

The true precision matrix (A) in the simulation study is as follows: 

{\small
\begin{align*}
\left(
\begin{array}{cccccccccccc}
0.239 & 0.117 &  &  &  &  &  & 0.031 &  &  &  &  \\ 
0.117 & 1.554 &  &  &  &  &  &  &  &  &  &  \\ 
  &  & 0.362 & 0.002 &  &  &  &  &  &  &  &  \\ 
  &  & 0.002 & 0.199 & 0.094 &  &  &  &  &  &  &  \\ 
  &  &  & 0.094 & 0.349 &  &  &  &  &  &  & -0.036 \\ 
  &  &  &  &  & 0.295 & -0.229 & 0.002 &  &  &  &  \\ 
  &  &  &  &  & -0.229 & 0.715 &  &  &  &  &  \\ 
 0.031 &  &  &  &  & 0.002 &  & 0.164 & 0.112 & -0.028 & -0.008 &  \\ 
  &  &  &  &  &  &  & 0.112 & 0.518 & -0.193 & -0.09 &  \\ 
  &  &  &  &  &  &  & -0.028 & -0.193 & 0.379 & 0.167 &  \\ 
  &  &  &  &  &  &  & -0.008 & -0.09 & 0.167 & 0.159 &  \\ 
  &  &  &  & -0.036 &  &  &  &  &  &  & 0.207 \\ 
 \end{array}
 \right)
\end{align*}
}

\subsection{The difference between $\gamma=0.05$ and $\gamma=0.1$ for the robust methods}\label{subsec:C2}

First, to compare the results of $\gamma=0.05$ with that of $\gamma=0.01$, we report the simulation results for the BR and FR methods for $\gamma=0.05$. The results of RMSE are summarized in Figure \ref{sim-RMSE2}. For comparison, we also show the results for $\gamma=0.1$ in the same figures. For scenario (a) (non-outliers case), the results for $\gamma=0.05$ are similar to those for $\gamma=0.1$. However, $\gamma=0.1$ tends to provide more robust estimates for the scenario in which the value of the outliers is large. Next, we show the effect on the average length (AL) and the coverage probability (CP) in Table~\ref{sim-AL-CP-gamma=005-01}. CP for $\gamma=0.05$ is farther away from 0.95 than that for $\gamma=0.1$ and is affected by outliers. AL of $\gamma=0.05$ is similar to $\gamma=0.1$ except scenario (c) $\eta=20$ and scenario (c-B) $\eta=10$. Since $\gamma$ plays the role of the robustness of the posterior distribution, the posterior for each $\gamma$ may change depending on the data such as outlier levels. 
AL is small due to strong shrinkage since $\lambda$ is larger. $\lambda$ controls the level of shrinkage and the degree of posterior concentration at zero. Therefore, the posterior distribution and its credible interval are also affected by choosing $\lambda$. Constructing a data-dependent calibration method of credible intervals is an interesting future work (see, e.g., \cite{syring2019calibrating, onizuka2024fast}).

\begin{figure}
\begin{center}
\includegraphics[width=\linewidth]{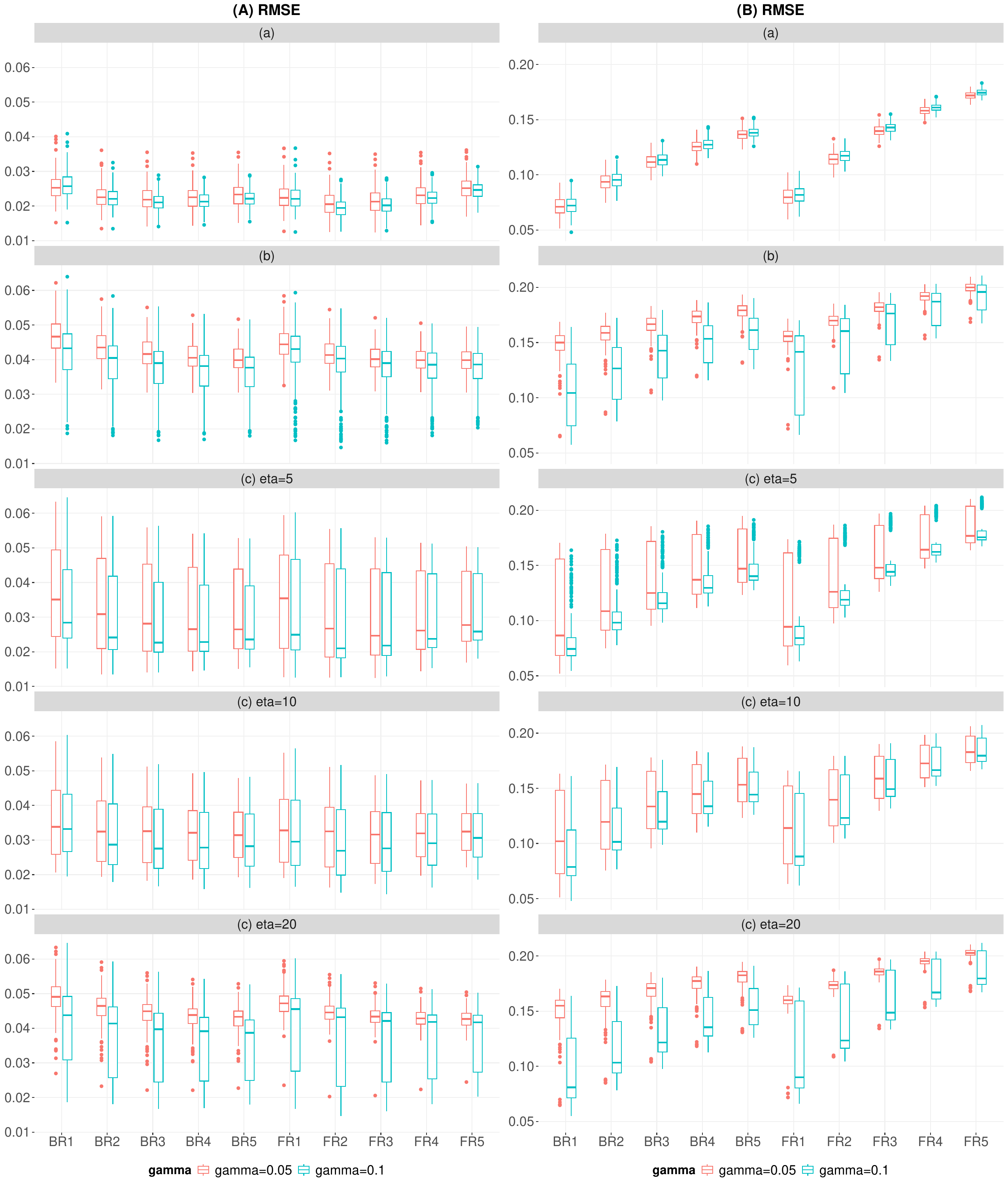}
\caption{The boxplots of RMSE based on 100 repetitions for all scenarios are summarized. The left and right panels correspond to the results for (A) and (B), respectively, and the results for the five data-generating processes are summarized from top to bottom.}
\label{sim-RMSE2}
\end{center}
\end{figure}

\begin{table}[htbp]

\caption{Coverage probabilities and average lengths of 95\% credible intervals averaged over 100 Monte Carlo replications.}
\begin{center}
\resizebox{1.0\textwidth}{!}{ 
\begin{tabular}{cc|cc|cc|cc|cc|cc}
\toprule
  \multicolumn{12}{c}{Data-generating process (a)} \\
  \hline
& &\multicolumn{2}{c|}{BR1} &\multicolumn{2}{c|}{BR2}
&\multicolumn{2}{c|}{BR3} &\multicolumn{2}{c|}{BR4}
&\multicolumn{2}{c}{BR5}\\
\hline
 & & $\gamma=0.05$ & $\gamma=0.1$ & $\gamma=0.05$ & $\gamma=0.1$  & $\gamma=0.05$ & $\gamma=0.1$  & $\gamma=0.05$ & $\gamma=0.1$  & $\gamma=0.05$ & $\gamma=0.1$ \\ 
\hline
\multirow{2}{*}{(A)} & AL & 0.100 & 0.106 & 0.092 & 0.097 & 0.087 & 0.091 & 0.083 & 0.086 & 0.079 & 0.082 \\
 & CP & 0.923 & 0.929 & 0.904 & 0.911 & 0.881 & 0.888 & 0.859 & 0.865 & 0.838 & 0.842 \\ 
\multirow{2}{*}{(B)} & AL & 0.101 & 0.112 & 0.092 & 0.102 & 0.087 & 0.095 & 0.082 & 0.090 & 0.079 & 0.086 \\ 
 & CP & 0.931 & 0.938 & 0.912 & 0.916 & 0.892 & 0.897 & 0.874 & 0.878 & 0.853 & 0.862 \\
 \midrule
 \multicolumn{12}{c}{Data-generating process (b)} \\
  \hline
& &\multicolumn{2}{c|}{BR1} &\multicolumn{2}{c|}{BR2}
&\multicolumn{2}{c|}{BR3} &\multicolumn{2}{c|}{BR4}
&\multicolumn{2}{c}{BR5}\\
\hline
 & & $\gamma=0.05$ & $\gamma=0.1$ & $\gamma=0.05$ & $\gamma=0.1$  & $\gamma=0.05$ & $\gamma=0.1$  & $\gamma=0.05$ & $\gamma=0.1$  & $\gamma=0.05$ & $\gamma=0.1$ \\ 
\hline
\multirow{2}{*}{(A)} & AL & 0.091 & 0.099 & 0.085 & 0.091 & 0.081 & 0.086 & 0.078 & 0.082 & 0.075 & 0.079 \\ 
 & CP & 0.889 & 0.901 & 0.873 & 0.884 & 0.855 & 0.864 & 0.835 & 0.844 & 0.820 & 0.828 \\ 
\multirow{2}{*}{(B)} & AL & 0.091 & 0.107 & 0.085 & 0.098 & 0.081 & 0.092 & 0.077 & 0.087 & 0.075 & 0.084 \\  
 & CP & 0.895 & 0.928 & 0.876 & 0.911 & 0.86 & 0.888 & 0.844 & 0.873 & 0.827 & 0.854 \\
  \midrule
 \multicolumn{12}{c}{Data-generating process (c) $\eta=5$} \\
  \hline
& &\multicolumn{2}{c|}{BR1} &\multicolumn{2}{c|}{BR2}
&\multicolumn{2}{c|}{BR3} &\multicolumn{2}{c|}{BR4}
&\multicolumn{2}{c}{BR5}\\
\hline
 & & $\gamma=0.05$ & $\gamma=0.1$ & $\gamma=0.05$ & $\gamma=0.1$  & $\gamma=0.05$ & $\gamma=0.1$  & $\gamma=0.05$ & $\gamma=0.1$  & $\gamma=0.05$ & $\gamma=0.1$ \\ 
\hline
\multirow{2}{*}{(A)} & AL & 0.093 & 0.107 & 0.086 & 0.098 & 0.081 & 0.092 & 0.078 & 0.087 & 0.075 & 0.084 \\  
 & CP & 0.900 & 0.926 & 0.884 & 0.910 & 0.867 & 0.889 & 0.851 & 0.873 & 0.835 & 0.855 \\ 
\multirow{2}{*}{(B)} & AL & 0.270 & 0.281 & 0.257 & 0.265 & 0.245 & 0.251 & 0.234 & 0.238 & 0.224 & 0.226 \\ 
 & CP & 0.901 & 0.906 & 0.845 & 0.853 & 0.799 & 0.803 & 0.750 & 0.758 & 0.712 & 0.712 \\
  \midrule
 \multicolumn{12}{c}{Data-generating process (c) $\eta=10$} \\
  \hline
& &\multicolumn{2}{c|}{BR1} &\multicolumn{2}{c|}{BR2}
&\multicolumn{2}{c|}{BR3} &\multicolumn{2}{c|}{BR4}
&\multicolumn{2}{c}{BR5}\\
\hline
 & & $\gamma=0.05$ & $\gamma=0.1$ & $\gamma=0.05$ & $\gamma=0.1$  & $\gamma=0.05$ & $\gamma=0.1$  & $\gamma=0.05$ & $\gamma=0.1$  & $\gamma=0.05$ & $\gamma=0.1$ \\ 
\hline
\multirow{2}{*}{(A)} & AL & 0.281 & 0.297 & 0.266 & 0.277 & 0.253 & 0.262 & 0.241 & 0.248 & 0.229 & 0.236 \\ 
 & CP & 0.908 & 0.916 & 0.858 & 0.867 & 0.813 & 0.821 & 0.769 & 0.775 & 0.725 & 0.734 \\ 
\multirow{2}{*}{(B)} & AL & 0.224 & 0.286 & 0.213 & 0.264 & 0.203 & 0.247 & 0.194 & 0.232 & 0.186 & 0.219 \\
 & CP & 0.788 & 0.892 & 0.734 & 0.833 & 0.688 & 0.779 & 0.642 & 0.729 & 0.600 & 0.678 \\
  \midrule
 \multicolumn{12}{c}{Data-generating process (c) $\eta=20$} \\
  \hline
& &\multicolumn{2}{c|}{BR1} &\multicolumn{2}{c|}{BR2}
&\multicolumn{2}{c|}{BR3} &\multicolumn{2}{c|}{BR4}
&\multicolumn{2}{c}{BR5}\\
\hline
 & & $\gamma=0.05$ & $\gamma=0.1$ & $\gamma=0.05$ & $\gamma=0.1$  & $\gamma=0.05$ & $\gamma=0.1$  & $\gamma=0.05$ & $\gamma=0.1$  & $\gamma=0.05$ & $\gamma=0.1$ \\ 
\hline
\multirow{2}{*}{(A)} & AL & 0.234 & 0.289 & 0.221 & 0.268 & 0.211 & 0.252 & 0.201 & 0.238 & 0.191 & 0.226 \\  
 & CP & 0.811 & 0.905 & 0.753 & 0.848 & 0.705 & 0.800 & 0.661 & 0.749 & 0.623 & 0.701 \\ 
\multirow{2}{*}{(B)} & AL & 0.235 & 0.290 & 0.222 & 0.270 & 0.211 & 0.253 & 0.201 & 0.239 & 0.191 & 0.226 \\
 & CP & 0.815 & 0.906 & 0.759 & 0.853 & 0.711 & 0.804 & 0.668 & 0.756 & 0.626 & 0.705 \\  
\bottomrule
\end{tabular}
}
\end{center}

\label{sim-AL-CP-gamma=005-01}
\end{table}

\bibliographystyle{chicago}
\bibliography{refs}

\newpage
\setcounter{page}{1}
\setcounter{equation}{0}
\renewcommand{\theequation}{S\arabic{equation}}
\setcounter{section}{0}
\renewcommand{\thesection}{S\arabic{section}}
\setcounter{table}{0}
\renewcommand{\thetable}{S\arabic{table}}
\setcounter{figure}{0}
\renewcommand{\thefigure}{S\arabic{figure}}

\begin{center}
{\LARGE\bf Supplementary Materials for ``Robust Bayesian graphical modeling using $\gamma$-divergence"}
\end{center}

\begin{center}
{\large Takahiro Onizuka$^1$ and Shintaro Hashimoto$^2$}
\end{center}

\medskip
$^1$ Graduate School of Social Sciences, Chiba University, Japan

$^2$ Department of Mathematics, Hiroshima University, Japan

\vspace{1cm}

This Supplementary Material provides additional simulation results of simulation studies related to the main text.

\section{Selection of tuning parameter}

Although we fixed the tuning parameter $\lambda$ of the method proposed in the main article, we considered the selection of the tuning parameter. In Bayesian methods, to avoid the selection of the hyper-parameter, we often assume a prior distribution for the parameter. Since the proposed optimization algorithm does not have a sampling procedure, it may be difficult to induce the hyperprior and the sampling from the joint posterior with the hyper-prior added. \cite{hashimoto2020robust} proposed a new algorithm in the robust Bayesian method through the combination of optimization and sampling of parameters, and here we adopt their idea. We assume the gamma distribution $\mathrm{Ga}(a,b)$ to the tuning parameter $\lambda$ as a conjugate prior, and we set $a=b=0.1$ as a non-informative prior. The corresponding adaptive sampling method called Weighted Bayesian Bootstrap within Gibbs (WBBG) is shown in Algorithm~\ref{algo:SWBB}.

\begin{algorithm*}[thbp] 
\caption{\bf--- Weighted Bayesian bootstrap within Gibbs.}
\label{algo:SWBB}
\begin{itemize}
\item[1] For fixed $\lambda^{(t-1)}$, calculate the point estimate of precision matrix $\Omega^{(t)}$ through Algorithm 1 in the main manuscript. 
\item[2] Sample $\lambda^{(t)}$ from the full conditional distribution of $\lambda$: $\mathrm{Ga}(a,b+\|\mathbf{\Omega}^{(t)}\|_1)$.
\item[3] Cycle steps 1 and 2, and get the $m$th posterior sample.
\end{itemize}
\end{algorithm*}

The differences between the original algorithm and Algorithm~\ref{algo:SWBB} are as follows:
\begin{itemize}
\item Due to the sampling of $\lambda$, the parallel computation is impossible.
\item $\lambda$ is adaptively selected via MCMC.
\item Since $\lambda$ takes various values in the sampling iteration, the posterior does not have a spike at zero. 
\end{itemize}
The comparison of posterior distributions for each algorithm is shown in Figure~\ref{posterior-BR-SBR-BG}. The posterior of the BR via WBBG has a shape similar to the BG method, which is constructed by Gibbs sampling.
To verify the performance, we compare the WBBG algorithm with the original algorithm through the simulation study in the main manuscript. The result is reported in Tables \ref{Sample-lambda-ab} and \ref{Sample-lambda-c}. Note that the result of the original methods is the same as that of the main manuscript. It seems that the RMSE of the BR method with the WBBG is a reasonable behavior as well as the proposed BR method with the WBB. 
Howver, the TPR and FPR of WBBG are relatively high. This indicates that the selected graph tends to be dense. In fact, Figure S1 shows that the posterior density of $w_{ij}$ under WBBG has a mass around 0 smaller than that of WBB. To select a reasonable (non-dense) graph by WBBG, we have to consider a new criterion for the $\gamma$-divergence based model within Gibbs instead of the median probability criterion in Section 2.5 of the main manuscript. This is the current limitation of this approach and an interesting future work.

\begin{figure}[htbp]
\begin{center}
\includegraphics[width=\linewidth]{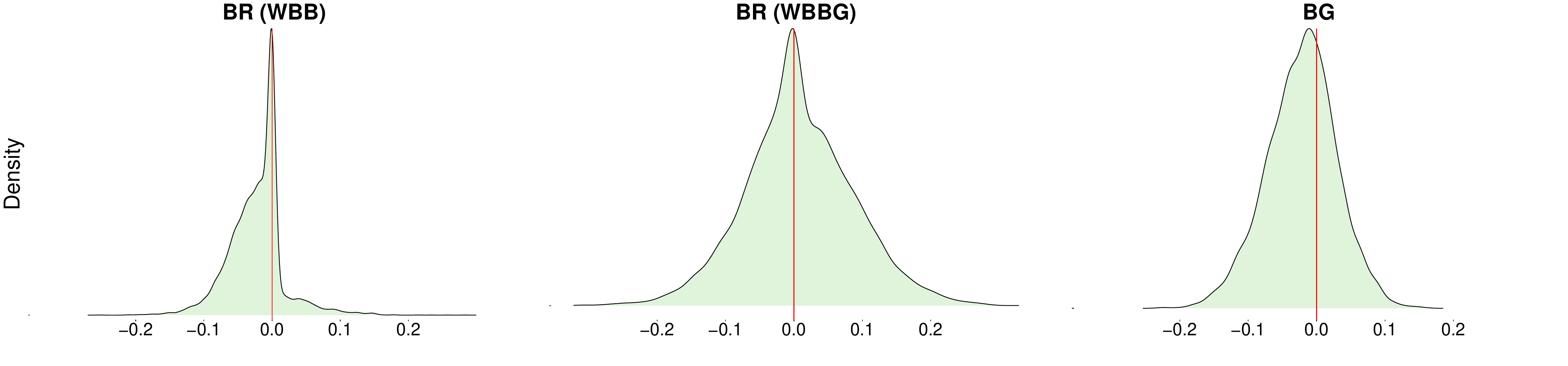}
\caption{Oneshot example of the posterior distributions. From left to right, the posteriors are BR (WBB), BR (WBBG), and BG (Gibbs sampling), respectively.}
\label{posterior-BR-SBR-BG}
\end{center}
\end{figure}

\begin{table}[htbp]

\caption{The mean of TPR, FPR, and FDR based on 100 repetitions for the data-generating process (a) and (b). }
\centering
\begin{tabular}{cc|cccccc}
\toprule
  \multicolumn{8}{c}{Data-generating process (a)} \\
  \hline
 &  & BR1 & BR2 & BR3 & BR4 & BR5 & BR (WBBG) \\ 
  \hline
 \multirow{4}{*}{(A)}  
  & RMSE & 0.03 & 0.02 & 0.02 & 0.02 & 0.02 & 0.03 \\ 
  & TPR & 0.91 & 0.82 & 0.77 & 0.73 & 0.70 & 0.96 \\ 
  & FPR & 0.89 & 0.46 & 0.29 & 0.20 & 0.14 & 0.96 \\ 
  & FDR & 0.02 & 0.04 & 0.06 & 0.07 & 0.07 & 0.01 \\ 
    \hline
  \multirow{4}{*}{(B)}
   & RMSE & 0.03 & 0.02 & 0.02 & 0.02 & 0.02 & 0.03 \\ 
  & TPR & 0.92 & 0.83 & 0.78 & 0.73 & 0.70 & 0.96 \\ 
  & FPR & 0.92 & 0.52 & 0.32 & 0.21 & 0.16 & 0.96 \\ 
  & FDR & 0.02 & 0.04 & 0.05 & 0.07 & 0.07 & 0.01 \\ 
   \midrule
   \multicolumn{8}{c}{Data-generating process (b)} \\
     \hline
 &  & BR1 & BR2 & BR3 & BR4 & BR5 & BR (WBBG) \\ 
   \hline
   \multirow{4}{*}{(A)}
  & RMSE & 0.04 & 0.04 & 0.04 & 0.04 & 0.04 & 0.05 \\ 
   & TPR & 0.92 & 0.82 & 0.78 & 0.74 & 0.71 & 0.97 \\ 
   & FPR & 0.89 & 0.49 & 0.31 & 0.23 & 0.17 & 0.97 \\ 
   & FDR & 0.02 & 0.04 & 0.05 & 0.06 & 0.07 & 0.01 \\  
     \hline
 \multirow{4}{*}{(B)} 
  & RMSE & 0.04 & 0.04 & 0.04 & 0.04 & 0.03 & 0.05 \\ 
   & TPR & 0.92 & 0.82 & 0.78 & 0.74 & 0.71 & 0.97 \\ 
   & FPR & 0.90 & 0.50 & 0.31 & 0.22 & 0.16 & 0.97 \\ 
   & FDR & 0.02 & 0.04 & 0.05 & 0.06 & 0.07 & 0.01 \\ 
   \bottomrule
\end{tabular}
\label{Sample-lambda-ab}
\end{table}

\begin{table}[htbp]

\caption{The mean of TPR, FPR, and FDR based on 100 repetitions for the data-generating process (c). }
\centering
\begin{tabular}{cc|cccccc}
\toprule
    \multicolumn{8}{c}{Data-generating process (c) $\eta=5$} \\
      \hline
 &  & BR1 & BR2 & BR3 & BR4 & BR5 & BR (WBBG) \\ 
    \hline
   \multirow{4}{*}{(A)} 
   & RMSE & 0.04 & 0.04 & 0.04 & 0.04 & 0.04 & 0.05 \\ 
   & TPR & 0.91 & 0.82 & 0.77 & 0.74 & 0.71 & 0.97 \\ 
   & FPR & 0.90 & 0.49 & 0.31 & 0.21 & 0.15 & 0.97 \\ 
   & FDR & 0.02 & 0.04 & 0.06 & 0.06 & 0.07 & 0.01 \\  
     \hline
   \multirow{4}{*}{(B)} 
   & RMSE & 0.07 & 0.10 & 0.11 & 0.13 & 0.14 & 0.08 \\ 
   & TPR & 1.00 & 1.00 & 0.98 & 0.95 & 0.91 & 1.00 \\ 
   & FPR & 0.97 & 0.40 & 0.26 & 0.19 & 0.15 & 1.00 \\ 
   & FDR & 0.00 & 0.00 & 0.01 & 0.02 & 0.04 & 0.00 \\ 
     \midrule
    \multicolumn{8}{c}{Data-generating process (c) $\eta=10$} \\
      \hline
 &  & BR1 & BR2 & BR3 & BR4 & BR5 & BR (WBBG) \\ 
    \hline
   \multirow{4}{*}{(A)}
   & RMSE & 0.07 & 0.10 & 0.11 & 0.13 & 0.14 & 0.08 \\ 
   & TPR & 1.00 & 0.99 & 0.98 & 0.94 & 0.90 & 1.00 \\ 
   & FPR & 0.99 & 0.44 & 0.28 & 0.20 & 0.15 & 1.00 \\  
   & FDR & 0.00 & 0.00 & 0.01 & 0.03 & 0.05 & 0.00 \\ 
     \hline
   \multirow{4}{*}{(B)} 
   & RMSE & 0.11 & 0.13 & 0.14 & 0.15 & 0.16 & 0.14 \\ 
   & TPR & 1.00 & 0.99 & 0.96 & 0.92 & 0.88 & 1.00 \\ 
   & FPR & 0.99 & 0.45 & 0.28 & 0.21 & 0.16 & 1.00 \\ 
   & FDR & 0.00 & 0.01 & 0.02 & 0.04 & 0.05 & 0.00 \\ 
   \midrule
    \multicolumn{8}{c}{Data-generating process (c) $\eta=20$} \\
      \hline
 &  & BR1 & BR2 & BR3 & BR4 & BR5 & BR (WBBG) \\ 
    \hline
   \multirow{4}{*}{(A)} 
   & RMSE & 0.10 & 0.12 & 0.13 & 0.15 & 0.15 & 0.15 \\ 
   & TPR & 1.00 & 0.99 & 0.97 & 0.92 & 0.88 & 1.00 \\ 
   & FPR & 0.99 & 0.44 & 0.28 & 0.21 & 0.16 & 1.00 \\ 
   & FDR & 0.00 & 0.00 & 0.02 & 0.04 & 0.05 & 0.00 \\ 
     \hline
  \multirow{4}{*}{(B)} 
  & RMSE & 0.10 & 0.12 & 0.13 & 0.15 & 0.15 & 0.15 \\ 
   & TPR & 1.00 & 0.99 & 0.97 & 0.93 & 0.88 & 1.00 \\ 
   & FPR & 0.99 & 0.44 & 0.28 & 0.20 & 0.15 & 1.00 \\ 
   & FDR & 0.00 & 0.01 & 0.02 & 0.03 & 0.05 & 0.00 \\  
   \bottomrule
\end{tabular}
\label{Sample-lambda-c}
\end{table}

\section{Comparison with the non-robust method under removing outliers}

We compare the proposed method with the (non-robust) BG method when an off-the-shelf outlier detection method is used to identify and remove the outliers before applying the BG method.
We employ Hotelling’s $T^2$ method as an outlier detection method constructed by the following steps:
\begin{enumerate}
\item Calculate sample mean vector $\hat{\bm{\mu}}=n^{-1}\sum_{i=1}^n \bm{y}_i$ and sample covariance matrix $\hat{\mathbf{\Sigma}}=n^{-1}\sum_{i=1}^n (\bm{y}_i-\hat{\bm{\mu}})(\bm{y}_i-\hat{\bm{\mu}})^{\top}$.
\item Calculate the Mahalanobis distance $g(\bm{y}_i)=(\bm{y}_i-\hat{\bm{\mu}})^{\top}\hat{\mathbf{\Sigma}}^{-1}(\bm{y}_i-\hat{\bm{\mu}})$ for each $\bm{y}_i$.
\item If $g(\bm{y}_i)>\chi_{p}(\alpha)$, then $y_i$ is regarded an outlier, where $\chi_{p}(\alpha)$ is a threshold and is approximated by the $1-\alpha$ quantile of chi squared distribution with degree of freedom $p$.
\end{enumerate}
Note that the detection method assumes that non-outliers are distributed to the multivariate Gaussian distribution. The value of $\alpha$ plays an important role in the detection level of outliers. We conventionally use $\alpha=0.01,0.05$. We adopt the simulation setting in the main manuscript to compare it with the proposed method. The result is summarized in Table~\ref{Remove-outlier}, where the result of the proposed robust method (BR) is the same as that of the main manuscript. From Table~\ref{Remove-outlier}, it seems that the removing method reasonably works as well as the BR method for scenarios (a) and (b). However, one of the proposed BR methods has performed best in scenario (c) $\eta=5, 10, 20$. Note that the proposed method may be affected by moderate outliers because the theoretical property in the proposal assumes an extremely large outlier. Although both the mean of distribution (a) and the mixture distribution (b) are zero, the mean of mixture distribution (c) is shifted. Therefore, $\hat{\bm{\mu}}$ in Hotelling's $T^2$ method is affected by outliers, and then whether Hotelling’s $T^2$ method works or not seems to depend on the structure of the outliers.

\begin{table}[htbp]

\caption{The mean of RMSE, TPR, FPR, and FDR based on 100 repetitions for all data-generating processes. }
\centering
\begin{tabular}{c|cccc|cccc}
\toprule
  \multicolumn{9}{c}{Data-generating process (a)} \\
  \hline
  &\multicolumn{4}{c|}{(A)} &  \multicolumn{4}{c}{(B)}  \\
  \hline
  & RMSE & TPR & FPR & FDR & RMSE & TPR & FPR & FDR \\ 
 \hline
BR1 & 0.026 & 0.908 & 0.885 & 0.023 & 0.072 & 1.000 & 0.970 & 0.000 \\ 
  BR2 & 0.022 & 0.818 & 0.461 & 0.045 & 0.095 & 0.997 & 0.404 & 0.001 \\ 
  BR3 & 0.021 & 0.769 & 0.291 & 0.057 & 0.114 & 0.983 & 0.262 & 0.008 \\ 
  BR4 & 0.021 & 0.732 & 0.201 & 0.066 & 0.128 & 0.952 & 0.190 & 0.022 \\ 
  BR5 & 0.022 & 0.701 & 0.143 & 0.073 & 0.139 & 0.910 & 0.150 & 0.042 \\  
  BG ($\alpha=0.01$) & 0.021 & 0.438 & 0.279 & 0.138 & 0.064 & 1.000 & 0.369 & 0.000 \\ 
  BG ($\alpha=0.05$)  & 0.023 & 0.441 & 0.294 & 0.137 & 0.067 & 1.000 & 0.376 & 0.000 \\
   \midrule
   \multicolumn{9}{c}{Data-generating process (b)} \\
  \hline
  &\multicolumn{4}{c|}{(A)} &  \multicolumn{4}{c}{(B)}  \\
  \hline
 & RMSE & TPR & FPR & FDR & RMSE & TPR & FPR & FDR \\ 
 \hline
  BR1 & 0.027 & 0.922 & 0.919 & 0.019 & 0.072 & 1.000 & 0.988 & 0.000 \\ 
  BR2 & 0.023 & 0.834 & 0.517 & 0.041 & 0.095 & 0.992 & 0.439 & 0.004 \\ 
  BR3 & 0.022 & 0.782 & 0.319 & 0.054 & 0.113 & 0.976 & 0.275 & 0.011 \\ 
  BR4 & 0.022 & 0.728 & 0.214 & 0.067 & 0.127 & 0.941 & 0.196 & 0.028 \\ 
  BR5 & 0.023 & 0.696 & 0.156 & 0.075 & 0.138 & 0.902 & 0.153 & 0.046 \\ 
  BG ($\alpha=0.01$) & 0.023 & 0.437 & 0.280 & 0.138 & 0.070 & 0.999 & 0.365 & 0.001 \\ 
  BG ($\alpha=0.05$) & 0.022 & 0.434 & 0.279 & 0.139 & 0.067 & 1.000 & 0.364 & 0.000 \\ 
  \midrule
   \multicolumn{9}{c}{Data-generating process (c) $\eta=5$} \\
  \hline
  &\multicolumn{4}{c|}{(A)} &  \multicolumn{4}{c}{(B)}  \\
  \hline
 & RMSE & TPR & FPR & FDR & RMSE & TPR & FPR & FDR \\ 
 \hline
BR1 & 0.044 & 0.916 & 0.890 & 0.021 & 0.109 & 1.000 & 0.992 & 0.000 \\ 
  BR2 & 0.040 & 0.820 & 0.489 & 0.044 & 0.129 & 0.987 & 0.448 & 0.006 \\ 
  BR3 & 0.039 & 0.778 & 0.314 & 0.055 & 0.144 & 0.961 & 0.284 & 0.018 \\ 
  BR4 & 0.037 & 0.744 & 0.226 & 0.063 & 0.155 & 0.925 & 0.205 & 0.035 \\ 
  BR5 & 0.037 & 0.711 & 0.167 & 0.071 & 0.163 & 0.882 & 0.158 & 0.055 \\ 
  BG ($\alpha=0.05$) & 0.042 & 0.361 & 0.251 & 0.157 & 0.148 & 0.836 & 0.342 & 0.077\\ 
  BG ($\alpha=0.05$) & 0.044 & 0.368 & 0.265 & 0.155 & 0.142 & 0.849 & 0.354 & 0.070 \\ 
 \midrule
   \multicolumn{9}{c}{Data-generating process (c) $\eta=10$} \\
  \hline
  &\multicolumn{4}{c|}{(A)} &  \multicolumn{4}{c}{(B)}  \\
  \hline
 & RMSE & TPR & FPR & FDR & RMSE & TPR & FPR & FDR \\ 
 \hline
 BR1 & 0.041 & 0.916 & 0.904 & 0.021 & 0.098 & 1.000 & 0.990 & 0.000 \\ 
  BR2 & 0.037 & 0.824 & 0.498 & 0.043 & 0.118 & 0.990 & 0.443 & 0.005 \\ 
  BR3 & 0.036 & 0.778 & 0.311 & 0.054 & 0.134 & 0.966 & 0.282 & 0.016 \\ 
  BR4 & 0.035 & 0.740 & 0.218 & 0.064 & 0.146 & 0.924 & 0.205 & 0.035 \\ 
  BR5 & 0.035 & 0.708 & 0.160 & 0.072 & 0.155 & 0.884 & 0.155 & 0.054 \\ 
  BG ($\alpha=0.01$) & 0.047 & 0.357 & 0.241 & 0.158 & 0.156 & 0.823 & 0.332 & 0.082 \\ 
  BG ($\alpha=0.05$) & 0.049 & 0.365 & 0.256 & 0.156 & 0.152 & 0.833 & 0.336 & 0.078 \\ 
 \midrule
   \multicolumn{9}{c}{Data-generating process (c) $\eta=20$} \\
  \hline
  &\multicolumn{4}{c|}{(A)} &  \multicolumn{4}{c}{(B)}  \\
  \hline
 & RMSE & TPR & FPR & FDR & RMSE & TPR & FPR & FDR \\ 
 \hline
 BR1 & 0.041 & 0.914 & 0.904 & 0.021 & 0.095 & 1.000 & 0.986 & 0.000 \\ 
  BR2 & 0.037 & 0.822 & 0.494 & 0.044 & 0.116 & 0.989 & 0.438 & 0.005 \\ 
  BR3 & 0.036 & 0.775 & 0.310 & 0.055 & 0.133 & 0.966 & 0.277 & 0.016 \\ 
  BR4 & 0.035 & 0.739 & 0.215 & 0.064 & 0.145 & 0.927 & 0.200 & 0.034 \\ 
  BR5 & 0.035 & 0.708 & 0.154 & 0.072 & 0.155 & 0.883 & 0.151 & 0.054 \\ 
  BG ($\alpha=0.01$) & 0.048 & 0.357 & 0.239 & 0.158 & 0.159 & 0.822 & 0.325 & 0.083 \\ 
  BG ($\alpha=0.05$) & 0.050 & 0.363 & 0.251 & 0.156 & 0.156 & 0.826 & 0.332 & 0.081 \\ 
   \bottomrule
\end{tabular}
\label{Remove-outlier}
\end{table}

\section{Additional simulation studies}

To see the effects of graph structure, we conduct additional simulation studies. In this section, the dimension of the response variables or the precision matrix $p$ is changed in the new two types of graph structure as follows: (C) the small-world graph and (D) the scale-free graph, which are randomly generated by the {\tt `sample\_smallworld'} function and the {\tt `sample\_pa'} function in the {\tt igraph} package of R software, respectively. To generate the precision matrix $\mathbf{\Omega}$ from the graph structure with an adjacency matrix $\mathbf{A}=(A_{ij})\in\mathbb{R}^{p\times p}$, we use the same method as in \cite{tan2014learning}. First, we create a matrix $\mathbf{E}=(E_{ij})\in\mathbb{R}^{p\times p}$ given by
\begin{align*}
    E_{ij}=U\left([-0.75, -0.25]\cup [0.25, 0.75]\right)1_{\{A_{ij}=1\}},
\end{align*}
where $U(D)$ is a random sample from the uniform distribution $D$.
Next, we calculate $\Tilde{\mathbf{E}}=(\mathbf{E}+\mathbf{E}^{\top})/2$ and $\mathbf{\Omega}=\Tilde{\mathbf{E}}+(0.1-\Lambda_{\mathrm{min}})\mathbf{I}$, where $\Lambda_{\mathrm{min}}$ is the smallest eigenvalue of $\Tilde{\mathbf{E}}$.
We assume $p=20$ for each structure in the additional simulation study. The visualization of graphs (C) and (D) is shown in Figure~\ref{graph-C-D}. 
The data-generating process is (a) and (b), which is the same setting as in the main manuscript. The results are shown in Tables \ref{sim-TPR-FPR-FDR} and \ref{Table-AL-CP}.
It is observed that the proposed method works well under not only (A) and (B) in the main manuscript but also (C) and (D).

\begin{figure}[htbp]
\begin{center}
\includegraphics[width=\linewidth]{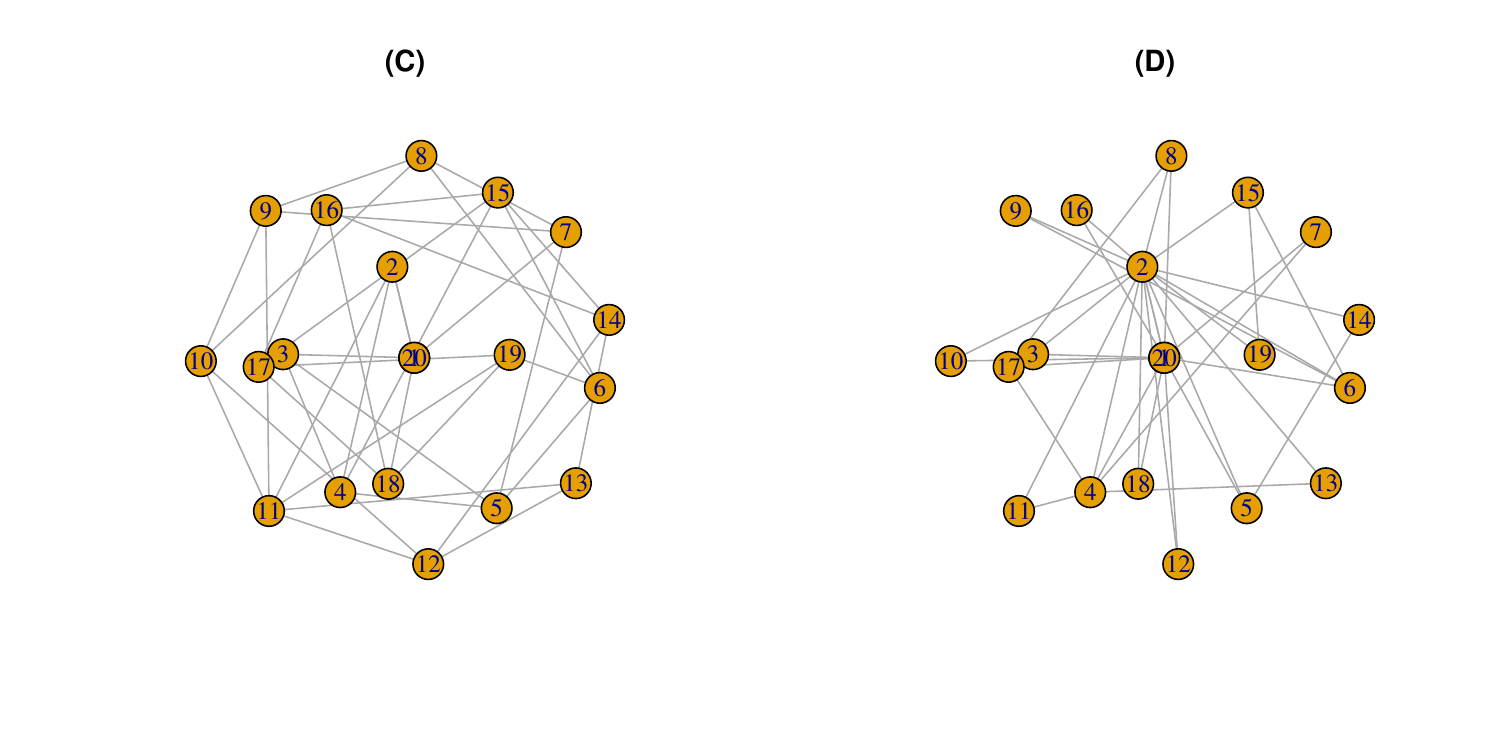}
\caption{The sampled graph structures (C) and (D).}
\label{graph-C-D}
\end{center}
\end{figure}

\begin{table}[htbp]

\caption{The mean of TPR, FPR, and FDR based on 100 repetitions for (C) and (D) with $p=20$. }
\begin{center}
\resizebox{1.0\textwidth}{!}{ 
\begin{tabular}{cc|ccccc|ccccc|ccccc|c}
  \toprule
  \multicolumn{17}{c}{Data-generating process (a)} \\
  \hline
  && BR1 & BR2 & BR3 & BR4 & BR5 & FR1 & FR2 & FR3 & FR4 & FR5 & FG1 & FG2 & FG3 & FG4 & FG5 & BG \\
  \hline
\multirow{3}{*}{(C)}& TPR & 0.90 & 0.75 & 0.7 & 0.67 & 0.65 & 0.83 & 0.75 & 0.70 & 0.67 & 0.64 & 0.85 & 0.77 & 0.72 & 0.69 & 0.66 & 0.35 \\ 
  & FPR & 0.72 & 0.25 & 0.15 & 0.11 & 0.09 & 0.46 & 0.27 & 0.20 & 0.16 & 0.14 & 0.50 & 0.30 & 0.22 & 0.18 & 0.16 & 0.42 \\ 
   & FDR & 0.03 & 0.07 & 0.08 & 0.09 & 0.09 & 0.04 & 0.07 & 0.08 & 0.09 & 0.10 & 0.04 & 0.06 & 0.07 & 0.08 & 0.09 & 0.17 \\ 
  \multirow{3}{*}{(D)} & TPR & 0.86 & 0.74 & 0.69 & 0.66 & 0.64 & 0.8 & 0.75 & 0.72 & 0.68 & 0.62 & 0.81 & 0.76 & 0.74 & 0.71 & 0.66 & 0.37 \\ 
   & FPR & 0.47 & 0.15 & 0.07 & 0.04 & 0.03 & 0.34 & 0.14 & 0.07 & 0.05 & 0.03 & 0.39 & 0.17 & 0.09 & 0.06 & 0.04 & 0.44 \\ 
   & FDR & 0.04 & 0.06 & 0.07 & 0.08 & 0.09 & 0.05 & 0.06 & 0.07 & 0.08 & 0.09 & 0.04 & 0.06 & 0.06 & 0.07 & 0.08 & 0.15 \\ 
\midrule
  \multicolumn{17}{c}{Data-generating process (b)} \\
  \hline
  && BR1 & BR2 & BR3 & BR4 & BR5 & FR1 & FR2 & FR3 & FR4 & FR5 & FG1 & FG2 & FG3 & FG4 & FG5 & BG\\
  \hline
\multirow{3}{*}{(C)}& TPR & 0.93 & 0.76 & 0.71 & 0.67 & 0.65 & 0.84 & 0.75 & 0.70 & 0.67 & 0.64 & 0.95 & 0.90 & 0.86 & 0.82 & 0.78 & 0.36 \\ 
   & FPR & 0.81 & 0.28 & 0.17 & 0.12 & 0.09 & 0.48 & 0.28 & 0.21 & 0.17 & 0.15 & 0.92 & 0.86 & 0.81 & 0.76 & 0.70 & 0.40 \\ 
   & FDR & 0.02 & 0.06 & 0.08 & 0.09 & 0.09 & 0.04 & 0.07 & 0.08 & 0.09 & 0.10 & 0.01 & 0.03 & 0.04 & 0.05 & 0.06 & 0.17 \\ 
  \multirow{3}{*}{(D)} & TPR & 0.87 & 0.73 & 0.70 & 0.66 & 0.63 & 0.80 & 0.75 & 0.72 & 0.67 & 0.62 & 0.94 & 0.88 & 0.83 & 0.79 & 0.75 & 0.39 \\ 
  & FPR & 0.52 & 0.17 & 0.08 & 0.04 & 0.03 & 0.37 & 0.15 & 0.08 & 0.05 & 0.04 & 0.91 & 0.85 & 0.79 & 0.74 & 0.69 & 0.43 \\ 
   & FDR & 0.03 & 0.06 & 0.07 & 0.08 & 0.09 & 0.05 & 0.06 & 0.07 & 0.08 & 0.09 & 0.01 & 0.03 & 0.04 & 0.05 & 0.06 & 0.15 \\
\bottomrule
\end{tabular}
}
\end{center}

\label{sim-TPR-FPR-FDR}
\end{table}

\begin{table}[htbp]
\caption{The mean of AL and CP based on 100 repetitions for (C) and (D) with $p=20$. } 
\begin{center}
\begin{tabular}{cc|ccccccc}
  \toprule
  \multicolumn{9}{c}{Data-generating process (a)} \\
  \hline
  &&BR1 & BR2 & BR3 & BR4 & BR5 &BT&BG\\
  \hline
\multirow{2}{*}{(C)} & AL & 0.383 & 0.322 & 0.285 & 0.259 & 0.238 & 0.317 & 0.318 \\ 
   & CP & 0.911 & 0.855 & 0.802 & 0.752 & 0.703 & 0.925 & 0.975 \\ 
  \multirow{2}{*}{(D)} & AL & 0.439 & 0.359 & 0.311 & 0.278 & 0.251 & 0.363 & 0.392 \\ 
   & CP & 0.884 & 0.806 & 0.732 & 0.659 & 0.593 & 0.932 & 0.977 \\ 
\midrule
\multicolumn{9}{c}{Data-generating process (b)} \\
   \hline
  &&BR1 & BR2 & BR3 & BR4 & BR5 &BT&BG\\
  \hline
\multirow{2}{*}{(C)} & AL & 0.421 & 0.351 & 0.310 & 0.28 & 0.258 & 0.204 & 0.104 \\ 
   & CP & 0.913 & 0.861 & 0.809 & 0.759 & 0.713 & 0.550 & 0.530 \\ 
  \multirow{2}{*}{(D)} & AL & 0.484 & 0.394 & 0.339 & 0.303 & 0.274 & 0.207 & 0.126 \\ 
   & CP & 0.887 & 0.812 & 0.739 & 0.672 & 0.610 & 0.536 & 0.521 \\
\bottomrule
\end{tabular}
\end{center}

\label{Table-AL-CP}
\end{table}

\section{Scalability of the proposed algorithm}

We investigate the scalability of the proposed algorithm for dimension $p$. We used the matrix (C) and (D) for $p=12, 20, 50$ and calculated the average run-time over 20 repetitions. The number of approximate posterior samples via the weighted Bayesian bootstrap method is 1000 by following \cite{newton2021weighted}.
Note that it is impossible to see the impact of dimension $p$ only because the matrix is not common for $p$.
The result is summarized in Table~\ref{table-run-time}. From Table~\ref{table-run-time}, the algorithm is computationally expensive as the dimension $p$ is large. Although the algorithm is more efficient because of the following two reasons: 1) it is not necessary to consider the autocorrelation of MCMC chains; 2) there is no accept/reject step such as the Metropolis algorithm, it is an important issue to improve the algorithm. In addition, as well as the MCMC algorithm, the number of posterior samples may need to be large to approximate the posterior if $p$ is large. Note that the computation cost depends on the optimization time and parallel computation is also possible.

\begin{table}[htbp]
\caption{The mean of run-time based on 20. } 
\begin{center}
\begin{tabular}{cc|ccc}
\toprule
&\multicolumn{4}{c}{Time (second)} \\
\midrule
 && $p=12$ & $p=20$ & $p=50$ \\ 
\midrule
\multirow{2}{*}{(a)}&(C) & 8 & 14 & 228 \\ 
  & (D) & 9 & 14 & 222 \\ 
  \midrule
\multirow{2}{*}{(b)}& (C) & 10 & 14 & 234 \\ 
  & (D) & 10 & 14 & 224 \\ 
\bottomrule
\end{tabular}
\end{center} 
\label{table-run-time}
\end{table}

\section{Additional information of real data analysis}

The real data analysis is given in Section 5 of the main manuscript. In this section, we provide additional information for the analysis.
First, we focus on the effect of the selection of $\lambda$ in real data analysis. Following \cite{hirose2017robust}, we selected $\lambda$ as the number of edges is 9 in the main manuscript. However, the solution path changes slightly for the selection of $\lambda$. The results with a different value of $\lambda$ are shown in Figure~\ref{example1-solutionpath-BR}. For example, if we use $\lambda=0.068$ instead of $\lambda=0.078$ under the data without outliers (right panels), then the path between GAL4 and GAL2 is drawn. Moreover, employing $\lambda=0.067$ instead of $\lambda=0.068$, it is observed that the path between GAL11 and GAL3 is drawn additionally.

Next, we see the result of the BG in the real data example. The BG method adaptively estimates the number of edges because $\lambda$ is sampled from the Gibbs sampler. Therefore, the number of edges for each data is different. In addition, the result using Hotelling's method differs from that without detected outliers, and then Hotelling's method does not work well in this example. The numbers of edges are 7, 5, and 9 from left to right.

\begin{figure}[htbp]
\begin{center}
\includegraphics[width=\linewidth]{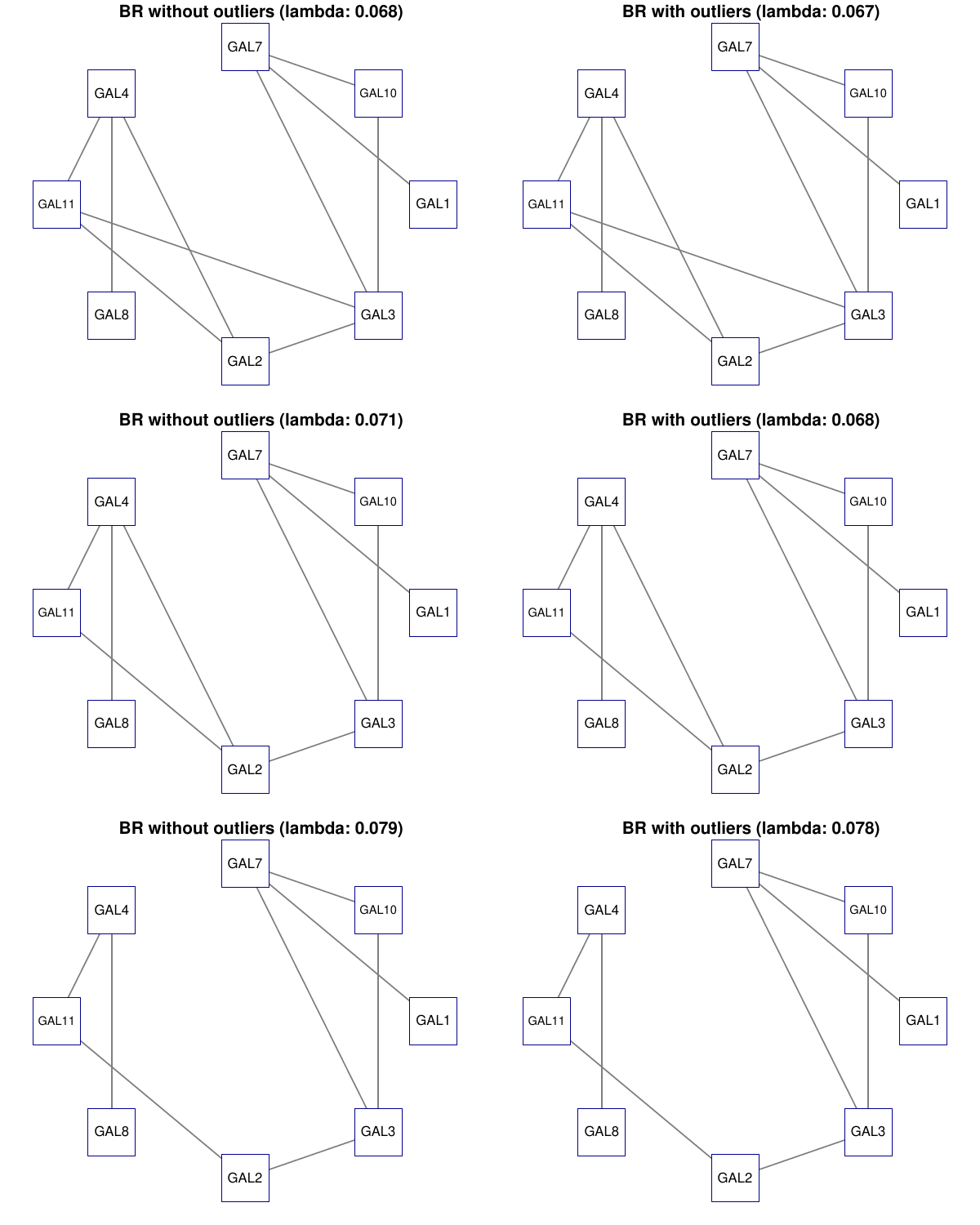}
\caption{Estimated graphical models via the BR. The number of edges is 10, 9, and 8 from top to bottom. The left panels are based on the data without 13 outliers, and the right panels are based on all data with outliers.}
\label{example1-solutionpath-BR}
\end{center}
\end{figure}

\begin{figure}[htbp]
\begin{center}
\includegraphics[width=\linewidth]{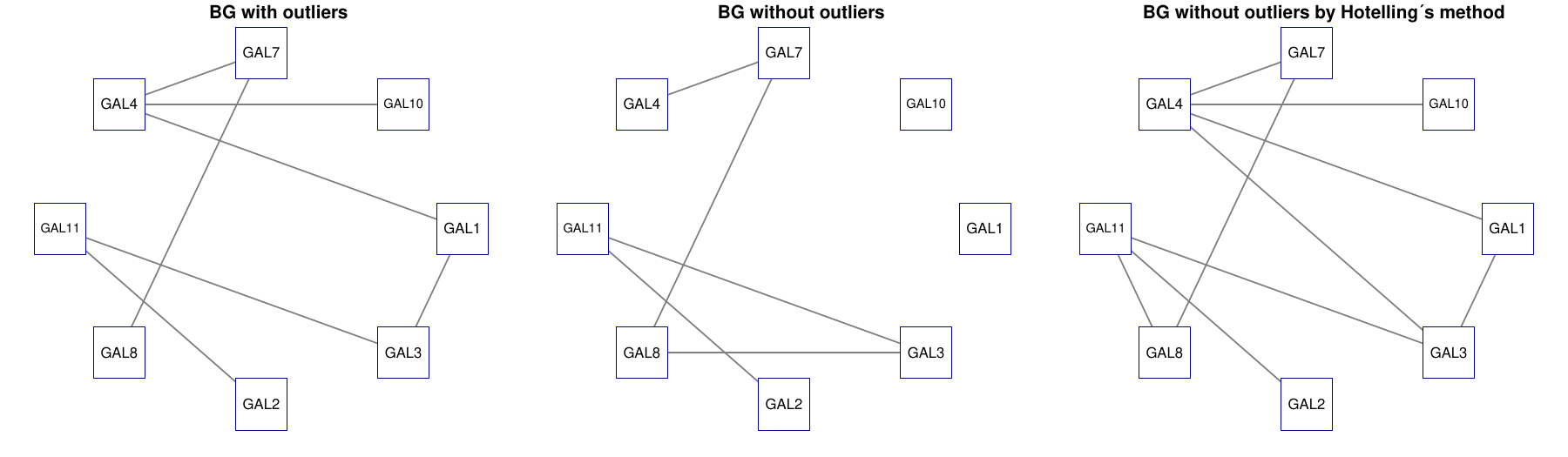}
\caption{Estimated graphical models via the BG. The right panel is based on the data removed outliers by Hotelling's method, the center panel is based on the data without 13 outliers, and the left panel is based on all data with outliers.}
\label{example1-solutionpath-BR}
\end{center}
\end{figure}

\end{document}